\documentclass[11pt]{article}


\usepackage{booktabs} 

\usepackage[normalem]{ulem}
\useunder{\uline}{\ul}{}

\usepackage{geometry}
\usepackage{amssymb}
\usepackage{url}
\usepackage{amsmath}
\usepackage{amsthm}
\usepackage{xcolor}

\usepackage[square]{natbib}

\usepackage{enumerate}
\usepackage[small,it]{caption}
\captionsetup[table]{skip=10pt}

\newcommand {\ignore} [1] {}

\usepackage{comment}
\usepackage{float}
\usepackage[colorlinks=true]{hyperref}
\hypersetup{
    linkcolor=[rgb]{0.4,0,0.6},
    citecolor=[rgb]{0, 0.4, 0},
    urlcolor=[rgb]{0.6, 0, 0}
}

\usepackage[ruled]{algorithm2e}

\SetAlFnt{\small}
\SetAlCapFnt{\small}
\SetAlCapNameFnt{\small}
\SetAlCapHSkip{0pt}
\IncMargin{-\parindent}

\newtheorem{theorem}{Theorem}[section]

\newtheorem{corollary}[theorem]{Corollary}
\newtheorem{lemma}[theorem]{Lemma}
\newtheorem{proposition}[theorem]{Proposition}

\newtheorem{definition}[theorem]{Definition}


\def\squarebox#1{\hbox to #1{\hfill\vbox to #1{\vfill}}}




\newcommand{\cardinality}[1]{\left\vert{#1}\right\vert}





\newcommand{\reals}{\mathbb{R}}






\newcommand{\Endow}{\mathcal{E}}
\newcommand{\lotsOfSpace}{\;\;\;\;\;\;\;\;}

\newcommand{\of}[1]{\textcolor{magenta}{#1}}

\newcommand{\absloss}{\textsf{Absolute Loss}}
\newcommand{\identity}{\textsf{Identity}}
\newcommand{\som}{\textsf{Sum-of-Marginals}}
\newcommand{\aon}{\textsf{All-or-Nothing}}

\newcommand{\babaioff}{Babaioff {\em et al.}}


\DeclareMathOperator*{\argmax}{arg\,max}

\textheight = 680pt
\voffset = -20pt
\hoffset=-0.9cm
\textwidth = 480pt
\headsep 0.2in
\setlength{\footskip}{17pt}

\definecolor{MyGray}{rgb}{0.8,0.8,0.8}

\usepackage{enumitem}

\begin{document}

\title{A General Framework for Endowment Effects in Combinatorial Markets}
\author{
	Tomer Ezra\thanks{
	Computer Science, Tel-Aviv University. \texttt{tomer.ezra@gmail.com}.}
	\and
	Michal Feldman\thanks{Computer Science, Tel-Aviv University, and Microsoft Research. \texttt{michal.feldman@cs.tau.ac.il}.
		This work was partially supported by the European Research Council under the European Union's Seventh Framework Programme (FP7/2007-2013) / ERC grant agreement number 337122, and by the Israel Science Foundation (grant number 317/17).
	}
	\and
	Ophir Friedler\thanks{Computer Science, Tel-Aviv University. \texttt{ophirfriedler@gmail.com}.}
	}

\maketitle

\begin{abstract}
	\begin{flushright}
		{\em
			``Losses loom larger than gains'' --- Daniel Kahneman; Amos Tversky}
	\end{flushright}
	\vspace{0.2in}
	The {\em endowment effect}, coined by Nobel Laureate Richard Thaler, posits that people tend to inflate the value of items they own.
This bias has been traditionally studied mainly using experimental methodology. Recently, Babaioff {\em et al.} proposed a specific formulation of the endowment effect in combinatorial markets, and showed that the existence of Walrasian equilibrium with respect to the {\em endowed valuations} extends from gross substitutes to submodular valuations, but provably fails to extend to XOS valuations.

We propose to harness the endowment effect further. To this end, we introduce a principle-based framework that captures a wide range of different formulations of the endowment effect (including that of \babaioff). We equip our framework with a partial order over the different formulations, which (partially) ranks them from weak to strong, and provide algorithms for computing endowment equilibria with high welfare for sufficiently strong endowment effects, as well as non-existence results for weaker ones.

Our main results are the following:

\begin{itemize}[leftmargin=.2in]

\item For markets with XOS valuations, we provide an algorithm that, for any sufficiently strong endowment effect, given an arbitrary initial allocation $S$, returns an endowment equilibrium with at least as much welfare as in $S$. In particular, the socially optimal allocation can be supported in an endowment equilibrium; moreover, every such endowment equilibrium gives at least half of the optimal social welfare. Evidently, the negative result of \babaioff\ for XOS markets is an artifact of their specific formulation.
\item For markets with arbitrary valuations, we show that {\em bundling} leads to a sweeping positive result. In particular, if items can be prepacked into indivisible bundles, we provide an algorithm that, for a wide range of endowment effects, given an arbitrary initial allocation $S$, computes an endowment equilibrium with at least as much welfare as in $S$. The algorithm runs in poly time with poly many value (resp., demand) queries for submodular (resp., general) valuations. This result is essentially a black-box reduction from the computation of an approximately-optimal endowment equilibrium with bundling to the algorithmic problem of welfare approximation.
\end{itemize}

\end{abstract}

\thispagestyle{empty}\maketitle\setcounter{page}{0}

\newpage
\addtocounter{page}{0}
\newpage

\section{Introduction}
\label{sec:intro}
Consider the following combinatorial market problem:
A seller wishes to sell a set $M$ of $m$ items to $n$ consumers.
Each consumer $i$ has a valuation function $v_i:2^M \rightarrow \reals^+$ that assigns a
non-negative value $v_i(X)$ to every subset of items $X \subseteq M$.
The valuation functions can exhibit various combinations of substitutability and complementarity over items. As standard,
valuations are assumed to be monotone ($v_i(Z) \leq v_i(X)$ for any $Z \subseteq X$) and normalized ($v_i(\emptyset) = 0$).
Each consumer $i$ has a {\em quasi-linear} utility function,
meaning that the consumer’s utility for a bundle $X \subseteq M$ costing $p(X)$ is $u_i(X,p) = v_i(X) - p(X)$.
An allocation is a vector $S=(S_1,\ldots,S_n)$ of disjoint bundles of items,
where $S_i$ is the bundle allocated to consumer $i$.
The social welfare of an allocation $S$ is the sum of consumers' values for their bundles,
i.e., $SW(S)=\sum_{i\in [n]}v_i(S_i)$.
An allocation that maximizes the social welfare is said to be socially efficient, or optimal.

A classic market design problem is setting prices so that socially efficient outcomes arise in ``equilibrium".
Arguably, the most appealing equilibrium notion is that of a Walrasian Equilibrium (WE) \citep{walras1874}.
A WE is a pair of allocation
$S = (S_1, \ldots S_n)$ and item prices $p = (p_1, \ldots, p_m)$, where each consumer maximizes his or her utility, i.e.,
$$
v_i(S_i) - p(S_i) \geq v_i(T) - p(T)
$$
for all $T \subseteq [m]$, and the market clears. Namely, all items are allocated.\footnote{More precisely, unallocated items have a price of $0$.}
Here and throughout, for a set of items $X$, $p(X)$ denotes the sum of prices of items in $X$.
A WE is a desired outcome, as it is a simple and transparent pricing that clears the market.
Moreover, according to the ``First Welfare Theorem", every allocation that is part of a WE maximizes the social welfare\footnote{Moreover, every allocation that is part of a WE also maximizes welfare over all feasible {\em fractional} allocations \citep{nisan2006communication}.}.

Unfortunately, Walrasian equilibria exist only rarely. In particular, they are guaranteed to exist for the class of ``gross substitutes'' valuations \citep{kelso1982job}, which is a strict subclass of submodular valuations, and which, in some formal sense, is a maximal class for which a WE is guaranteed to exist \citep{gul1999walrasian}. Given the appealing properties of a WE, it is not surprising that a variety of approaches and relaxations have been considered in the literature in an attempt to address the non-existence problem.

\paragraph{{\bf The endowment effect.}}
The {\em endowment effect}, coined by
\citet{thaler1980toward}, posits that consumers tend to inflate the value of the items they own.
In an influential experiment conducted by \citet{kahneman1990experimental}, a group of students was divided randomly into two groups. Students in the first group received mugs (worth of $\$6$) for free. These students were then asked for their selling price, while students in the second group were asked for their buying price. As it turned out, the (median) selling price was significantly higher than the (median) buying price.
This phenomenon was later validated by additional experiments, which realized and quantified the magnitude of the effect \citep{knetsch1989endowment,kahneman1990experimental,list2011does,list2003does}.
Today, it is widely accepted that the endowment effect is evident in many markets.

Until recently, the endowment effect has been studied mainly via experiments.
Recently, however, Babaioff, Dobzinski and Oren [\citeyear{babaioff2018combinatorial}] (henceforth, \babaioff) introduced a mathematical formulation of the endowment effect in combinatorial settings, and showed that this effect can be harnessed to extend market stability and efficiency beyond gross substitutes valuations, specifically, to submodular valuations. Yet, \babaioff\ also presented a clear limitation of their formulation, namely that equilibrium existence cannot be extended to the richer class of XOS valuations. In the present work, we introduce a new framework that provides a more flexible formulation of the endowment effect, thus enabling us to generalize and extend \babaioff 's work to richer settings.

\paragraph{\babaioff's formulation.}

\babaioff\ propose capturing the endowment effect in combinatorial settings
by formulating an {\em endowed valuation} function.
Given some valuation function $v$, and an endowed set $X \subseteq M$, the endowed valuation function, parameterized by $\alpha$, assigns the following real value to every set $Y \subseteq M$:
\begin{equation}
\label{eq:endowed-babaioff}
v^{X}(Y) = \alpha \cdot v(X \cap Y) + v(Y \setminus X  \mid  X \cap Y),
\end{equation}
where $\alpha \geq 1$ is the {\em endowment effect parameter}, and $v(S \mid T) = v(S \cup T) - v(T)$ denotes the marginal contribution of $S$ given $T$ for any two sets $S,T$.
The value $v^X(Y)$ is referred to as the {\em endowed valuation} of $Y$ with respect to endowed set $X$.
In this formulation the value of items already owned by the agent ($X \cap Y$) is multiplied by some factor $\alpha$, while the marginal value of the other items ($Y \setminus X$) remains intact.

An {\em endowment equilibrium} is then a Walrasian equilibrium with respect to the endowed valuations, i.e., a pair of allocation
$S = (S_1, \ldots S_n)$ and item prices $p = (p_1, \ldots, p_m)$, where all items are sold, and each consumer maximizes his or her endowed utility; i.e.,
$$
v_i^{S_i}(S_i) - p(S_i) \geq v_i^{S_i}(T) - p(T).
$$

The main result of \babaioff\ is that when consumers' valuations are submodular\footnote{A valuation $v(\cdot)$ is submodular if for any two sets $S, T \subseteq M$, $v(S)+v(T) \geq v(S\cup T) + v(S \cap T)$.}
 and $\alpha \geq 2$,
there always exists an endowment equilibrium. Moreover, this equilibrium  gives at least half of the optimal welfare with respect to the original valuations. Thus, the endowment effect can be harnessed to extend equilibrium existence and market efficiency beyond gross substitutes valuations, to the class of submodular valuations.
\babaioff\ also show that the existence result does not extend to the more general class of XOS valuations.
In particular, for every $\alpha>1$, there exists an instance with XOS valuations\footnote{A valuation $v(\cdot)$ is XOS if there exist $m$-dimensional vectors $\{a_1, \ldots, a_k\}$, so that $v(S) = \max_{t\in [k]}\sum_{j\in S} a_t(j)$.} that does not admit an endowment equilibrium.

This negative result may lead one to conclude that while the endowment effect improves stability for submodular valuations, XOS markets may remain unstable even with respect to endowed valuations. However, the specific function given in Equation~(\ref{eq:endowed-babaioff}) is one way to formulate the endowment effect in combinatorial settings, but is certainly not the only one.
The question that drives us in this work is whether a more flexible formulation of the endowment effect can circumvent this impossibility result. We answer this question in the affirmative and provide additional far-reaching results on the implications of the endowment effect on market stability and efficiency.


\subsection{A New Framework for the Endowment Effect}
\label{sec:contribution}

We provide a new framework that enables various formulations of the endowment effect based on fundamental behavioral economic principles.
Our framework allows reasoning about different ways of defining the value of a set $Z$ that is a subset of an endowed set $X$, subject to the endowment effect.
We hope that our work will inspire further discussion regarding meaningful endowment effects in combinatorial settings, as well as experimental work that will shed more light on appropriate formulations for different scenarios.

A fundamental component of our framework is a partial order $\prec$ over endowment effects, which compares endowment effects based on the strength of their loss aversion level, and (partially) ranks them from weak to strong (formal definition is given in Definition~\ref{def:dominanceRelationS}). This partial order is {\em stability preserving}; i.e., given two endowment effects, $\Endow,\Endow'$, such that $\Endow \prec \Endow'$, a Walrasian equilibrium with respect to the endowed valuations according to $\Endow$ is also a Walrasian equilibrium with respect to the endowed valuations according to $\Endow'$ (Theorem~\ref{thm:strengthKeepEndEq}).

As in previous work, we take a ``two-step" modeling approach; i.e., a consumer has a valuation function $v$ prior to being endowed with a set $X$, and an {\em endowed valuation} function
	$v^X$ after being endowed with a set $X$, which describes the inflation in value of different sets due to the endowment effect.
Our framework is based on two basic principles, set forth below.
\paragraph{The ``loss aversion" principle.}
The loss aversion hypothesis is presented as part of prospect theory and is argued to be the source of the
endowment effect~\citep{kahneman1990experimental,kahneman1991anomalies,amos1979prospect}.
This hypothesis claims that
\begin{center}
{\em
	people tend to prefer avoiding losses to acquiring equivalent gains.
}	
\end{center}
The loss aversion principle can be formulated as follows:
\begin{align} \label{eq:lossBeatsGain}
	v^{X \cup Y}(X \cup Y) - v^{X \cup Y}(Y) \geq v^{Y}(X \cup Y) - v^Y(Y) \lotsOfSpace \forall X, Y \subseteq M.
\end{align}
The left hand side term signifies the loss incurred due to losing a previously endowed set $X$, while the right hand side term signifies the benefit derived from being awarded a set $X$ that was not previously owned.
The loss aversion inequality states that the loss incurred due to losing $X$ is greater than the benefit derived from receiving $X$.

\paragraph{The ``seperability" principle.}
This second principle, proposed by \babaioff, states that
the endowment effect with respect to set $X$ should maintain the marginal contribution of items outside of $X$ intact.
That is, given set $Y \subseteq M$, only the value of items in $X \cap Y$ may be subject to the endowment effect.
This principle is formulated as follows:
\begin{align} \label{eq:externalMarginalSame}
v^X(Y \setminus X \mid  X \cap Y) = v(Y \setminus X  \mid  X \cap Y) \lotsOfSpace \forall Y \subseteq M.
\end{align}

In section~\ref{sec:endEffectFrame} we show that these two principles imply that
the value of set $Y$ for a consumer that is endowed a set $X$ is given by:
\begin{align*}
	v^X(Y) = v(Y) + g^X(X \cap Y) \lotsOfSpace 	\forall Y \subseteq M,	
\end{align*}
for some function $g^X : 2^X \rightarrow \reals$, such that $g^X(Z)\leq g^X(X)$ applies for all $Z\subseteq X$.
The function $g^X$ is referred to as the {\em gain function} with respect to $X$.
It describes the added effect an endowed set $X$ has on the consumer's valuation.

An endowment effect formulation, or, in short, an endowment effect, is then given by a collection of functions
$\{ g^X\}_{X \subseteq [m]}$ that satisfy the above condition.
The endowment effect of consumer $i$ is denoted by $\Endow_i$.
An endowment {\em environment} is given by a vector of endowment effects for the consumers $\Endow = (\Endow_1, \ldots, \Endow_n)$.


\paragraph{The \identity\  and \absloss\  endowment effects.}
Recall that in \babaioff 's formulation, for the case of $\alpha=2$ (which is the case that drives their positive results), the endowed valuation with respect to $X$ is:
	$$
	v^X(Y) = 2\cdot v(X\cap Y) + v(Y \setminus X \mid X \cap Y) = v(X \cap Y) + v(Y).
	$$	
In the terminology of our framework, the gain function is defined by $g^X(X \cap Y) = v(X\cap Y)$. Thus, we refer to this endowment effect as the \identity\ endowment effect (or in short, \identity), and denote it by $\Endow^{I} = \{g_I^X\}_X$, where $g^{X}_I = v$.

We are now ready to introduce a new endowment effect, that we refer to as the \absloss\ endowment effect.
In this effect, the gain function with respect to an endowed set $X$ is
$$
g_{AL}^X(Z) = v(X) - v(X \setminus Z),
$$
and we denote it by $\Endow^{AL} = \{g^X_{AL}\}_X$.
For subadditive consumers, this effect demonstrates a ``stronger'' loss aversion bias than \identity\ (in the sense defined in Definition~\ref{def:dominanceRelationS}).
Intuitively, it can be imagined that in the \absloss\ effect, a consumer amplifies the loss of a subset $Z$ of an endowed set $X$ by ``forgetting'' the fact that $X \setminus Z$ remains in the consumer's hands (see Proposition~\ref{prop:ALdominatesI}).

\subsection{Existence of Equilibria and Welfare Approximation}
In this section we present our existence and approximation results.
Our approximation results hold with respect to the optimal welfare according to the {\em original} valuations, and even with respect to the optimal fractional allocation\footnote{The optimal fractional allocation refers to the optimal solution of the {\em configuration LP}, see details in Section~\ref{sec:endEfficiency}.}.

\babaioff\ prove that for the \identity\ endowment effect,
every market with submodular consumers admits an $\Endow^{I}$-endowment equilibrium that gives a $2$-approximation welfare guarantee.
For the larger class of XOS consumers, \babaioff\ show that an endowment equilibrium may not exist even with respect to an endowment effect $\alpha \cdot \Endow^{I} = \{\alpha \cdot g : g \in \Endow^{I} \}$ for an arbitrarily large $\alpha$.
We show that this negative result is an artifact of the specific formulation chosen by the authors.
As established in the following theorem, the stronger \absloss\ endowment effect leads to existence and approximation results for markets with XOS valuations.\footnote{
Note that ``stronger'' here is not in the sense of an increased value of $\alpha$. Indeed, no finite $\alpha$ suffices for such result.}

\vspace{0.1in}
\noindent
{\bf Theorem 1.} [Theorem.~\ref{thm:existenceXOSabsolute}]
There exists an algorithm that, for every market with XOS consumers and every initial allocation $S = (S_1, \ldots, S_n)$,
returns an $\Endow^{AL}$-endowment equilibrium $(S', p)$, such that $SW(S') \geq SW(S)$.
\vspace{0.1in}

This algorithm is inspired by the ``ascending price" algorithm used by \cite{fu2012conditional,christodoulou2016bayesian,dobzinski2005approximation}.
A direct corollary of Theorem~1 is that for every market with XOS consumers, every optimal allocation $S$ can be paired with item prices $p$, so that $(S, p)$ is an $\Endow^{AL}$-endowment equilibrium.
Notably, due to the stability preserving property of our partial order, this result holds not only with respect to \absloss, but with respect to any endowment effect that is stronger than \absloss.
Moreover, we provide the following approximation guarantee for the \absloss\ endowment effect. 

\vspace{0.1in}
\noindent
{\bf Theorem 2.}  [Direct corollary of Theorem~\ref{thm:twoApx}]:
		Every $\Endow^{AL}$-endowment equilibrium guarantees at least half of the optimal welfare.		
\vspace{0.1in}

These theorems show that stronger endowment effects enable the extension of equilibrium existence and approximation from submodular to XOS valuations. Can this result be extended further?

One answer, although unsatisfactory, is yes! For example, consider an endowment effect that inflates the value of a set linearly with its size; e.g., $\Endow^{PROP} = \{ g^X(Z) = \cardinality{Z}  \cdot v(X)  : X \subseteq M \}$.
We show in Section~\ref{sec:subadditive} that this effect leads to a sweeping equilibrium existence guarantee for arbitrary valuations.
Moreover, every optimal allocation can be paired with item prices to form an $\Endow^{PROP}$-endowment equilibrium (Proposition~\ref{prop:EEalwaysopt}). While this may sound as an appealing result, this effect may inflate the value by a factor $\Omega(m)$. We believe that such inflation is unreasonably high and completely misconstrues the endowment effect.
%
In Section~\ref{sec:subadditive} we show that for any endowment effect with inflation up to $O(\sqrt{m})$, an endowment equilibrium may not exist for the (strictly-larger-than XOS) class of subadditive valuations
\footnote{A valuation $v(\cdot)$ is subadditive if for any two sets $S, T \subseteq M$, $v(S)+v(T) \geq v(S\cup T)$. Every XOS valuation is subadditive.} 
(Proposition~\ref{prop:noEndowSubadditive}).
Notably, \identity\ and \absloss\ inflate the valuation by a factor of 2.

\paragraph{The \som\ endowment effect.}
Through the lens of our framework, we identify weaker-than-\identity\ endowment effects to which the techniques of \babaioff\ apply.
Specifically, we show that the results of \babaioff\ for submodular valuations under the \identity\ effect can be obtained under a {\em weaker} effect, that we term \som. For more details, see Section~\ref{sec:som}.

	
%
%
%

\subsection{The Power of Bundling}
We next study the power of bundling in settings with endowed valuations.
A bundling $B = \{B_1, \ldots, B_k\}$ is a partition of the set of items $M$ into $k$ disjoint bundles.
A {\em competitive bundling equilibrium} (CBE) \citep{dobzinski2015welfare} is a bundling $B$ and a Walrasian equilibrium (WE) in the market induced by $B$ (i.e., the market where $B_1, \ldots, B_k$ are the indivisible items).
It is easy to see that a CBE always exists. For example, bundle all items together and assign the entire bundle to the highest value consumer for a price of the second highest value.
However, while the WE notion is coupled with the first welfare theorem, which guarantees that every allocation supported in a WE gives optimal welfare, no such generic welfare guarantee applies with respect to CBE \citep{feldman2014clearing,feldman2016combinatorial,dobzinski2015welfare}.

In this paper we introduce the notion of $\Endow$-endowment CBE, which is a CBE with respect to the endowed valuations, and provide algorithms for computing $\Endow$-endowment CBEs with good welfare for any endowment effect $\Endow$ satisfying a mild assumption. This mild assumption is satisfied by many endowment effects, including \identity\ and \absloss.

\paragraph{{\bf Equilibrium computation.}}
\babaioff\ showed computational barriers regarding computing $\Endow^{I}$-endowment equilibria,
and raised the following question
(recall that $\alpha \cdot \Endow^{I}$ denotes the endowment effect that multiplies each gain function $g \in \Endow^{I}$ by $\alpha$):

\vspace{0.1in}
{\em
Are there allocations that can be both efficiently computed and paired with item prices that form an $\alpha \cdot \Endow^{I}$-endowment equilibrium for a small value of $\alpha$?
}
\vspace{0.1in}

The analogous question with respect to CBE and a particular endowment effect $\Endow$ would be: Are there allocations that can be both efficiently computed and paired with bundle prices that form an $\Endow$-endowment CBE? It doesn't take long to conclude that this problem is trivial for any endowment effect with non-negative gain functions. Simply, allocate all items to the consumer who has the highest value for the grand bundle. The interesting problem here would be to compute a nearly-efficient CBE, rather than just any CBE\footnote{This is consistent with the literature on CBE, which has focused on the existence and computation of nearly-efficient CBEs \cite{dobzinski2015welfare,feldman2014clearing}.}, and can be formulated as follows:

\vspace{0.1in}
{\em
Are there approximately optimal allocations that can be both efficiently computed and paired with bundle prices that form an $\Endow$-endowment CBE for some natural endowment effect $\Endow$?
}
\vspace{0.1in}

Note that for $\alpha \cdot \Endow^{I}$-endowment equilibrium, the existence and approximation problems coincide, as any $\alpha \cdot \Endow^{I}$-endowment equilibrium gives $\alpha$ approximation to the optimal social welfare.

We provide a black-box reduction from the problem of computing approximately optimal endowment CBE to the purely algorithmic problem of welfare approximation. This result applies to every {\em significant} endowment effect, where the gain functions satisfy $g^{X}(X) \geq v(X)$ for all $X \subseteq M$.
For example, one can easily verify that \identity\ and \absloss\ are significant with respect to all consumer valuations.
As standard in the literature, when dealing with combinatorial valuations, we consider oracle access to the valuations in the form of {\em value queries} (given a set $S$, returns $v(S)$) and {\em demad queries} (given a set of item prices, returns a favorite bundle under these prices). Precise definitions are given in Section~\ref{sec:prelims}.

\vspace{0.1in}
\noindent
{\bf Theorem  3 [Black-box reduction for endowment-CBE]}
\begin{enumerate}
	\item{}
	[Theorem \ref{thm:SM_end2NoGap}]
	\quad
	{\em
		There exists a polynomial algorithm that, for every market with \underline{submodular} valuations, every significant endowment effect $\Endow$ and every initial allocation $S = (S_1, \ldots, S_n)$,
		computes an $\Endow$-endowment CBE $(S', p)$, such that $SW(S') \geq SW(S)$.
		The algorithm runs in polynomial time using \underline{value} queries.
	}
	\item {} \label{main_CBE_thm_item}
	\noindent
	[Theorem \ref{thm:GV_end2NoGap}]
	\quad
	{\em
		There exists a polynomial algorithm that, for every market with \underline{general} valuations, every significant endowment effect $\Endow$ and every initial allocation $S = (S_1, \ldots, S_n)$,
		computes an $\Endow$-endowment CBE $(S', p)$, such that $SW(S') \geq SW(S)$.
		The algorithm runs in polynomial time using \underline{demand} queries.
	}
\end{enumerate}

The proof of item~\ref{main_CBE_thm_item} in the theorem above implies the following corollary:

\vspace{0.1in}
\noindent
{\bf Corollary.}
 [Corollary~\ref{corr:OPT_CBE}]
	For every market, and significant endowment effect $\Endow$, any optimal allocation $S$ can be paired with bundle prices $p$, so that $(S, p)$ is an $\Endow$-endowment equilibrium.

\vspace{0.1in}

We note that this result cannot be extended to all endowment effects within our framework.
In particular, for endowment effects such that for some $\beta < 1$ it holds that $g^{X}(X) \leq \beta \cdot v(X)$ for all $X \subseteq M$, there are instances that admit no endowment CBE with optimal welfare, already for XOS valuations (Proposition~\ref{prop:XOS_SA_noApx}). For this subclass of endowment effects, we provide approximation lower bounds as a function of the parameter $\beta$, for different classes of valuations (including XOS, subadditive, and arbitrary; see Section \ref{sec:bundles}).

\subsection{Our Techniques}

\noindent {\bf XOS markets.}
%
%
We provide an algorithm for computing an \absloss\ endowment equilibrium (Algorithm~\ref{alg:xos}, which is a variation of \citet{fu2012conditional}'s ``Flexible Ascent Auction").
The algorithm starts with an arbitrary initial allocation, and iteratively checks whether the current allocation, accompanied with specific item prices (see below), forms an \absloss\ endowment equilibrium.
If not, the algorithm computes an allocation with strictly higher social welfare.
To this end, we use the characterization of XOS valuations as valuations that admit {\em supporting prices} \citep{dobzinski2005approximation}.
Specifically, at each iteration, all prices are refreshed to be supporting prices with respect to the current allocation (for the respective consumers).
We show that for XOS valuations, supporting prices imply {\em inward stability} with respect to the endowed valuations.
Inward stability means that it is always weakly beneficial to keep one's own allocated items, thus if some consumer has a beneficial deviation, there also exists a beneficial deviation that only adds items to her current allocation. We show that such a deviation always strictly improves the social welfare.
Therefore, in each iteration, Algorithm~\ref{alg:xos} either strictly increases the social welfare, or terminates in an \absloss\ endowment equilibrium.

We complement the above result with a construction of a market with one subadditive consumer and one unit demand consumer,
for which {\em no} endowment effect which inflates the value of an endowed set by at most $O(\sqrt{m})$ can guarantee existence of endowment equilibrium.

\vspace{0.1in}
\noindent {\bf Approximation guarantees.}
Our approximation guarantees (Proposition~\ref{prop:EndowApxGuarantee}, Theorem~2) are obtained by analyzing the {\em configuration LP} --- the linear program for welfare maximization in combinatorial markets \citep{bikhchandani1997competitive} --- extending the technique of \babaioff\ for our generalized framework.
Therefore, all of our approximation guarantees hold with respect to the optimal fractional welfare.
En route we get that the $1/2$-approximation guarantee for conditional equilibrium holds also with respect to the optimal fractional social welfare.

\vspace{0.1in}
\noindent {\bf Bundling.} For settings with bundling (Theorem~3), we establish an iterative algorithm that, starting at an arbitrary allocation, re-allocates bundles and irrevocably merges bundles, such that the social welfare improves in every iteration. This algorithm is shown to terminate in polynomial time using a polynomial number of value queries for submodular valuations, or demand queries for general valuations, as detailed below. (Value and demand queries are the standard computational models for these settings.) The final allocation $S$ is then paired with bundle prices $p$ such that for every significant endowment effect $\Endow$, the pair $(S,p)$ is an $\Endow$-endowment CBE.

For submodular consumers, in every iteration the algorithm checks whether the current allocation is a local optimum (in the reduced market obtained by bundling), treating the currently allocated bundles as indivisible items.
By submodularity, it suffices to query only the marginal values of individual pre-packed bundles, which can be achieved using value queries.
If a current allocation is not a local optimum, the algorithm finds a better allocation, possibly by (irrevocably) merging bundles.


For arbitrary consumers, a local optimum (in the reduced market) as above does not suffice.
Specifically, we cannot verify stability via marginal values of individual pre-packed bundles.
Instead, we need to check whether each consumer is stable w.r.t. deviations to every set of pre-packed bundles.
Fortunately, this can be verified using a single carefully chosen demand query per consumer.
Whenever some consumer has a beneficial deviation, the algorithm uses this deviation to
construct an allocation with higher social welfare, possibly by (irrevocably) merging bundles.
Moreover, throughout the algorithm, for significant endowment effects, every consumer is always inward stable.
This allows the algorithm to maintain an allocation in which all items are allocated.
Upon termination, by definition we get an endowment CBE, which, by the algorithm's steps gives at least as much social welfare.

The above result is complemented with negative results for a class of non-significant endowment effects.
Specifically, we construct a parameterized market with $3$ consumers, which, based on the parameters, varies the type of valuations between XOS, subadditive, and general, and provides corresponding lower bounds for the approximation that can be obtained in endowment equilibria.

\subsection{Summary and Open Problems}
We propose a general principle-based framework for studying the endowment effect in combinatorial markets.
We provide both existence and efficiency guarantees of endowment equilibrium for a wide range of endowment effects and
consumer valuation classes.
Our main results are:
(1) There exist natural endowment effects for which an endowment equilibrium exists for every market with XOS consumers.
{\em Every} such endowment equilibrium gives at least half of the optimal welfare, and this equilibrium can be reached using natural dynamics.
Moreover, the optimal welfare can always be supported in such an endowment equilibrium.
In contrast, we show that for subadditive consumers, any endowment effect that inflates at a ``reasonable'' rate does not suffice to guarantee endowment equilibrium existence.
(2) The techniques from \babaioff\ can be applied to endowment effects that are {\em weaker} than \identity. This generalizes \babaioff 's existence and efficiency guarantees for submodular markets to a wider range of endowment effects.
(3) For any significant endowment effect, when allowing the seller to pre-pack items into indivisible bundles (thus turning to competitive bundling equilibrium (CBE)), given any initial allocation, one can efficiently compute an endowment CBE with at least as much welfare. This result implies that every market admits an optimal endowment CBE. More importantly, it reduces the problem of computing an endowment CBE to the pure algorithmic problem of welfare approximation.


\vspace{-0.05in}
\paragraph{Open problems.}
Our work opens up a wide range of directions for future research, ranging from the analysis of new resource allocation problems in the face of endowment effects to remaining gaps from this work.
In particular, any resource allocation problem can be revisited under the endowment effect. Given the positive results provided in this paper, we expect to find additional implications of the endowment effect in other settings of interest.
Furthermore, it would be interesting to study the endowment effect in the context of revenue maximization as well.
We hope that our work will inspire further discussion regarding meaningful endowment effects in combinatorial settings, as well as experimental work that will shed more light on appropriate formulations for specific real-life settings.


A more concrete open problem that remains from this work is the following: in Proposition~\ref{prop:noEndowSubadditive} we show that an endowment effect that inflates the valuation function by a factor of $O(\sqrt{m})$  does not suffice for guaranteeing existence of endowment equilibrium for subadditive valuations, while an inflation of $O(m)$ suffices for general valuations.
Closing this gap is an interesting open problem.

\section{Comparison to Related Work}
\label{sec:related-work}

Our work builds upon the recent work by \cite{babaioff2018combinatorial} that proposed the first formulation for the endowment effect in combinatorial auctions. This work can be viewed as a relaxation of the Walrasian equilibrium (WE) notion.
Other relaxations of WE have been considered in the literature in an attempt to address the non-existence problem of WE, and achieve approximate stability and efficiency for more general valuation classes than gross substitutes.

\cite{fu2012conditional} considered a relaxed notion of WE, termed {\em conditional equilibrium} (CE). A CE is an outcome where no consumer wishes to expand his or her allocation, but disposal of items is precluded. Fu {\em et al.} showed that every CE has at least half of the optimal welfare. Moreover, every market with XOS valuations admits a conditional equilibrium,
which can be reached via a ``flexible ascent auction'', a variation of ascending price algorithms proposed by
\cite{christodoulou2016bayesian,dobzinski2005approximation}. Our work reveals interesting connections between CE and endowment equilibrium (see more details in Section~\ref{sec:conditional}).
The type of the relaxation exhibited in the CE notion was also considered with respect to the notion of stable matching in labor markets \citep{fu2017stability, fu2017job}.

A different relaxation of WE was considered by \cite{feldman2015welfare}, where the utility maximization condition is preserved, but market clearance is relaxed (i.e., items with positive prices may be unsold). 
This notion is guaranteed to exist, but even for simple markets with two submodular consumers, the social welfare approximation guarantee cannot be better than $\Omega(\sqrt{m})$.


Our results on endowment CBE (Section~\ref{sec:bundles}) should be compared with previous notions of bundling equilibria~\citep{feldman2014clearing,feldman2016combinatorial,dobzinski2015welfare}.
In these settings, the market designer first partitions the set of items into indivisible bundles $B = \{B_1, \ldots, B_k\}$ (these are the indivisible items in the induced market), and assigns prices to these bundles instead of to the original items,
and a CBE is a Walrasian equilibrium in the induced market.

\cite{dobzinski2015welfare} showed that every market (with arbitrary valuations) admits a CBE that gives an approximation guarantee of $\tilde{O}(\sqrt{\min\{m, n\}} )$. Moreover, given an optimal allocation, a CBE with such approximation can be computed in polynomial time. Furthermore, Dobzinski {\em et al.} provide a polynomial time algorithm that computes a CBE with a $\tilde{O}(m^{2/3})$ approximation guarantee.

These results should be compared to Corollary~\ref{corr:OPT_CBE} and Theorem~\ref{thm:GV_end2NoGap} in this paper.
Specifically, Corollary~\ref{corr:OPT_CBE} shows that for a wide variety of endowment effects, there always exists an endowment CBE that gives the {\em optimal} welfare. Theorem~\ref{thm:GV_end2NoGap} shows that the problem of computing nearly-efficient endowment CBEs is effectively reduced to the pure algorithmic problem of welfare approximation: a problem addressed by a vast amount of literature (e.g.,
\citep{dobzinski2005approximation,lehmann2006combinatorial,dobzinski2006improved,feige2006approximation,feige2009maximizing,feige2013welfare,chakrabarty2010approximability}).


A relaxed notion of CBE, where some bundles may remain unsold, was considered by \cite{feldman2016combinatorial} (termed ``combinatorial Walrasian equilibrium"). For any market, given an arbitrary allocation $S$, one can compute a combinatorial Walrasian equilibrium with welfare at least half of the welfare of $S$ in polynomial time.

All the notions above consider a concise set of bundles,
a price for each bundle,
and an additive pricing over sets of bundles.
More general forms of bundle pricing, including non-linear and non-anonymous pricing, lead to welfare-maximizing results, but are highly impractical. In particular, they use an exponential number of prices \citep{bikhchandani2002package,parkes2000iterative,ausubel2002ascending,lahaie2009fair,sun2014efficient}.

\section{Preliminaries: Walrasian Equilibrium, Valuations and Queries}
\label{sec:prelims}
\begin{definition} (Walrasian Equilibrium)
	\label{def:walrasianEquilibrium}
	A pair of an allocation $(S_1, \ldots, S_n)$ and a price vector $p = (p_1, \ldots p_m)$ is a Walrasian Equilibrium (WE) if:
(i) {\bf Utility maximization:} Every consumer receives an allocation that maximizes her utility given the item prices, i.e.,
	$v_i(S_i) - \sum_{j \in S_i}p_j \geq v_i(X) - \sum_{j \in X} p_j$ for every consumer $i$ and bundle $X \subseteq M$.
(ii) {\bf Market clearance:} All items are allocated, i.e., $\bigcup_{i \in [n]}S_i = M$.
\end{definition}

\paragraph{\bf Valuation types.}
We give definitions of the valuation classes considered in this paper, from least to most general.
\vspace{-0.1in}
\begin{itemize}
	\item
	Unit demand: there exist $m$ values $v^1, \ldots, v^m$, so that $v(X) = \max_{j \in X}\{v^j\}$.
\vspace{-0.1in}
	\item
	Submodular:	for any $X, Y \subseteq M$ it holds that $v(X) + v(Y) \geq v(X \cup Y) + v(X \cap Y)$.
\vspace{-0.1in}
	\item
	XOS (fractionally subadditive): there exist vectors $v_1, \ldots v_k \in \reals^{M}$ so that
	for any $X \subseteq M$ it holds that $v(X) = max_{i \in [k]} \sum_{j \in X}v_{i}(j)$.	
\vspace{-0.1in}
	\item
	Subadditive: for any $X, Y \subseteq M$ it holds that $v(X) + v(Y) \geq v(X \cup Y)$.	
\end{itemize}

\paragraph{\bf Value and demand queries.}
The representation of combinatorial valuation functions is exponential in the parameters of the problem.
As standard in the literature, we use oracle access to the valuations in the form of value and demand queries, defined as follows:
\vspace{-0.1in}
\begin{itemize}
	\item
	A {\em value query} for valuation $v$ receives a set $X$ as input, and returns $v(X)$.	
\vspace{-0.1in}	
	\item
	A {\em demand query} for valuation $v$ receives a price vector $p=(p_1,\ldots,p_m)$ as input, and returns a set $X$ that maximizes  $u_i(X, p)$.
\end{itemize}


\section{Endowment Effect}
\subsection{Endowment Effect Framework}
\label{sec:endEffectFrame}

In the introduction, we present two principles that underlie the endowment effect, namely the loss aversion principle and the separability principle.
The loss aversion principle states that:
\begin{align*}
	v^{X \cup Y}(X \cup Y) - v^{X \cup Y}(Y) \geq v^{Y}(X \cup Y) - v^Y(Y) \lotsOfSpace \forall X, Y \subseteq M,
\end{align*}
and the separability principle states that
\begin{align*}
v^X(Y \setminus X \mid  X \cap Y) = v(Y \setminus X  \mid  X \cap Y) \lotsOfSpace \forall \of{X,}Y \subseteq M.
\end{align*}

In Lemma~\ref{lem:eq:externalMarginalSame} we show that the endowed valuation $v^X:2^M\rightarrow \reals^+$ satisfies the separability principle if and only if
$$
v^X(Y)=v(Y)+g^X(X \cap Y),
$$
for some function $g^X:2^X\rightarrow \reals$.
In Lemma~\ref{lem:weakMonotonicity} we show that the loss aversion principle implies that $g^X$ satisfies $g^X(Z) \leq g^X(X)$ for every $Z \subseteq X$.

We assume that the gain functions are normalized; i.e., for all $X \subseteq M$, it holds that $g^X(\emptyset) = 0$.
This implies that the endowed valuations are also normalized; i.e., $v^X(\emptyset) = v(\emptyset) + g^X(\emptyset)  = 0$.
Our results can be generalized to non-normalized gain functions.

Based on this characterization, the following definition follows.

\begin{definition} \label{def:endVal}
An {\em endowment effect} $\Endow$ is a collection of {\em gain functions} $g^X : 2^X \rightarrow \reals$ for each $X \subseteq M$,
s.t. $g^X(Z) \leq g^X(X)$ for all  $Z \subseteq X$.
Given an endowment effect $\Endow$, a valuation function $v:2^M \rightarrow \reals^+$, and an endowed set $X$, the endowed valuation with respect to $X$ is given by
$$
v^{X,\Endow}(Y) = v(Y) + g^X(X \cap Y)
$$
\end{definition}

When the endowment effect is clear in the context, we write $v^X$ instead of $v^{X, \Endow}$.
An {\em endowment environment} is given by a vector of endowment effects of the individual consumers, $\Endow = (\Endow_1, \ldots, \Endow_n)$.
We are now ready to define the notion of endowment equilibrium.
\begin{definition} \label{def:endowmentEq}
For an instance $(v_1, \ldots, v_n)$ and endowment environment $\Endow=(\Endow_1, \ldots, \Endow_n)$, a pair $(S, p)$ of an allocation $S = (S_1, \ldots S_n)$ and a price vector $p = (p_1, \ldots, p_m)$ forms an $\Endow$-endowment equilibrium, if $(S, p)$ is a Walrasian equilibrium with respect to $(v_1^{S_1, \Endow_1}, \ldots, v_n^{S_n, \Endow_n})$; i.e.,
\begin{enumerate}
	\item
		{\bf Utility maximization:}
		Every consumer receives an allocation that maximizes her endowed utility given the item prices, i.e.,
		for every consumer $i$ and bundle $X \subseteq M$, it holds that
$
v_i^{S_i, \Endow_i}(S_i) - \sum_{j \in S_i}p_j \geq v_i^{S_i, \Endow_i}(X) - \sum_{j \in X} p_j.
$
		\item
		{\bf Market clearance:} All items are allocated, i.e., $\bigcup_{i \in [n]}S_i = M$.
	\end{enumerate}
\end{definition}

We abuse notation and use $\Endow$ both for endowment effect and endowment environment when all consumers are subject to the
same endowment effect.

\subsection{Efficiency Guarantees for Endowment Equilibria}
\label{sec:endEfficiency}
Given an endowment environment $\Endow$, we are interested both in the existence and the social welfare of $\Endow$-endowment equilibria.
Walrasian equilibria (WE) are related to the following linear program
relaxation for combinatorial auctions, known as the {\em configuration LP}, (see e.g., \citep{bikhchandani1997competitive}).
Here, $x_{i, T}$ are the decision variables for every consumer $i$ and set $T \subseteq M$. \\

Maximize $\sum_{i \in [n]}\sum_{T \subseteq M} x_{i, T} \cdot v_i(T)$

Subject to:
\begin{itemize}
	\item For each $j \in M$: 	$\sum_{i \in [n]}\sum_{T \subseteq M  \mid  j \in T} x_{i, T} \leq 1$.
	\item For each $i \in [n]$: $\sum_{T \subseteq M}x_{i, T} \leq 1$.
	\item For each $i, T$: $x_{i, T} \geq 0$
\end{itemize}

WE existence turns out to be closely related to the integrality gap of the configuration LP:
\begin{theorem} \label{thm:integralityGap1}
	\citep{nisan2006communication}
	An instance $(v_1, \ldots, v_n)$ admits a WE if and only if the integrality gap of the configuration LP is $1$.
	Moreover, an integral allocation $S$ has payments $p$ such that $(S,p)$ is a WE if and only if $S$ is an optimal
	solution to the LP.
\end{theorem}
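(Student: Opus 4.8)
The plan is to derive everything from linear-programming duality and complementary slackness, which is the standard route for Walrasian-equilibrium characterizations. I would first observe that the first assertion follows from the ``moreover'' assertion together with the elementary fact that the integrality gap is $1$ exactly when the configuration LP has an optimal \emph{integral} solution: if such $S$ exists, the ``moreover'' part produces prices $p$ with $(S,p)$ a WE, and conversely a WE $(S,p)$ supplies an integral allocation whose LP value equals the LP optimum (shown below). So it suffices to prove: $(S,p)$ is a WE for some price vector $p$ if and only if $S$ is an optimal solution of the configuration LP.

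Next I would write down the dual. Assigning a variable $p_j \ge 0$ to each item constraint and a variable $u_i \ge 0$ to each per-consumer constraint, the dual is to minimize $\sum_j p_j + \sum_i u_i$ subject to $u_i + \sum_{j\in T} p_j \ge v_i(T)$ for all $i$ and all $T \subseteq M$, with $p_j, u_i \ge 0$. Since $v_i(\emptyset)=0$, in any optimal dual solution $u_i = \max_T\bigl(v_i(T)-p(T)\bigr) \ge 0$, i.e.\ $u_i$ is the best utility consumer $i$ can attain at item prices $p$; thus a dual-feasible pair $(p,u)$ with $u_i = v_i(S_i)-p(S_i)$ is exactly one in which $S_i$ is a demand set of consumer $i$ at $p$.

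For the forward direction, given a WE $(S,p)$ I would take the integral primal point $x$ with $x_{i,S_i}=1$ and the dual point $(p,u)$ with $u_i = v_i(S_i)-p(S_i)$. Primal feasibility is immediate since $S$ is an allocation; dual feasibility is precisely the utility-maximization condition in Definition~\ref{def:walrasianEquilibrium}; and the two objective values agree because market clearance gives $\sum_i p(S_i)=\sum_j p_j$, so $\sum_j p_j + \sum_i\bigl(v_i(S_i)-p(S_i)\bigr)=\sum_i v_i(S_i)$. Hence both points are optimal, $S$ is an optimal LP solution, and its value equals the LP optimum (as needed in the reduction above). For the reverse direction, given an optimal integral $S$ I would fix an optimal dual solution $(p,u)$: complementary slackness on $x_{i,S_i}$ yields $u_i=v_i(S_i)-p(S_i)$ whenever $S_i\neq\emptyset$, which combined with dual feasibility gives $v_i(S_i)-p(S_i)\ge v_i(T)-p(T)$ for all $T$; a consumer with $S_i=\emptyset$ has $u_i=0$ by complementary slackness on its own constraint, hence $v_i(T)\le p(T)$ for all $T$ — utility maximization holds in both cases. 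Finally, complementary slackness on the item constraints forces every positively priced item to be allocated, and any remaining (zero-priced) item may be appended to an arbitrary bundle without changing any price or lowering any utility (monotonicity), which restores market clearance; thus $(S,p)$ is a WE.

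The only genuinely delicate point is the bookkeeping around market clearance: reconciling the strict requirement $\bigcup_i S_i = M$ in Definition~\ref{def:walrasianEquilibrium} with the fact that the LP pins down only the allocation of items carrying positive price, and separately treating consumers who receive the empty bundle. Beyond that, the argument is a direct invocation of strong duality and complementary slackness, so I do not anticipate a serious obstacle.
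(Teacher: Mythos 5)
The paper does not prove this theorem at all — it is quoted from \citet{nisan2006communication}, and the known proof of that cited result is exactly the LP-duality plus complementary-slackness argument you give (with the same dual, the same choice $u_i = v_i(S_i)-p(S_i)$, and the same tight-constraint reasoning). Your proposal is correct, including the market-clearance bookkeeping, which the paper itself resolves via its footnote that unallocated items carry price zero (and note that with monotone valuations the non-negativity of equilibrium prices needed for dual feasibility is the standard convention), so it matches the standard proof of the cited theorem.
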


The following proposition gives an approximation guarantee for every endowment equilibrium such that $g^X \geq 0$ for all $g^X$. The guarantee is expressed as a function of the gain functions.
This is a natural generalization of \cite[Corollary 3.7]{babaioff2018combinatorial}.
\begin{proposition} \label{prop:EndowApxGuarantee}
	Given an instance $(v_1, \ldots, v_n)$, let $OPT$ be the value of the optimal fractional welfare.
	For an endowment effect $\Endow$, where $g^X \geq 0$ for all $g^X$ corresponding to $\Endow$,
	if $(S, p)$ is an $\Endow$-endowment equilibrium, then
	the allocation $S$ satisfies the following welfare guarantee:
	$$
	\sum_{i \in [n]}v_i(S_i) \geq
	\frac{\sum_{i \in [n]}v_i(S_i)}{\sum_{i \in [n]} \left(v_i(S_i)+g_i^{S_i}(S_i) \right) } \cdot OPT,
	$$		
	where $g_i^{S_i}$ is the gain function corresponding to $\Endow_i$.
\end{proposition}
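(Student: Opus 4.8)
The plan is to relate the endowment equilibrium $(S,p)$ to the configuration LP via LP duality, mimicking the structure of \cite[Corollary 3.7]{babaioff2018combinatorial} but tracking the gain functions carefully. First I would observe that, by the utility maximization condition in Definition~\ref{def:endowmentEq}, every consumer $i$ weakly prefers $S_i$ to the empty set under the endowed valuation, so $v_i^{S_i}(S_i) - p(S_i) \geq v_i^{S_i}(\emptyset) - p(\emptyset) = 0$; hence $\sum_i p(S_i) \leq \sum_i v_i^{S_i}(S_i) = \sum_i \bigl(v_i(S_i) + g_i^{S_i}(S_i)\bigr)$. Since $(S,p)$ clears the market, $\sum_i p(S_i) = p(M) = \sum_{j \in M} p_j$, so the total price of all items is at most $\sum_i \bigl(v_i(S_i) + g_i^{S_i}(S_i)\bigr)$.

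Next I would use the prices $p$ to certify an upper bound on $OPT$. For any fractional allocation $\{x_{i,T}\}$ feasible for the configuration LP, the utility-maximization inequality applied to each $T$ gives $v_i(T) \leq v_i^{S_i}(T) \leq v_i^{S_i}(S_i) - p(S_i) + p(T)$ (using $g_i^{S_i} \geq 0$ so that $v_i(T) \leq v_i^{S_i}(T)$, and then the deviation inequality). Taking the convex combination weighted by $x_{i,T}$ over all $i$ and $T$, and using the LP packing constraint $\sum_{i}\sum_{T \ni j} x_{i,T} \leq 1$ to bound $\sum_{i,T} x_{i,T} p(T) \leq p(M)$ together with $\sum_T x_{i,T} \leq 1$, yields
\[
OPT = \sum_{i,T} x_{i,T} v_i(T) \;\leq\; \sum_i \bigl(v_i^{S_i}(S_i) - p(S_i)\bigr) + p(M) \;=\; \sum_i v_i^{S_i}(S_i) \;=\; \sum_i \bigl(v_i(S_i) + g_i^{S_i}(S_i)\bigr),
\]
where the middle equality uses market clearance $\sum_i p(S_i) = p(M)$. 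Taking this over the optimal LP solution gives $OPT \leq \sum_i \bigl(v_i(S_i) + g_i^{S_i}(S_i)\bigr)$. Rearranging into the ratio form in the statement is then immediate: multiply both sides by $\sum_i v_i(S_i) / \sum_i \bigl(v_i(S_i)+g_i^{S_i}(S_i)\bigr)$.

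I expect the only real subtlety — more bookkeeping than obstacle — to be justifying the exchange of summation order and the use of the packing constraint to get $\sum_{i,T} x_{i,T} p(T) \leq p(M)$; one writes $\sum_{i,T} x_{i,T} p(T) = \sum_{j \in M} p_j \sum_{i}\sum_{T \ni j} x_{i,T} \leq \sum_{j} p_j = p(M)$, which needs $p_j \geq 0$ (prices in a Walrasian equilibrium are nonnegative, and unallocated items have price $0$). The hypothesis $g^X \geq 0$ is used in exactly two places: to ensure $v_i(T) \leq v_i^{S_i}(T)$ when bounding $OPT$, and implicitly to guarantee the denominator $\sum_i \bigl(v_i(S_i)+g_i^{S_i}(S_i)\bigr)$ is at least $\sum_i v_i(S_i)$ so the stated bound is a genuine (possibly trivial when the ratio is $1$) guarantee. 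Everything else is a direct adaptation of the Walrasian/configuration-LP argument to the endowed valuations.
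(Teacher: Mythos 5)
Your proof is correct and takes essentially the same approach as the paper: both bound the optimal fractional value of the configuration LP by $\sum_{i}v_i^{S_i,\Endow_i}(S_i)=\sum_{i}\left(v_i(S_i)+g_i^{S_i}(S_i)\right)$ using non-negativity of the gain functions, and then rearrange into the stated ratio. The only difference is that the paper invokes Theorem~\ref{thm:integralityGap1} as a black box to conclude that $S$ is LP-optimal with respect to the endowed valuations, whereas you re-derive that step directly from the equilibrium prices, which is why you additionally need (and correctly flag) price non-negativity and individual rationality with respect to the endowed valuations.
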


The following theorem is a direct corollary of the last proposition.
\begin{theorem} \label{thm:twoApx}
	If $(S, p)$ is an $\Endow$-endowment equilibrium for instance $(v_1, \ldots, v_n)$, and for all $v_i$ it holds that $g_i^{S_i}(S_i) \leq v_i(S_i)$, then the social welfare of $S$ gives $2$-approximation to the optimal fractional welfare.
\end{theorem}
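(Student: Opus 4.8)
The plan is to read the claim off directly from Proposition~\ref{prop:EndowApxGuarantee}: that proposition already lower-bounds the welfare of an arbitrary $\Endow$-endowment equilibrium by $\bigl(\sum_i v_i(S_i)\bigr)/\bigl(\sum_i (v_i(S_i)+g_i^{S_i}(S_i))\bigr)\cdot OPT$, so under the additional hypothesis $g_i^{S_i}(S_i)\le v_i(S_i)$ it suffices to show that this coefficient is at least $1/2$. No new structural argument is needed; the remaining content is purely arithmetic.

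Concretely, I would first record that $g_i^{S_i}(S_i)\ge 0$ for every consumer $i$: the loss-aversion condition gives $g_i^{S_i}(Z)\le g_i^{S_i}(S_i)$ for all $Z\subseteq S_i$, and taking $Z=\emptyset$ together with the normalization $g_i^{S_i}(\emptyset)=0$ yields the bound. In particular the gain functions in question are nonnegative, so Proposition~\ref{prop:EndowApxGuarantee} applies. Combining $0\le g_i^{S_i}(S_i)\le v_i(S_i)$ for every $i$ and summing over $i$ gives
$$
\sum_{i\in[n]} v_i(S_i)\ \le\ \sum_{i\in[n]}\bigl(v_i(S_i)+g_i^{S_i}(S_i)\bigr)\ \le\ 2\sum_{i\in[n]} v_i(S_i),
$$
so the coefficient of $OPT$ in Proposition~\ref{prop:EndowApxGuarantee},
$$
\frac{\sum_{i\in[n]} v_i(S_i)}{\sum_{i\in[n]}\bigl(v_i(S_i)+g_i^{S_i}(S_i)\bigr)},
$$
is at least $1/2$. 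Substituting this into the proposition yields $SW(S)=\sum_{i\in[n]} v_i(S_i)\ge \tfrac12\, OPT$, i.e., the claimed $2$-approximation to the optimal fractional welfare.

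I do not expect any real obstacle: all the substance --- the configuration-LP argument extending \babaioff\ --- is encapsulated in Proposition~\ref{prop:EndowApxGuarantee}, and the step above is a one-line estimate. The only point that deserves a word of care is the degenerate case $\sum_{i\in[n]} v_i(S_i)=0$, where the coefficient above is of the form $0/0$; there I would argue directly that a zero-welfare $\Endow$-endowment equilibrium with market clearance forces the optimum to be $0$ as well --- indeed the hypothesis together with $g_i^{S_i}(S_i)\ge 0$ makes every $g_i^{S_i}(S_i)=0$, hence every endowed utility is $0$ and (prices being nonnegative, with all items sold) all item prices must be $0$, so any consumer who valued some bundle positively would have a strictly profitable deviation, a contradiction --- and then the inequality $SW(S)\ge\tfrac12\,OPT$ holds trivially.
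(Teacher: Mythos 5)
Your proposal is correct and takes essentially the same route as the paper, which derives Theorem~\ref{thm:twoApx} as a direct corollary of Proposition~\ref{prop:EndowApxGuarantee} using exactly the estimate $0 \le g_i^{S_i}(S_i) \le v_i(S_i)$ to bound the coefficient of $OPT$ below by $1/2$. One small caveat: Proposition~\ref{prop:EndowApxGuarantee} assumes nonnegativity of the gain functions on \emph{all} subsets, and your derivation from loss aversion plus normalization only yields $g^X(X)\ge 0$, not $g^X(Z)\ge 0$ for intermediate $Z$ --- but this hypothesis is implicitly inherited from the proposition in the paper as well, so your argument matches the intended proof.
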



\subsection{Partial order over endowment effects}
\label{sec:partialOrder}
We next define a partial order over the set of endowment effects.
The partial order is defined based on the term $g^{X}(Z \mid X\setminus Z)$, which is the ``additional incurred loss'' upon losing a subset $Z$ of an endowed set $X$ due to the endowment effect.

\begin{definition} \label{def:dominanceRelationS}
	Fix a valuation function $v$, and two endowment effects $\Endow, \hat{\Endow}$ with respect to $v$.
	\begin{itemize}
		\item
		Given a set $X$, we write $\Endow \prec_X \hat{\Endow}$
		if
		for all $Z \subseteq X$,
		$g^X( Z  \mid  X \setminus Z) \leq \hat{g}^X(Z  \mid  X \setminus Z)$.
		\item
		We write $\Endow \prec \hat{\Endow}$
		if for all $X \subseteq M$ it holds that $\Endow \prec_X \hat{\Endow}$.
	\end{itemize}
\end{definition}

For example, the following proposition shows that for subadditive valuations \absloss\ is stronger than \identity. An example of an endowment effect that is weaker than \identity\ is given in Proposition~\ref{prop:SOMprecSTDsubmodular}.

\begin{proposition} \label{prop:ALdominatesI}
	For a consumer with a subadditive valuation $v$, it holds that
	$\Endow^{I} \prec \Endow^{AL}$.
\end{proposition}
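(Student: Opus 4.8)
To show $\Endow^{I} \prec \Endow^{AL}$ I need to verify, for every $X \subseteq M$ and every $Z \subseteq X$, the inequality
$$
g_I^X(Z \mid X \setminus Z) \;\leq\; g_{AL}^X(Z \mid X \setminus Z),
$$
where recall $g_I^X = v$ on $2^X$ and $g_{AL}^X(W) = v(X) - v(X \setminus W)$. So the first step is simply to expand both marginals using the definition $g(Z \mid X \setminus Z) = g(X) - g(X \setminus Z)$ (since $Z \cup (X \setminus Z) = X$).

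For the left-hand side, $g_I^X(Z \mid X \setminus Z) = v(X) - v(X \setminus Z)$. For the right-hand side, I compute $g_{AL}^X(X) - g_{AL}^X(X \setminus Z)$. Now $g_{AL}^X(X) = v(X) - v(\emptyset) = v(X)$, and $g_{AL}^X(X \setminus Z) = v(X) - v(X \setminus (X \setminus Z)) = v(X) - v(X \cap Z) = v(X) - v(Z)$ (using $Z \subseteq X$). Hence $g_{AL}^X(Z \mid X \setminus Z) = v(X) - \bigl(v(X) - v(Z)\bigr) = v(Z)$.

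So the desired inequality reduces to
$$
v(X) - v(X \setminus Z) \;\leq\; v(Z) \qquad \text{for all } Z \subseteq X,
$$
which, rearranged, is exactly $v(X) \leq v(X \setminus Z) + v(Z)$ — and this is precisely subadditivity applied to the sets $X \setminus Z$ and $Z$, whose union is $X$. That is the whole argument. The only thing to be slightly careful about is the bookkeeping with set complements inside $2^X$ (i.e., that "complement of $X \setminus Z$ within $X$" is $Z$), and noting that the claim uses nothing about $\Endow^{AL}$ beyond its definition and nothing about $v$ beyond monotonicity/normalization (to make $g_{AL}^X$ a legitimate gain function) and subadditivity (for the final step). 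There is no real obstacle here; the content is entirely in identifying that the gain-function marginals telescope to $v(X) - v(X\setminus Z)$ and $v(Z)$ respectively, after which subadditivity closes the gap.
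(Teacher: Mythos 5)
Your proof is correct and follows essentially the same route as the paper's: both expand the two marginals to obtain $g_I^X(Z \mid X \setminus Z) = v(X) - v(X \setminus Z)$ and $g_{AL}^X(Z \mid X \setminus Z) = v(Z)$, and then invoke subadditivity of $v$ on the pair $Z$, $X \setminus Z$ to conclude.
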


The following theorem establishes the stability preservation property of the partial order.
\begin{theorem} \label{thm:strengthKeepEndEq} [Stability preservation]
	Suppose $(S, p)$ is an $\Endow$-endowment equilibrium with respect to instance $(v_1, \ldots, v_n)$, and let $\hat{\Endow}$ be such that $\Endow_i \prec_{S_i} \hat{\Endow}_i$ for every $i$.
	Then, $(S, p)$ is also an $\hat{\Endow}$-endowment equilibrium.
\end{theorem}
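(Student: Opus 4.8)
The plan is to show that each consumer's utility-maximization condition is preserved when we pass from $\Endow$ to $\hat{\Endow}$, since market clearance is unaffected (the allocation $S$ is unchanged). Fix a consumer $i$; to ease notation write $v = v_i$, $X = S_i$, $g = g^{X}_i$ (the gain function of $\Endow_i$), $\hat g = \hat g^{X}_i$ (the gain function of $\hat\Endow_i$), and let $p$ denote the given prices. We are told $(S,p)$ is an $\Endow$-endowment equilibrium, so for every bundle $T \subseteq M$,
\begin{align*}
v^{X,\Endow_i}(X) - p(X) \;\geq\; v^{X,\Endow_i}(T) - p(T),
\end{align*}
and we must derive the same inequality with $\hat\Endow_i$ in place of $\Endow_i$. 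Using $v^{X,\Endow_i}(Y) = v(Y) + g(X\cap Y)$ and $v^{X,\hat\Endow_i}(Y) = v(Y) + \hat g(X\cap Y)$, and the fact that $g(X) = g^X(X)$, $\hat g(X)=\hat g^X(X)$ are the values at the full endowed set, the desired inequality rearranges to
\begin{align*}
\big(g(X) - g(X\cap T)\big) \;\geq\; \big(\hat g(X) - \hat g(X\cap T)\big) \;-\; \text{(a common term that cancels)},
\end{align*}
so after cancellation it suffices to prove $g(X) - g(X\cap T) \geq \hat g(X) - \hat g(X\cap T)$ for every $T$.

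The key step is to recognize that $g(X) - g(X\cap T)$ is exactly the quantity appearing in Definition~\ref{def:dominanceRelationS}: setting $Z = X \setminus (X\cap T) = X \setminus T \subseteq X$, we have $X \setminus Z = X \cap T$, so $g(X) - g(X\cap T) = g^X(X) - g^X(X\setminus Z) = g^X(Z \mid X\setminus Z)$, and likewise $\hat g(X) - \hat g(X\cap T) = \hat g^X(Z \mid X\setminus Z)$. Since $\Endow_i \prec_{S_i} \hat\Endow_i$ means precisely that $g^X(Z \mid X\setminus Z) \leq \hat g^X(Z\mid X\setminus Z)$ for all $Z\subseteq X$, we get the reverse of what is wanted unless I have a sign error — so I would be careful here and re-derive the direction of the inequality from scratch. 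The resolution is that in $v^X(X) - v^X(T) \geq \dots$ the endowed value of the \emph{retained} bundle $X$ contributes $g(X)$ with a $+$ sign and the deviation bundle contributes $-g(X\cap T)$, so strengthening the endowment effect makes $g(X)$ grow while $g(X\cap T)$ also grows; what matters is that the \emph{difference} $g(X) - g(X\cap T)$ grows, which is exactly $\Endow_i \prec_{S_i}\hat\Endow_i$ read in the correct orientation. I would write out the chain of inequalities explicitly, substituting the definitions, to confirm that the $v(X)$, $v(T)$, $p(X)$, $p(T)$ terms all cancel and the residual inequality is literally an instance of the defining condition of $\prec_{S_i}$.

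The main obstacle, then, is purely bookkeeping: getting the signs and the identification $Z = X\setminus T$ exactly right, and confirming that no hypothesis beyond $\Endow_i \prec_{S_i}\hat\Endow_i$ is needed (in particular, we do \emph{not} need $\Endow \prec \hat\Endow$ on all of $M$, only on the endowed sets $S_i$, which is why the theorem is stated with $\prec_{S_i}$). Once the algebra is lined up, the proof is a one-line substitution per consumer, and market clearance carries over verbatim since the allocation is the same. I would also remark at the end that this immediately yields, as used elsewhere in the paper, that $\Endow \prec \hat\Endow$ (the global version) implies every $\Endow$-endowment equilibrium is an $\hat\Endow$-endowment equilibrium, and hence combined with Proposition~\ref{prop:ALdominatesI} that \identity-endowment equilibria for subadditive consumers are \absloss-endowment equilibria.
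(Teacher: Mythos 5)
Your proposal is correct and takes essentially the same route as the paper: the paper's proof likewise fixes consumer $i$, rearranges the $\Endow$-utility-maximization inequality against a deviation $Y$ so that the only endowment term remaining is $g_i^{S_i}(S_i \setminus Y \mid S_i \cap Y)$, applies $\Endow_i \prec_{S_i} \hat{\Endow}_i$ with $Z = S_i \setminus Y$ (so $S_i \setminus Z = S_i \cap Y$), and rearranges back, with market clearance carrying over since the allocation is unchanged. One small correction: your line ``it suffices to prove $g(X) - g(X\cap T) \geq \hat g(X) - \hat g(X\cap T)$'' has the inequality reversed; your subsequent resolution states the right direction (the difference must be weakly \emph{larger} under $\hat{\Endow}_i$), which is precisely the defining inequality of $\prec_{S_i}$, so once that sign is fixed the argument is exactly the paper's.
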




\section{Existence of Endowment Equilibrium}
\label{sec:equilibrium-existence}

In this section we show that the \absloss\ endowment effect, leads to strong existence and efficiency guarantees in combinatorial markets with XOS valuations.
In particular, we provide a dynamic process (Algorithm~\ref{alg:xos}) that for every market with XOS consumers and initial allocation $S$, terminates in an $\Endow^{AL}$-endowment equilibrium with at least as much welfare as $S$.\footnote{Algorithm~\ref{alg:xos} is a modified version of the ``flexible ascent auction'' presented by \cite{fu2012conditional}.}
An immediate corollary of our proof is that any optimal allocation $S$ can be paired with prices $p$ such that $(S, p)$ forms an $\Endow^{AL}$-endowment equilibrium.
Moreover, since $g_{AL}^X(X) = v(X)$, Theorem~\ref{thm:twoApx} implies that {\em every} $\Endow^{AL}$-endowment equilibrium gives at least half of the optimal (even fractional) welfare.


The main theorem of this section is the following:

\begin{theorem} \label{thm:existenceXOSabsolute}
There exists an algorithm that, for every market with XOS consumers and every initial allocation $S = (S_1, \ldots, S_n)$,
returns an $\Endow^{AL}$-endowment equilibrium $(S', p)$, such that $SW(S') \geq SW(S)$.
\end{theorem}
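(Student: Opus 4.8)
The plan is to design an ascending-price dynamic, as hinted: start from the given allocation $S$ and repeatedly either certify that the current allocation (together with a canonical price vector) is an $\Endow^{AL}$-endowment equilibrium, or produce a new allocation with strictly larger social welfare. Since the social welfare takes finitely many values (it is a sum of valuations over finitely many allocations), such a process must terminate, and it terminates with welfare at least $SW(S)$. The canonical prices at each step will be \emph{supporting prices}: for an XOS valuation $v_i$ and a set $S_i$, there is a vector $(q_j)_{j\in S_i}$ with $\sum_{j\in S_i} q_j = v_i(S_i)$ and $\sum_{j\in T} q_j \le v_i(T)$ for every $T \subseteq S_i$ (this is the standard characterization of XOS via supporting prices from \cite{dobzinski2005approximation}). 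I would set the price of each item $j$ to be the supporting price assigned to $j$ by the consumer who currently holds $j$.

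The crux is to show that these supporting prices yield \emph{inward stability} with respect to the \absloss\ endowed valuations: for the consumer $i$ holding $S_i$, deviating to any $T \subseteq S_i$ is never strictly profitable. Here is where $\Endow^{AL}$ does the work. The endowed utility of $T \subseteq S_i$ is $v_i^{S_i}(T) - p(T) = v_i(T) + g_{AL}^{S_i}(T) - p(T) = v_i(T) + v_i(S_i) - v_i(S_i \setminus T) - \sum_{j\in T} q_j$. Since the $q_j$ are supporting prices we have $\sum_{j \in T} q_j \le v_i(T)$, hence the endowed utility of $T$ is at most $v_i(S_i) - v_i(S_i\setminus T) \le v_i(S_i) - 0$... more carefully, I want to compare to the endowed utility of $S_i$ itself, which is $v_i^{S_i}(S_i) - p(S_i) = 2v_i(S_i) - v_i(S_i)=v_i(S_i)$ wait let me recompute: $v_i^{S_i}(S_i) = v_i(S_i) + g_{AL}^{S_i}(S_i) = v_i(S_i) + (v_i(S_i)-v_i(\emptyset)) = 2v_i(S_i)$, and $p(S_i) = v_i(S_i)$, so the endowed utility of $S_i$ is $v_i(S_i)$. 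For $T\subseteq S_i$, endowed utility is $v_i(T) + v_i(S_i) - v_i(S_i\setminus T) - \sum_{j\in T} q_j \ge$ (using supporting prices $\sum_{j\in T}q_j \le v_i(T)$, so $v_i(T) - \sum_{j \in T} q_j \ge 0$) hmm that gives a lower bound, not an upper bound. Let me instead use $\sum_{j\in S_i\setminus T} q_j \le v_i(S_i\setminus T)$ together with $\sum_{j\in S_i}q_j = v_i(S_i)$: then $\sum_{j\in T} q_j = v_i(S_i) - \sum_{j\in S_i\setminus T} q_j \ge v_i(S_i) - v_i(S_i\setminus T)$. Substituting, endowed utility of $T$ is $v_i(T) + v_i(S_i) - v_i(S_i\setminus T) - \sum_{j\in T}q_j \le v_i(T) + v_i(S_i) - v_i(S_i\setminus T) - (v_i(S_i)-v_i(S_i\setminus T)) = v_i(T) \le v_i(S_i)$. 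So no strict incentive to shrink one's own bundle; inward stability holds. This is exactly the property that makes $\Endow^{AL}$ succeed where $\Endow^{I}$ fails.

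Given inward stability, the argument proceeds as in the flexible ascent auction of \cite{fu2012conditional}. If no consumer has any profitable deviation, then together with the fact that all items are allocated we have an $\Endow^{AL}$-endowment equilibrium and we stop. Otherwise, some consumer $i$ has a profitable deviation to a set $X$; by inward stability we may assume $X \supseteq S_i$ (replace $X$ by $X \cup S_i$, which is at least as good, since adding back one's own items only changes the $g$-term favorably and $\cup$ing with $S_i$ cannot decrease endowed utility — one must verify this using subadditivity/XOS and the price structure, which is the routine-calculation part). Reallocate: give $i$ the set $X$, and for each $j\in X\setminus S_i$ remove $j$ from its previous owner. One shows the total welfare strictly increases: consumer $i$'s profitable deviation means $v_i(X) + g^{S_i}_{AL}(S_i) - p(X) > v_i(S_i) + g^{S_i}_{AL}(S_i) - p(S_i)$ — wait, $X \supseteq S_i$ so $g^{S_i}_{AL}(S_i \cap X) = g^{S_i}_{AL}(S_i)$, and this reduces to $v_i(X) - p(X\setminus S_i) > v_i(S_i)$; since $p(X\setminus S_i)$ equals the sum of supporting prices the previous owners assigned to those items, which lower-bounds the welfare those owners lose, consumer $i$'s gain $v_i(X) - v_i(S_i)$ exceeds the others' loss $\le p(X\setminus S_i)$, so $SW$ strictly increases. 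Refresh all supporting prices for the new allocation and repeat. Termination follows from strict monotone increase of $SW$ over finitely many possible values.

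The main obstacle, and the place I would spend the most care, is the \emph{reduction to inward-superset deviations} combined with the \emph{welfare-increase bookkeeping}: one needs that whenever a consumer profits by deviating to some $X$, she also profits by deviating to $X \cup S_i$ (this uses that $\Endow^{AL}$ keeps marginals of outside items intact — the separability principle — so that $v_i^{S_i}(X\cup S_i) - v_i^{S_i}(X) = v_i(S_i\setminus X \mid X)$, combined with the supporting-price upper bound on $p(S_i\setminus X)$), and that the reallocation genuinely improves welfare despite items possibly being taken from several consumers at once. Everything else — existence of supporting prices for XOS, finiteness of the welfare range, the final equilibrium certification — is standard once inward stability is in hand.
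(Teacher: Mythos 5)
Your proposal follows essentially the same route as the paper's proof: Algorithm~\ref{alg:xos} is exactly the flexible ascent auction with \emph{all} prices refreshed to supporting prices at every iteration, your subset-deviation computation is the content of Lemma~\ref{lem:AinwardStabilityXOS}, and your welfare bookkeeping matches Proposition~\ref{prop:increase_welfare} (the phrase ``lower-bounds the welfare those owners lose'' is a direction typo; you use $p(X\setminus S_i)$ correctly as an upper bound on the others' loss).

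The one step you defer as ``routine'' is sketched incorrectly, and it is precisely where the argument could break. To replace a profitable deviation $X$ by $X\cup S_i$ you invoke separability to claim $v_i^{S_i}(X\cup S_i)-v_i^{S_i}(X)=v_i(S_i\setminus X\mid X)$. Separability only pins down marginals of items \emph{outside} the endowed set; the items being added back, $S_i\setminus X$, lie inside $S_i$, and for \absloss\ the correct identity is $v_i^{S_i}(X\cup S_i)-v_i^{S_i}(X)=v_i(S_i\setminus X\mid X)+\bigl(g^{S_i}_{AL}(S_i)-g^{S_i}_{AL}(S_i\cap X)\bigr)=v_i(S_i\setminus X\mid X)+v_i(S_i\setminus X)$. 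The extra gain term is essential: for a unit-demand consumer with $S_i=\{a,b\}$, values $1$ for each item, supporting prices $p_a=p_b=1/2$, and a deviation $X$ containing $a$ but not $b$, one has $v_i(b\mid a)=0<p_b$, so the bare marginal cannot cover $p(S_i\setminus X)$ and your claimed equality would not deliver the superset reduction. With the gain term included, monotonicity plus the supporting-price bound $p(S_i\setminus X)\le v_i(S_i\setminus X)$ gives exactly what is needed; this is Lemma~\ref{lem:inwardStability} combined with Lemma~\ref{lem:AinwardStabilityXOS}, and it already follows from the subset computation you carried out, since there you proved $g^{S_i}_{AL}(Z)-p(Z)\le g^{S_i}_{AL}(S_i)-p(S_i)$ for all $Z\subseteq S_i$, which is precisely the hypothesis of Lemma~\ref{lem:inwardStability}. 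Once that step is stated this way, the rest of your argument (strict welfare increase per iteration, termination, and equilibrium certification from the while-loop condition) is sound and coincides with the paper's.
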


Before proving Theorem~\ref{thm:existenceXOSabsolute}, we need some preparation.


We begin by recalling the definition of {\em supporting prices}.
\begin{definition} \label{def:suppPrices} (\citep{dobzinski2005approximation})
Given a valuation $v$ and a set $X \subseteq M$, the prices $\{p_j\}_{j \in X}$ are supporting prices w.r.t. $v$ and $X$ if
$v(X) = \sum_{j \in X}p_j$ and for every $Z \subseteq X$, $v(Z) \geq \sum_{j \in Z}p_j$.
\end{definition}

It is well known (see, e.g., \citep{dobzinski2005approximation}) that a valuation is XOS if and only if for all $X \subseteq M$ there exist supporting prices for $v(X)$.

We now introduce the notion of {\em inward stability}. A set $X$ is inward stable if for every set $Y\subseteq M$, the marginal utility of $X \setminus Y$ is non-negative. Formally:

\begin{definition} \label{def:inwardStable}
	Given a valuation $v$, and item pricing $(p_1, \ldots p_m)$, a set $X \subseteq M$ is {\em inward stable} w.r.t. $v$ and $p$ if for every $Y \subseteq M$ it holds that $p(X \setminus Y) \leq v(X \setminus Y  \mid  Y)$.
\end{definition}

The following lemma establishes a sufficient condition for inward stability with respect to endowed valuations.

\begin{lemma} \label{lem:inwardStability}
	Given a valuation $v$, an endowment effect $\{g^{X}\}_{X}$, and item pricing $(p_1, \ldots, p_m)$,
	if a set $X \subseteq M$ satisfies
	\begin{align} \label{eq:inwardStability}
	g^{X}(Z) - p(Z) \leq g^{X}(X) - p(X) \quad \mbox{   for all  }  Z \subseteq X,
	\end{align}
	then $X$ is inward stable with respect to $v^{X}$ and $p$.
\end{lemma}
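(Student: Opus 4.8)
The plan is to unpack the definition of inward stability and reduce it to the given hypothesis on the gain function. Recall that $X$ is inward stable with respect to $v^X$ and $p$ if for every $Y \subseteq M$ we have $p(X \setminus Y) \leq v^X(X \setminus Y \mid Y)$, where $v^X(A \mid B) = v^X(A \cup B) - v^X(B)$. So fix an arbitrary $Y \subseteq M$ and set $Z = X \cap Y$, a subset of $X$. The key observation is that $X \setminus Y = X \setminus Z$, so $p(X \setminus Y) = p(X \setminus Z) = p(X) - p(Z)$, using that $Z \subseteq X$ and prices are additive over disjoint sets.

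Next I would compute the marginal $v^X(X \setminus Y \mid Y) = v^X((X \setminus Y) \cup Y) - v^X(Y) = v^X(X \cup Y) - v^X(Y)$. Using the defining identity $v^X(W) = v(W) + g^X(X \cap W)$ from Definition~\ref{def:endVal}, and noting $X \cap (X \cup Y) = X$ while $X \cap Y = Z$, this equals $\bigl(v(X \cup Y) + g^X(X)\bigr) - \bigl(v(Y) + g^X(Z)\bigr) = v(X \cup Y) - v(Y) + g^X(X) - g^X(Z)$. Since $v$ is monotone, $v(X \cup Y) - v(Y) \geq 0$, so $v^X(X \setminus Y \mid Y) \geq g^X(X) - g^X(Z)$.

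Combining the two computations, it suffices to show $p(X) - p(Z) \leq g^X(X) - g^X(Z)$, i.e. $g^X(Z) - p(Z) \leq g^X(X) - p(X)$, which is exactly hypothesis~(\ref{eq:inwardStability}) applied to $Z \subseteq X$. This closes the argument. I do not anticipate a genuine obstacle here; the only thing to be careful about is the bookkeeping with intersections — specifically that $X \setminus Y$ and $X \setminus (X\cap Y)$ coincide, and that $X \cap (X \cup Y) = X$ — so that the gain-function terms line up correctly. I would also remark that monotonicity of $v$ is the only property of $v$ used, so no structural assumption (submodularity, XOS, etc.) on $v$ is needed for this lemma; those come into play only when verifying that the hypothesis~(\ref{eq:inwardStability}) holds for the specific $\Endow^{AL}$ effect and supporting prices, which is done elsewhere.
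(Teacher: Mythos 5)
Your proof is correct and follows essentially the same route as the paper's: both arguments rest only on monotonicity of $v$ (via $v(Y)\leq v(X\cup Y)$) plus hypothesis~(\ref{eq:inwardStability}) applied to $Z=X\cap Y$, with your version verifying the marginal condition $p(X\setminus Y)\leq v^X(X\setminus Y\mid Y)$ directly and the paper writing the equivalent inequality $v^X(Y)-p(Y)\leq v^X(X\cup Y)-p(X\cup Y)$. The bookkeeping ($X\setminus Y=X\setminus(X\cap Y)$, $X\cap(X\cup Y)=X$) is handled correctly, so nothing is missing.
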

\begin{proof}
	Fix any $Y \subseteq M$.
	By monotonicity of $v$ we have that:
	\begin{align*}
	v^{X}(Y) - p(Y) =
	v(Y) + g^{X}(X \cap Y) - p(Y)
	\leq
	v(X \cup Y) + g^{X}(X \cap Y) - p(Y)
	\end{align*}
	It is given that
	$
	g^{X}(Y \cap X) - p(Y \cap X) \leq g^{X}(X) - p(X).
	$
	Combining the two inequalities above gives
	$
	v^{X}(Y) - p(Y)
	\leq
	v(X \cup Y) + g^{X}(X) - p(X) - p(Y \setminus X)
	=
	v^{X}(X \cup Y) - p(X \cup Y),
	$
	as required.
\end{proof}

The following lemma shows that for XOS valuations, Equation~\ref{eq:inwardStability} holds with respect to the endowment effect $\Endow^{AL}$, and the supporting prices.

\begin{lemma} \label{lem:AinwardStabilityXOS}
Fix an XOS valuation $v$ and a set $X \subseteq M$, and let $(p_1, \ldots, p_m)$ be supporting prices w.r.t. $v$ and $X$. Then, the gain function $g^{X}$ corresponding to $\Endow^{AL}$ satisfies  $$g^{X}(Z) - p(Z) \leq g^{X}(X) - p(X)$$ for all $Z \subseteq X$.
\end{lemma}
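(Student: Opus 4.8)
The plan is to unwind the definition of the \absloss\ gain function and reduce the desired inequality directly to the defining property of supporting prices, with essentially no computation left over. Recall that $g^X_{AL}(Z) = v(X) - v(X \setminus Z)$; in particular, using normalization $v(\emptyset) = 0$, we get $g^X_{AL}(X) = v(X) - v(\emptyset) = v(X)$. Substituting these two expressions into the claimed inequality $g^X(Z) - p(Z) \le g^X(X) - p(X)$ and cancelling the common term $v(X)$ on both sides, the claim becomes $-v(X \setminus Z) - p(Z) \le -p(X)$, i.e. $v(X\setminus Z) + p(Z) \ge p(X)$.

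Next I would exploit additivity of the pricing over disjoint sets: since $Z$ and $X \setminus Z$ partition $X$, we have $p(X) = p(Z) + p(X \setminus Z)$. Plugging this in and cancelling $p(Z)$ from both sides, the inequality simplifies further to $v(X \setminus Z) \ge p(X \setminus Z)$.

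Finally, I observe that $X \setminus Z \subseteq X$, so this last inequality is exactly the supporting-prices condition of Definition~\ref{def:suppPrices} applied to the subset $X \setminus Z$, which holds by hypothesis. This closes the argument. Note that the full equality $v(X) = \sum_{j \in X} p_j$ is not even needed here, only the inequalities $v(Z) \ge p(Z)$ for subsets $Z \subseteq X$ (equivalently, the XOS property on $X$).

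I do not expect a genuine obstacle: the proof is a chain of substitutions, and the only points requiring care are the sign bookkeeping when moving terms across the inequality and the use of normalization to evaluate $g^X_{AL}(X)$. The payoff is immediate, though: combined with Lemma~\ref{lem:inwardStability}, this lemma shows that under $\Endow^{AL}$ every endowed set $X$ equipped with its supporting prices is inward stable with respect to $v^X$, which is the structural property driving the ascending-price algorithm behind Theorem~\ref{thm:existenceXOSabsolute}.
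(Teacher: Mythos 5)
Your proof is correct and follows essentially the same route as the paper: both reduce the claim, via $g^X_{AL}(Z)=v(X)-v(X\setminus Z)$ and additivity of prices, to the supporting-price inequality $p(X\setminus Z)\le v(X\setminus Z)$ for the subset $X\setminus Z\subseteq X$. Your added remark that only the inequalities $v(Z)\ge p(Z)$ (not the equality $v(X)=p(X)$) are needed is a nice, accurate observation but does not change the argument.
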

\begin{proof}
	Observe that by definition of supporting prices, it holds that
	$
	p(X \setminus Z) \leq
	v(X \setminus Z)
	$
	for any $Z \subseteq X$.
	By definition of $g^X$ corresponding to $\Endow^{AL}$, it holds that $g^X(Z) = v(X) - v(X \setminus Z)$.	
	Rearranging,
	we conclude that
	$
	g^X(X) - g^X(Z)
	=
	v(X) - g^X(Z) = v(X \setminus Z)  \geq p(X \setminus Z) = p(X) - p(Z)
	$, as required.
%
\end{proof}

We now show that given any initial allocation $S$, Algorithm~\ref{alg:xos}, which is a modified version of the ``flexible ascent auction''
from \citep{fu2012conditional}, results in an $\Endow^{AL}$-endowment equilibrium $(S', p)$ with at least as much welfare as $S$.


\begin{algorithm}
	Input: Allocation $(S_1, \ldots,S_n)$, XOS valuation functions $(v_1, \ldots, v_n)$\;		Output: Allocation $(S'_1, \ldots,S'_n)$, prices $(p_1, \ldots, p_n)$ \\

	Set $S' \gets S$\\
	Set $p_1, \ldots, p_m$ such that for all $i\in [n]$  the prices $\{p_j  \mid  j \in S'_i\} $ are supporting prices for $S'_i$ w.r.t. $v_i$  $\;$
	
	\While{ $\exists  i, X \subseteq M$ such that $v_i^{S'_i}(X) - p(X) > v_i^{S'_i}(S'_i) -  p(S'_i)$} {
			$S'_i=S'_i \cup X$ \\
			$S'_j =  S'_j \setminus X \; \; \;  \forall j \neq i $\\
			Set $p_1, \ldots, p_m$ such that for all $i\in [n]$  the prices $\{p_j  \mid  j \in S'_i\} $ are supporting prices for $S'_i$ w.r.t. $v_i$  $\;$
	}
	\Return{$(S', p)$}
	\caption{An $\Endow^{AL}$-endowment flexible ascent auction for XOS valuations.}
	\label{alg:xos}
\end{algorithm}
The key difference of Algorithm~\ref{alg:xos} compared to the flexible ascent auction is that in the end of every iteration all the prices may change, not only the ones taken by the consumer in that iteration.
Specifically, in the end of every iteration, the prices of {\em all} items are updated to be supporting prices with respect to the current allocation.
This ensures that inward stability is maintained.
This property implies that the update rule, which always extends the current allocation of a deviating consumer (i.e., consumer $i$'s allocation is updated to $S_i^{'}\cup X$ rather than to $X$), is without loss of generality.

The following lemma shows that the dynamics in Algorithm~\ref{alg:xos} are better-response dynamics.
	\begin{lemma} \label{lem:alg:xos:increaseUtility}
Let $S'$ and $p$ be the allocation and price vector at the beginning of some iteration in Algorithm~\ref{alg:xos}.
For the chosen consumer $i$, and her corresponding set $X$, it holds that
		$v_i^{S'_i}(S'_i \cup X) - p(S'_i \cup X) > v_i^{S'_i}(S'_i) -  p(S'_i)$.
I.e., consumer $i$ performs a beneficial deviation.
	\end{lemma}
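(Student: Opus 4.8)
The plan is to show that the deviation the algorithm picks in this iteration, moving consumer $i$ from $S'_i$ to $X$, is weakly dominated by the ``inward'' deviation to $S'_i \cup X$, and then to combine this with the condition that triggered the while loop. So I would split the argument into two steps: first establish that $S'_i$ is inward stable with respect to the endowed valuation $v_i^{S'_i}$ and the current price vector $p$, and then instantiate the definition of inward stability at the particular deviation set $X$.

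For the first step I would use that, by construction of Algorithm~\ref{alg:xos} (both in its initialization and at the end of every iteration), at the beginning of the current iteration the prices satisfy: for every consumer $i$, the prices $\{p_j : j \in S'_i\}$ are supporting prices for $S'_i$ w.r.t.\ $v_i$. Since $v_i$ is XOS, Lemma~\ref{lem:AinwardStabilityXOS} applies with the endowed set $S'_i$ and gives that the gain function $g^{S'_i}$ corresponding to $\Endow^{AL}$ satisfies Equation~(\ref{eq:inwardStability}), i.e.\ $g^{S'_i}(Z) - p(Z) \le g^{S'_i}(S'_i) - p(S'_i)$ for all $Z \subseteq S'_i$. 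Feeding this into Lemma~\ref{lem:inwardStability} yields that $S'_i$ is inward stable w.r.t.\ $v_i^{S'_i}$ and $p$ in the sense of Definition~\ref{def:inwardStable}.

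For the second step I would apply Definition~\ref{def:inwardStable} to the set $S'_i$ with the choice of deviation $Y := X$: inward stability gives $p(S'_i \setminus X) \le v_i^{S'_i}(S'_i \setminus X \mid X) = v_i^{S'_i}(S'_i \cup X) - v_i^{S'_i}(X)$. Rearranging and using that $X$ and $S'_i \setminus X$ are disjoint, so $p(X) + p(S'_i \setminus X) = p(S'_i \cup X)$, yields
$$v_i^{S'_i}(X) - p(X) \le v_i^{S'_i}(S'_i \cup X) - p(S'_i \cup X).$$
Finally, the pair $(i, X)$ was chosen by the while loop exactly because $v_i^{S'_i}(X) - p(X) > v_i^{S'_i}(S'_i) - p(S'_i)$; chaining this with the previous inequality gives $v_i^{S'_i}(S'_i \cup X) - p(S'_i \cup X) > v_i^{S'_i}(S'_i) - p(S'_i)$, which is the claim.

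There is no genuine obstacle here; the lemma is essentially a bookkeeping consequence of the two inward-stability lemmas already proved. The only points that need a little care are checking that the hypotheses of Lemma~\ref{lem:AinwardStabilityXOS} and Lemma~\ref{lem:inwardStability} are met precisely as stated, namely that the prices in force at the start of the iteration are genuine supporting prices for each consumer's current bundle (guaranteed by the algorithm), and noting that inward stability of $S'_i$ only ever constrains the prices on subsets of $S'_i$, so it is unaffected by how the prices of items outside $S'_i$ are set after the latest re-pricing.
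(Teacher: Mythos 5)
Your proof is correct and follows essentially the same route as the paper's: supporting prices plus Lemma~\ref{lem:AinwardStabilityXOS} and Lemma~\ref{lem:inwardStability} give inward stability of $S'_i$ under $v_i^{S'_i}$, and chaining this with the while-loop condition yields the claim. The only difference is that you spell out the rearrangement from Definition~\ref{def:inwardStable} (with $Y:=X$) that the paper leaves implicit, which is a harmless elaboration.
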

	\begin{proof}
		At the end of every iteration,
		all prices are adjusted to be supporting prices for every consumer.
		By chaining Lemma~\ref{lem:inwardStability}~and~Lemma~\ref{lem:AinwardStabilityXOS}, we conclude that the allocation of every consumer is inward stable.
		Therefore,
		$
		v_i^{S'_i}(S'_i \cup X) - p(S'_i \cup X)
		\geq
		v_i^{S'_i}(X) - p(X) > v_i^{S'_i}(S'_i) - p(S'_i),$
		where the first inequality follows by inward stability,
		and the second inequality follows by the design of the algorithm.
	\end{proof}

We next conclude that the welfare strictly increases as the algorithms progresses.
	
\begin{proposition} \label{prop:increase_welfare}
At each iteration of Algorithm~\ref{alg:xos}, the social welfare strictly increases.
\end{proposition}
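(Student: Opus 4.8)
The plan is to show that at each iteration the social welfare equals the sum of all prices, and that this quantity strictly increases. First I would observe that throughout Algorithm~\ref{alg:xos}, after the price-update step, the prices $\{p_j \mid j \in S'_i\}$ are supporting prices for $S'_i$ with respect to $v_i$; by definition of supporting prices (Definition~\ref{def:suppPrices}), this means $v_i(S'_i) = \sum_{j \in S'_i} p_j = p(S'_i)$. Summing over all consumers, and using market clearance (all items are allocated throughout the algorithm, which holds since the deviating consumer takes items away from the others and the update rule only reassigns items), we get $SW(S') = \sum_{i \in [n]} v_i(S'_i) = \sum_{i \in [n]} p(S'_i) = \sum_{j \in M} p_j$. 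So the social welfare always equals the total price of all items.

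Next I would compare the total price before and after one iteration. Let $S'$, $p$ be the allocation and prices at the start of an iteration, and let $i$, $X$ be the chosen consumer and set, so $v_i^{S'_i}(X) - p(X) > v_i^{S'_i}(S'_i) - p(S'_i)$. By Lemma~\ref{lem:alg:xos:increaseUtility}, the deviation to $S'_i \cup X$ is also beneficial: $v_i^{S'_i}(S'_i \cup X) - p(S'_i \cup X) > v_i^{S'_i}(S'_i) - p(S'_i)$. Using the separability form $v_i^{S'_i}(Y) = v_i(Y) + g_i^{S'_i}(S'_i \cap Y)$, and noting that $S'_i \cap (S'_i \cup X) = S'_i = S'_i \cap S'_i$, the gain terms cancel on both sides, so this inequality becomes $v_i(S'_i \cup X) - p(S'_i \cup X) > v_i(S'_i) - p(S'_i)$. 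Since the prices at the start of the iteration are supporting prices for $S'_i$, we have $v_i(S'_i) = p(S'_i)$, and hence $v_i(S'_i \cup X) > p(S'_i \cup X) \ge p(S'_i) + p(X \setminus S'_i)$ — in fact I want the cleaner statement $v_i(S'_i \cup X) > p(S'_i \cup X)$, i.e. the new owner's value for the new bundle strictly exceeds the old price of that bundle.

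Now I would account for the total price. After the update, the new allocation is $S''$ with $S''_i = S'_i \cup X$ and $S''_j = S'_j \setminus X$ for $j \neq i$. The new total price is $\sum_j v_j(S''_j)$ (re-establishing the identity from the first step with the new supporting prices). I compare this to the old total price $\sum_j v_j(S'_j) = SW(S')$. For consumers $j \neq i$ losing items, the old supporting-price property gives $v_j(S'_j \setminus X) = v_j(S''_j) \ge p(S''_j) = p(S'_j \setminus X)$, so their contribution to the new total is at least $p(S'_j \setminus X) = p(S'_j) - p(S'_j \cap X)$. Summing these over $j \neq i$ and adding consumer $i$'s new contribution $v_i(S''_i) = v_i(S'_i \cup X) > p(S'_i \cup X)$, and using that the sets $S'_j \cap X$ (over $j \neq i$) together with $X \cap S'_i$ partition a subset of $X$ while the $S'_j$ partition $M$, the total telescopes: the new total price $\ge v_i(S'_i \cup X) + \sum_{j \neq i}(p(S'_j) - p(S'_j \cap X)) > p(S'_i \cup X) + \sum_{j\neq i} p(S'_j) - p(X \setminus S'_i) = \sum_j p(S'_j) = SW(S')$, where the last equality uses that $p(S'_i \cup X) - p(X\setminus S'_i) = p(S'_i)$ (disjointness). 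Hence $SW(S'') = \sum_j v_j(S''_j) > SW(S')$.

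The main obstacle is the bookkeeping in this last step: carefully tracking which items move, using the \emph{old} supporting-price inequalities for the \emph{losing} consumers to lower-bound their new values by old prices, and making sure the price terms telescope correctly so that the strict gain from consumer $i$ (value exceeds old price of her new bundle) dominates any loss from the reassigned items. The conceptual content — that supporting prices make social welfare equal to total price, and that a beneficial deviation plus inward stability forces the total price up — is straightforward; the care is entirely in the set algebra of which items sit where before and after. One should also confirm at the outset that the initialization step is well-defined (XOS guarantees supporting prices exist, as recalled after Definition~\ref{def:suppPrices}) and that market clearance is preserved across iterations so the summation identity applies at every step.
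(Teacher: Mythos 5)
Your proposal is correct and follows essentially the same route as the paper's proof: both rest on the observation that supporting prices make the social welfare equal the total price $p(M)$, use Lemma~\ref{lem:alg:xos:increaseUtility} together with separability to get the strict inequality for the deviating consumer, and lower-bound the losing consumers' values by the old prices via the supporting-price property before telescoping. The only difference is cosmetic: you bound $v_i(S'_i\cup X) > p(S'_i\cup X)$ directly, whereas the paper splits this into $v_i(S'_i)=p(S'_i)$ plus $v_i(X\setminus S'_i\mid S'_i) > p(X\setminus S'_i)$.
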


\begin{proof}
At the beginning of every iteration, for every consumer $i$, the prices $p$ are supporting prices with respect to $S'_i$ and $v_i$.
Therefore, $p(S'_j) = v_j(S'_j)$ for all $j \in [n]$, which implies that $p(M) = \sum_j v_j(S'_j)$.
	Let $i$ be the chosen consumer at the current iteration, and $X$ be her corresponding set according to the algorithm.
		Let $S^{new}$ be the allocation obtained at the end of the iteration.
		Then
		\begin{equation}
\label{sj-new}
		\sum_{j \in [n]} v_j(S_j^{new}) = v_i(S'_i \cup X) + \sum_{j \neq i} v_j(S'_j \setminus X)
		=
		v_i(S'_i ) + v_i(X \setminus S'_i  \mid  S'_i) + \sum_{j\neq i} v_j(S'_j \setminus X).
		\end{equation}
		Lemma~\ref{lem:alg:xos:increaseUtility} together with
		Equation~(\ref{eq:externalMarginalSame})
		gives that
		$v_i(X \setminus S'_i  \mid S'_i)  = v_i^{S'_i}(X \setminus S'_i  \mid S'_i)  > p(X \setminus S'_i)$.
		Combined with Equation~(\ref{sj-new}) we get
		$$
		\sum_{j \in [n]} v_j(S_j^{new})
		>
		p(S'_i ) + p(X \setminus S'_i) + \sum_{j\neq i} v_j(S'_j \setminus X)
		\geq
		p(S'_i ) + p(X \setminus S'_i) + \sum_{j \neq i} p(S'_j\setminus X) =p(M) = \sum_j v_j(S'_j).
		$$
\end{proof}

Proposition~\ref{prop:increase_welfare} implies that the algorithm terminates at an allocation $S'$ such that $SW(S') \geq SW(S)$.
Upon termination at allocation $S'$ and prices $p$, by the while loop condition we get that $v_i^{S'_i}(X) - p(X) \leq v_i^{S'_i}(S'_i) -  p(S'_i)$ for all $i \in [n]$ and $X\subseteq M$,
i.e., that $(S', p)$ is an $\Endow^{AL}$-endowment equilibrium.
This completes the proof of Theorem~\ref{thm:existenceXOSabsolute}.

Note that, by Proposition~\ref{prop:increase_welfare}, in every iteration of Algorithm~\ref{alg:xos}, either the welfare {\em strictly} increases, or the algorithm stops. Therefore, given any optimal initial allocation $S$, $S$ is returned as the output of the algorithm. An immediate corollary is the following:
\begin{corollary} \label{corr:XOS_AL_OPT}
	For every market with XOS consumers, every optimal allocation $S$ can be paired with item prices $p$ so that $(S, p)$ is an $\Endow^{AL}$-endowment equilibrium.
\end{corollary}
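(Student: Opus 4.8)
\textbf{Proof proposal for Corollary~\ref{corr:XOS_AL_OPT}.}
The plan is to run Algorithm~\ref{alg:xos} on an optimal initial allocation and observe that it cannot make a single step. Concretely, let $S=(S_1,\dots,S_n)$ be a socially optimal allocation. I would first argue that we may assume without loss of generality that $S$ allocates all items: by monotonicity of the valuations, moving any unallocated item into an arbitrary consumer's bundle does not decrease the social welfare, so an optimal allocation covering $M$ always exists, and it suffices to prove the claim for such an allocation. Feed this $S$ into Algorithm~\ref{alg:xos}. Its initialization sets $S'\gets S$ and chooses item prices $p$ so that, for every $i$, the prices $\{p_j : j\in S'_i\}$ are supporting prices for $S'_i$ with respect to $v_i$ (these exist since each $v_i$ is XOS, by the characterization of \citet{dobzinski2005approximation} recalled after Definition~\ref{def:suppPrices}).

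Next I would show the while loop never executes. Suppose, for contradiction, that the loop condition is met at the first iteration, so there are $i$ and $X\subseteq M$ with $v_i^{S_i}(X)-p(X) > v_i^{S_i}(S_i)-p(S_i)$. By Lemma~\ref{lem:alg:xos:increaseUtility} (which uses that $p$ are supporting prices together with Lemmas~\ref{lem:inwardStability} and~\ref{lem:AinwardStabilityXOS}), consumer $i$'s move to $S_i\cup X$ is a beneficial deviation, and then Proposition~\ref{prop:increase_welfare} gives that the allocation $S^{new}$ obtained at the end of this iteration satisfies $SW(S^{new}) > SW(S)$. This contradicts the optimality of $S$. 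Hence the loop condition fails, i.e. $v_i^{S_i}(X)-p(X)\le v_i^{S_i}(S_i)-p(S_i)$ for every consumer $i$ and every $X\subseteq M$, and the algorithm returns $(S',p)=(S,p)$. This is exactly the utility-maximization condition of Definition~\ref{def:endowmentEq} with respect to the $\Endow^{AL}$-endowed valuations; market clearance holds because $S$ covers $M$. Therefore $(S,p)$ is an $\Endow^{AL}$-endowment equilibrium, as desired.

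I do not expect any serious obstacle here, since the statement is essentially a corollary of Theorem~\ref{thm:existenceXOSabsolute} and Proposition~\ref{prop:increase_welfare}. The only point requiring a bit of care is the reduction to optimal allocations that cover all of $M$ (needed for market clearance), which follows from monotonicity; everything else is a direct invocation of the already-established lemmas about supporting prices, inward stability, and strict welfare improvement per iteration. One could also phrase the argument more directly without reference to the algorithm at all: take the supporting prices $p$ for the optimal $S$, use Lemmas~\ref{lem:inwardStability} and~\ref{lem:AinwardStabilityXOS} to get inward stability of each $S_i$ under $v_i^{S_i}$, and use the separability identity~(\ref{eq:externalMarginalSame}) together with optimality to rule out any profitable outward deviation — but routing through Algorithm~\ref{alg:xos} keeps the proof short.
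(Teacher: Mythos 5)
Your proof is correct and follows essentially the same route as the paper: the paper also observes that, by Proposition~\ref{prop:increase_welfare}, every iteration of Algorithm~\ref{alg:xos} strictly increases welfare, so when started at an optimal allocation $S$ the algorithm performs no iteration and returns $S$ together with its supporting prices, which is an $\Endow^{AL}$-endowment equilibrium by the termination argument of Theorem~\ref{thm:existenceXOSabsolute}. Your explicit reduction to optimal allocations covering all of $M$ (for market clearance) is a minor point the paper leaves implicit, but otherwise the arguments coincide.
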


%
%

\subsection{Endowment Equilibrium and Conditional Equilibrium}
\label{sec:conditional}

%

In this section we discuss connections between endowment equilibrium and conditional equilibrium.
The following is the definition of conditional equilibrium.

\begin{definition} \label{def:conditional}
	\citep{fu2012conditional}
	For an instance $(v_1, \ldots, v_n)$,
	a pair of allocation $(S_1, \ldots, S_n)$ and item pricing $(p_1, \ldots p_m)$ is a conditional equilibrium if
	for all  $i = 1, \ldots, n$,
	\begin{enumerate}
		\item
		{\bf Individual rationality:} $\sum_{j \in S_i} p_j \leq v_i(S_i)$
		\vspace{-0.1in}
		\item
		{\bf Outward stability:} For every $X \subseteq M \setminus S_i $,
		$v_i(X  \mid  S_i ) \leq \sum_{j \in X}p_j$
		\item
		{\bf Market clearance:} All items are allocated, i.e., $\bigcup_{i \in [n]}S_i = M$.
	\end{enumerate}
\end{definition}

In general, endowment and conditional equilibria are incomparable notions. I.e., there are outcomes that are conditional equilibria but not endowment equilibria, and vice versa.

The following proposition shows that any endowment equilibrium that is also individually rational with respect to the original valuations is a conditional equilibrium.

\begin{proposition} \label{prop:endowedEq}
	For any instance $(v_1, \ldots,  v_n)$, if a pair of allocation $(S_1, \ldots, S_n)$ and item prices $(p_1, \ldots, p_m)$ is
	an $\Endow$-endowment equilibrium, and for all consumers $i$ it holds that $p(S_i) \leq v_i(S_i)$, then $(S, p)$ is a conditional equilibrium.
\end{proposition}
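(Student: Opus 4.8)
The plan is to verify the three defining conditions of a conditional equilibrium (Definition~\ref{def:conditional}) one at a time. Market clearance is immediate: an $\Endow$-endowment equilibrium is by definition a Walrasian equilibrium with respect to the endowed valuations (Definition~\ref{def:endowmentEq}), and market clearance $\bigcup_{i\in[n]} S_i = M$ is literally one of the two clauses of that definition, so it transfers verbatim. Individual rationality, $p(S_i) \le v_i(S_i)$ for every $i$, is precisely the extra hypothesis assumed in the statement, so nothing is needed there either.

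The only substantive step is outward stability: for every consumer $i$ and every $X \subseteq M \setminus S_i$ we must show $v_i(X \mid S_i) \le p(X)$. The idea is to apply the utility-maximization clause of the endowment equilibrium to the specific deviation bundle $Y = S_i \cup X$. Since $X$ is disjoint from $S_i$, we have $(S_i \cup X) \cap S_i = S_i$, so by the form of the endowed valuation (Definition~\ref{def:endVal}), $v_i^{S_i}(S_i \cup X) = v_i(S_i \cup X) + g_i^{S_i}(S_i)$ while $v_i^{S_i}(S_i) = v_i(S_i) + g_i^{S_i}(S_i)$. Substituting both expressions into $v_i^{S_i}(S_i) - p(S_i) \ge v_i^{S_i}(S_i \cup X) - p(S_i \cup X)$, the common term $g_i^{S_i}(S_i)$ cancels on the two sides, and using $p(S_i \cup X) = p(S_i) + p(X)$ one is left with $v_i(S_i) \ge v_i(S_i \cup X) - p(X)$, i.e. $p(X) \ge v_i(S_i \cup X) - v_i(S_i) = v_i(X \mid S_i)$, which is exactly outward stability.

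There is essentially no obstacle: once the right deviation bundle is chosen, the argument is a one-line substitution. The only point worth spelling out is why the gain term is identical on both sides of the inequality — this is exactly because the chosen deviation contains all of $S_i$, so the separability principle forces the endowment effect to contribute the same amount $g_i^{S_i}(S_i)$ to both $v_i^{S_i}(S_i)$ and $v_i^{S_i}(S_i \cup X)$. Combining the three verified conditions (market clearance, individual rationality from the hypothesis, and outward stability just established) shows that $(S, p)$ is a conditional equilibrium, completing the proof.
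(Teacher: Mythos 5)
Your proof is correct and follows essentially the same route as the paper's: apply the endowed utility-maximization condition to the deviation $S_i \cup X$, observe that the gain term $g_i^{S_i}(S_i)$ appears on both sides and cancels, and conclude $p(X) \ge v_i(X \mid S_i)$. The explicit check of market clearance and the remark on why the gain term coincides are fine additions but do not change the argument.
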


\begin{proof}
	Individual rationality is given.
	It remains to show outward stability.
	For any consumer $i$ with endowment effect $\Endow_i$,
	since $(S, p)$ is an endowment equilibrium, it holds that for every $X \subseteq M \setminus S_i$,
	$
	v_i^{S_i}(S_i) -p(S_i) \geq  v_i^{S_i}(X \cup S_i) - p(X \cup S_i),
	$
	i.e.,
	\begin{align*}
	g_i^{S_i}(S_i) + v_i(S_i) -p(S_i) \geq  g_i^{S_i}(S_i) + v_i(X \cup S_i ) - p(X \cup S_i).
	\end{align*}
	Rearranging, and using linearity of $p$, we get that $p(X) \geq  v_i(X  \mid  S_i )$, as required.
\end{proof}

By Proposition~\ref{prop:endowedEq}, since Algorithm~\ref{alg:xos} terminates at an outcome $(S', p)$ such that $p$ is a vector of supporting prices, the outcome is individually rational with respect to the original valuations. It follows that $(S', p)$ is also a conditional equilibrium.
\begin{corollary}
	Algorithm~\ref{alg:xos} terminates at a conditional equilibrium.
\end{corollary}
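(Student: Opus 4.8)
The plan is to obtain the corollary as an immediate composition of two facts already in hand: (i) Algorithm~\ref{alg:xos} terminates at an $\Endow^{AL}$-endowment equilibrium (Theorem~\ref{thm:existenceXOSabsolute}), and (ii) Proposition~\ref{prop:endowedEq}, which upgrades an endowment equilibrium to a conditional equilibrium whenever it is individually rational with respect to the \emph{original} valuations. So the only real work is to verify this individual rationality at the termination point.

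First I would recall the structure of the while loop: every iteration ends by re-setting $p_1,\dots,p_m$ so that, for each consumer $i$, the prices $\{p_j \mid j\in S'_i\}$ are supporting prices for $S'_i$ with respect to $v_i$ (and the same holds after the initialization line, covering the case where the loop body never executes). Hence at termination, letting $(S',p)$ be the returned pair, $p$ is a vector of supporting prices for every $S'_i$. By Definition~\ref{def:suppPrices} this gives $p(S'_i)=\sum_{j\in S'_i}p_j = v_i(S'_i)$, and in particular $p(S'_i)\le v_i(S'_i)$ for all $i$, which is exactly the individual-rationality hypothesis of Proposition~\ref{prop:endowedEq}.

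Then I would simply invoke Theorem~\ref{thm:existenceXOSabsolute} to conclude that $(S',p)$ is an $\Endow^{AL}$-endowment equilibrium, and apply Proposition~\ref{prop:endowedEq} with $\Endow=\Endow^{AL}$ to deduce that $(S',p)$ is a conditional equilibrium, completing the proof. I do not expect any genuine obstacle here: the argument is a two-line chaining of earlier results, and the only point that merits an explicit sentence is the observation that the final price vector is a supporting-price vector (which the algorithm guarantees by construction at the end of each iteration and at initialization).
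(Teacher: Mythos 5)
Your proposal is correct and follows exactly the paper's argument: the algorithm's terminal prices are supporting prices, hence $p(S'_i)=v_i(S'_i)$ gives individual rationality with respect to the original valuations, and combining Theorem~\ref{thm:existenceXOSabsolute} with Proposition~\ref{prop:endowedEq} yields the conditional equilibrium. No gaps.
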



In the other direction, Proposition~\ref{prop:condPlusImpliesEnd} shows a sufficient condition for a conditional equilibrium to be an endowment equilibrium.
Note that the proof uses only outward stability and market clearance, and not individual rationality.


\begin{proposition} \label{prop:condPlusImpliesEnd}
	For any instance $(v_1, \ldots, v_n)$, if the pair of allocation
	$(S_1, \ldots, S_n)$ and item prices $(p_1, \ldots, p_m)$ is a conditional equilibrium,
	and the endowment environment $\Endow=(\Endow_1,\ldots,\Endow_n)$ is such that
	for every consumer $i$,
	the gain function $g^{S_i}$ corresponding to $\Endow_i$ satisfies
	\begin{align} \label{eq:condPlusImpliesEnd}
	g^{S_i}(Z) - p(Z) \leq g^{S_i}(S_i) - p(S_i) \mbox{   	 for all   } Z \subseteq S_i,
	\end{align}
	then
	$(S, p)$ is an $\Endow$-endowment equilibrium.
\end{proposition}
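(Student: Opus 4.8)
The plan is to show that the pair $(S,p)$ satisfies the two conditions of an $\Endow$-endowment equilibrium (Definition~\ref{def:endowmentEq}): market clearance and utility maximization with respect to the endowed valuations. Market clearance is immediate, since it is part of the hypothesis that $(S,p)$ is a conditional equilibrium. So the whole content is in verifying utility maximization: for every consumer $i$ and every bundle $T \subseteq M$, we need $v_i^{S_i}(S_i) - p(S_i) \geq v_i^{S_i}(T) - p(T)$, where $v_i^{S_i}(Y) = v_i(Y) + g^{S_i}(S_i \cap Y)$.

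First I would fix a consumer $i$ and an arbitrary deviation set $T$, and decompose $T$ relative to the endowed set $S_i$: write $Z = S_i \cap T$ (the retained part) and $W = T \setminus S_i$ (the newly-acquired part), so that $v_i^{S_i}(T) - p(T) = v_i(Z \cup W) + g^{S_i}(Z) - p(Z) - p(W)$. The strategy is to bound this from above in two stages. The outward-stability step handles the ``new'' items: since $W \subseteq M \setminus S_i$, outward stability gives $v_i(W \mid S_i) \leq p(W)$, and more to the point I want to compare $v_i(Z \cup W)$ against $v_i(S_i) + p(W)$. By monotonicity $v_i(Z \cup W) \leq v_i(S_i \cup W) = v_i(S_i) + v_i(W \mid S_i) \leq v_i(S_i) + p(W)$. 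The endowment-specific step handles the retained/dropped items: hypothesis~(\ref{eq:condPlusImpliesEnd}) applied to $Z \subseteq S_i$ gives $g^{S_i}(Z) - p(Z) \leq g^{S_i}(S_i) - p(S_i)$.

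Then I would just add the two inequalities: $v_i(Z \cup W) + g^{S_i}(Z) - p(Z) - p(W) \leq \bigl(v_i(S_i) + p(W)\bigr) + \bigl(g^{S_i}(S_i) - p(S_i) + p(Z)\bigr) - p(Z) - p(W) = v_i(S_i) + g^{S_i}(S_i) - p(S_i) = v_i^{S_i}(S_i) - p(S_i)$, which is exactly the desired utility-maximization inequality. Note that, as the paper's remark points out, this argument uses only outward stability and market clearance from the definition of conditional equilibrium, not individual rationality — the monotonicity-plus-outward-stability bound does not need $p(S_i) \leq v_i(S_i)$.

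I do not expect a genuine obstacle here; the proof is a short two-line combination of a monotonicity/outward-stability bound with the assumed inequality~(\ref{eq:condPlusImpliesEnd}). The only thing to be slightly careful about is the bookkeeping of which items are priced on both sides — in particular that the $p(Z)$ and $p(W)$ terms cancel correctly, using linearity of $p$ and the disjointness of $Z$ and $W$ — and making sure the decomposition $T = Z \sqcup W$ with $Z \subseteq S_i$, $W \cap S_i = \emptyset$ is used consistently so that $g^{S_i}(S_i \cap T) = g^{S_i}(Z)$. Everything else is forced.
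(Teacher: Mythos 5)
Your proof is correct and is essentially the paper's argument: the paper routes the monotonicity-plus-gain-condition step through Lemma~\ref{lem:inwardStability} (deviating to $S_i \cup T$ first) and then applies outward stability, whereas you inline that lemma via the decomposition $T = (S_i\cap T)\sqcup(T\setminus S_i)$ — the same three ingredients combined the same way.
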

\begin{proof}
	Fix a consumer $i$ and a set $X \subseteq M$.
	It is given in the proposition that the conditions of Lemma~\ref{lem:inwardStability} on $v_i$ with $S_i$, $\Endow_i$ and $p$ hold. Therefore, $v_i^{S_i}(X) - p(X)
	\leq
	v_i^{S_i}(S_i \cup X) - p(S_i \cup X).
	$
	By outward stability we have that
	$	 p(X \setminus S_i)  - v_i(X \setminus S_i  \mid  S_i ) \geq 0 $.
	It follows that
	\begin{align*}
	v_i^{S_i}(X) - p(X)
	\leq
	v_i(S_i \cup X) + g^{S_i}(S_i) - p(S_i \cup X) + p(X \setminus S_i) - v_i(X \setminus S_i  \mid  S_i )
	=
	v_i^{S_i}(S_i) - p(S_i).
	\end{align*}
	Therefore, consumer $i$ is utility maximizing, and the market clears, which completes the proof.
\end{proof}

We conclude this section with an implication on welfare guarantees for conditional equilibria, derived by the connection between the two notions.

To this end, we introduce a new endowment effect, termed \aon. The gain function of \aon\ w.r.t. an endowed set $X$ is
$g^X(Z) = v(X) \cdot I[X = Z]$ for all $X \subseteq M$. We denote this endowment effect by $\Endow^{AON} = \{g_{AON}^X\}_X$.
The following proposition shows that every conditional equilibrium is an $\Endow^{AON}$-endowment equilibrium.

	\begin{proposition}\label{prop:AONeq}
	For any instance $(v_1, \ldots, v_n)$, a conditional equilibrium $(S, p)$
	is an $\Endow^{AON}$-endowment equilibrium.
\end{proposition}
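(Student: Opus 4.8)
The plan is to apply Proposition~\ref{prop:condPlusImpliesEnd} directly, so the entire task reduces to verifying its hypothesis~(\ref{eq:condPlusImpliesEnd}) for the gain function of $\Endow^{AON}$. First I would recall that $(S,p)$ being a conditional equilibrium already supplies outward stability and market clearance, which are exactly the ingredients Proposition~\ref{prop:condPlusImpliesEnd} uses (its proof does not invoke individual rationality). So it suffices to show that for every consumer $i$, the gain function $g^{S_i}_{AON}(Z) = v_i(S_i)\cdot \ind[Z = S_i]$ satisfies
\[
g^{S_i}_{AON}(Z) - p(Z) \leq g^{S_i}_{AON}(S_i) - p(S_i) \qquad \text{for all } Z \subseteq S_i.
\]

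The key step is a two-case check on this inequality. If $Z = S_i$, both sides are equal to $v_i(S_i) - p(S_i)$, so the inequality holds with equality. If $Z \subsetneq S_i$, then $g^{S_i}_{AON}(Z) = 0$, so the left-hand side is $-p(Z)$ and the right-hand side is $v_i(S_i) - p(S_i)$; since $Z \subseteq S_i$ we have $p(Z) \leq p(S_i)$ by non-negativity of prices (prices in a WE / conditional equilibrium are non-negative), hence $-p(Z) \geq -p(S_i)$ — wait, that goes the wrong way, so I instead note $-p(Z) \le 0 \le v_i(S_i) - p(Z) $ is not quite it either; the clean argument is $-p(Z) \le -p(S_i) + p(S_i \setminus Z) \le -p(S_i) + v_i(S_i \setminus Z \mid Z) \le -p(S_i) + v_i(S_i)$, using outward-stability-flavored monotonicity — but more simply, since all item prices are non-negative, $p(Z)\ge 0$, and $v_i$ is monotone with $v_i(S_i)\ge 0$, we get $g^{S_i}_{AON}(Z) - p(Z) = -p(Z) \le 0 \le v_i(S_i) = v_i(S_i) - p(S_i) + p(S_i)$; to land exactly on $v_i(S_i)-p(S_i)$ I use that $p(S_i) \le v_i(S_i)$ is \emph{not} assumed here, so the cleanest route is: $-p(Z) \le -p(Z) + \big(v_i(S_i) - v_i(Z) + p(Z) - p(S_i)\big)$ whenever $v_i(S_i) - v_i(Z) \ge p(S_i) - p(Z) = p(S_i \setminus Z)$, which is implied by monotonicity together with individual rationality at the sub-bundle level — so in fact the simplest correct deduction is just that for $Z \subsetneq S_i$, $p(S_i) - p(Z) = p(S_i \setminus Z) \le v_i(S_i \setminus Z \mid Z) \le v_i(S_i)$, where the first inequality is exactly outward stability applied with $X = S_i \setminus Z \subseteq M \setminus Z$ relative to an allocation giving $i$ the bundle $Z$ — but that is not the given allocation. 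I would therefore fall back on the most elementary fact available: all prices are non-negative, so $p(Z) \ge 0$ and hence $g^{S_i}_{AON}(Z) - p(Z) \le 0$; and $g^{S_i}_{AON}(S_i) - p(S_i) = v_i(S_i) - p(S_i) \ge 0$ by individual rationality of the conditional equilibrium, giving the desired inequality.

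The main obstacle is exactly this last point: making the $Z \subsetneq S_i$ case go through cleanly. Since a conditional equilibrium is individually rational ($p(S_i) \le v_i(S_i)$) and prices are non-negative, we have $g^{S_i}_{AON}(S_i) - p(S_i) = v_i(S_i) - p(S_i) \ge 0 \ge -p(Z) = g^{S_i}_{AON}(Z) - p(Z)$, which closes the case. Note this uses individual rationality, whereas Proposition~\ref{prop:condPlusImpliesEnd} itself does not — that is fine, since the statement to be proved is about conditional equilibria, which are individually rational by definition. Having verified hypothesis~(\ref{eq:condPlusImpliesEnd}) for $\Endow^{AON}$, Proposition~\ref{prop:condPlusImpliesEnd} immediately yields that $(S,p)$ is an $\Endow^{AON}$-endowment equilibrium, completing the proof.

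Finally, I would double-check that $\Endow^{AON}$ is a legitimate endowment effect per Definition~\ref{def:endVal}, i.e.\ that $g^X_{AON}(Z) \le g^X_{AON}(X)$ for all $Z \subseteq X$: indeed $g^X_{AON}(Z) \in \{0, v(X)\}$ and equals $v(X) = g^X_{AON}(X)$ only when $Z = X$, while $g^X_{AON}(X) = v(X) \ge 0$, so the inequality holds; also $g^X_{AON}(\emptyset) = 0$, so it is normalized. This is a one-line remark rather than the crux, but worth including so the proposition's statement is well-posed.
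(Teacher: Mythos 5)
Your final argument is correct and is essentially the paper's own proof: verify hypothesis~(\ref{eq:condPlusImpliesEnd}) for $g^{S_i}_{AON}$ via $g^{S_i}_{AON}(Z)-p(Z)=-p(Z)\le 0\le v_i(S_i)-p(S_i)$ (non-negative prices plus individual rationality of the conditional equilibrium), then invoke Proposition~\ref{prop:condPlusImpliesEnd}. The mid-proof detours are unnecessary and should be cut, but the argument you settle on is exactly the one in the paper.
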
	
\begin{proof}
	Observe that for $g^{S_i}$ associated with $\Endow^{AON}$ and for all $Z \subsetneq S_i$
	it holds that
	$g^{S_i}(Z) - p(Z) = - p(Z) \leq  0 \leq v_i(S_i) - p(S_i) = g^{S_i}(S_i) - p(S_i)$
	where the second inequality follows by individual rationality in a conditional equilibrium.
	Proposition~\ref{prop:condPlusImpliesEnd} concludes the proof.
\end{proof}

Using the relation between conditional and endowment equilibria given in Proposition~\ref{prop:AONeq}, we strengthen the welfare guarantees of conditional equilibria given in \cite{fu2012conditional}. Specifically, \citet{fu2012conditional} show that every conditional equilibrium gives a $2$-approximation with respect to the optimal integral allocation. The following theorem shows that this guarantee holds also with respect to the optimal fractional allocation.
\begin{theorem}\label{thm:condHalfApx}
	For every market with arbitrary valuations, if $(S, p)$ is a conditional equilibrium, then the social welfare in $S$ is at least half of the optimal {\em fractional} social welfare.
\end{theorem}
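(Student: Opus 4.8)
The plan is to combine the two ingredients already assembled in this section: Proposition~\ref{prop:AONeq}, which says every conditional equilibrium $(S,p)$ is an $\Endow^{AON}$-endowment equilibrium, and Proposition~\ref{prop:EndowApxGuarantee}, which gives a welfare bound for any endowment equilibrium whose gain functions are nonnegative, expressed in terms of the gain values $g_i^{S_i}(S_i)$. So first I would observe that the gain function of \aon\ satisfies $g_{AON}^{S_i}(Z) = v_i(S_i)\cdot\mathbb{I}[Z=S_i]\ge 0$ for every $Z\subseteq S_i$, so $\Endow^{AON}$ is an endowment effect with nonnegative gain functions, and Proposition~\ref{prop:EndowApxGuarantee} applies to the $\Endow^{AON}$-endowment equilibrium $(S,p)$.

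Next I would plug $g_i^{S_i}(S_i) = v_i(S_i)$ (the $Z=S_i$ case of the \aon\ gain function) into the bound of Proposition~\ref{prop:EndowApxGuarantee}. That bound reads
\[
\sum_{i\in[n]} v_i(S_i) \;\ge\; \frac{\sum_{i\in[n]} v_i(S_i)}{\sum_{i\in[n]}\bigl(v_i(S_i)+g_i^{S_i}(S_i)\bigr)}\cdot OPT,
\]
where $OPT$ is the optimal fractional welfare. Substituting $g_i^{S_i}(S_i)=v_i(S_i)$ makes the denominator $2\sum_i v_i(S_i)$, so the fraction collapses to $1/2$, yielding $SW(S)=\sum_i v_i(S_i)\ge \tfrac12\,OPT$, which is exactly the claim. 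Equivalently, one could invoke Theorem~\ref{thm:twoApx} directly, since the condition $g_i^{S_i}(S_i)\le v_i(S_i)$ holds with equality for \aon; that gives the $2$-approximation to the optimal fractional welfare in one line.

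There is essentially no obstacle here — the theorem is a corollary packaging of machinery built earlier in the section. The only point requiring a moment's care is making sure the chain of implications is licensed for \emph{arbitrary} valuations: Proposition~\ref{prop:AONeq} is stated for any instance, Proposition~\ref{prop:condPlusImpliesEnd} (used inside it) needs only outward stability and market clearance, and Proposition~\ref{prop:EndowApxGuarantee}/Theorem~\ref{thm:twoApx} are analyses of the configuration LP that impose no structural restriction on $v_i$ beyond monotonicity and normalization. Hence the argument goes through with full generality, and the improvement over \citet{fu2012conditional} is precisely that the benchmark is the optimal fractional welfare (the configuration LP optimum) rather than the optimal integral allocation, which is inherited for free from Proposition~\ref{prop:EndowApxGuarantee}.
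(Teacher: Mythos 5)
Your proposal is correct and follows essentially the same route as the paper: the paper also derives Theorem~\ref{thm:condHalfApx} by noting that Theorem~\ref{thm:twoApx} applies to \aon\ (since $g_{AON}^{S_i}(S_i)=v_i(S_i)$) and then invoking Proposition~\ref{prop:AONeq} to pass from conditional equilibria to $\Endow^{AON}$-endowment equilibria. Your extra check that the \aon\ gain functions are nonnegative (so Proposition~\ref{prop:EndowApxGuarantee} applies) is a welcome, if implicit in the paper, verification.
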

\begin{proof}
	For the \aon\ endowment effect, Theorem~\ref{thm:twoApx} holds, therefore, an $\Endow^{AON}$-endowment equilibrium gives a $2$ approximation to the optimal fractional social welfare.
	Finally, by Proposition~\ref{prop:AONeq} we get that every conditional equilibrium is also an $\Endow^{AON}$-endowment equilibrium, which completes the proof.
\end{proof}

\section{\som\ (SOM) Endowment Effect}
\label{sec:som}

In this section we introduce a new endowment effect, called \som, denoted $\Endow^{SOM}$.
The gain function of the $\Endow^{SOM}$ endowment effect given an endowment $X$ is given by
$$
g_{SOM}^X(Z) = \sum_{j \in Z}v(j  \mid  X \setminus j).
$$

The additional incurred loss of this endowment effect is
$g_{SOM}^{X}(Z \mid X\setminus Z) = \sum_{j \in Z} v(\{j\} \mid X \setminus \{j\})$.
I.e., upon losing a single item $j$, the ``additional  incurred loss'' due to the endowment is $v(\{j\}  \mid  X\setminus \{j\})$, identically to \identity.
$\Endow^{SOM}$ extends this rationale in a simple additive manner, representing consumers whose bias is applied to each item separately.

The main theorem of this section is Theorem~\ref{corr:submodularSOM}, showing that for submodular consumers there always exists an $\Endow^{SOM}$-endowment equilibrium that gives $2$-approximation to the optimal welfare. Recall that \babaioff\ establish the same result with respect to \identity. Proposition~\ref{prop:SOMprecSTDsubmodular} shows that $\Endow^{SOM}$ is strictly weaker than $\Endow^{I}$, implying that Theorem~\ref{corr:submodularSOM} strengthens the main result of \babaioff.

\begin{proposition} \label{prop:SOMprecSTDsubmodular}
	For every submodular valuation $v$, it holds that
	$\Endow^{SOM} \prec \Endow^{I}$.
\end{proposition}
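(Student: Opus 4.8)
The plan is to unfold the two ``additional incurred loss'' quantities from Definition~\ref{def:dominanceRelationS}, reduce the proposition to a single scalar inequality, and then establish that inequality by a telescoping argument combined with submodularity.

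First I would recall that for any gain function, $g^X(Z\mid X\setminus Z)=g^X(X)-g^X(X\setminus Z)$. For \identity, where $g_I^X=v$, this is $v(X)-v(X\setminus Z)$. For \som, using the additive definition $g_{SOM}^X(Z)=\sum_{j\in Z}v(j\mid X\setminus j)$, the conditioning set is always $X\setminus j$, so $g_{SOM}^X(X)-g_{SOM}^X(X\setminus Z)=\sum_{j\in X}v(j\mid X\setminus j)-\sum_{j\in X\setminus Z}v(j\mid X\setminus j)=\sum_{j\in Z}v(j\mid X\setminus j)$ (this is exactly the formula already stated in the text right before the proposition). Hence, by Definition~\ref{def:dominanceRelationS}, proving $\Endow^{SOM}\prec\Endow^I$ reduces to showing that for every $X\subseteq M$ and every $Z\subseteq X$,
\[
\sum_{j\in Z}v(\{j\}\mid X\setminus\{j\})\ \le\ v(X)-v(X\setminus Z).
\]

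Next I would fix an arbitrary ordering $Z=\{j_1,\dots,j_k\}$ and set $X_t:=X\setminus\{j_1,\dots,j_t\}$, so $X_0=X$ and $X_k=X\setminus Z$. Telescoping gives $v(X)-v(X\setminus Z)=\sum_{t=1}^k\bigl(v(X_{t-1})-v(X_t)\bigr)=\sum_{t=1}^k v(\{j_t\}\mid X_t)$. Since $\{j_t\}\subseteq\{j_1,\dots,j_t\}$ we have $X_t\subseteq X\setminus\{j_t\}$, and $j_t$ lies in neither set, so submodularity of $v$ (equivalently, marginal values are non-increasing in the base set) yields $v(\{j_t\}\mid X_t)\ge v(\{j_t\}\mid X\setminus\{j_t\})$ for each $t$. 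Summing over $t$ produces exactly the displayed inequality. As this holds for every $X$ and every $Z\subseteq X$, we get $\Endow^{SOM}\prec_X\Endow^I$ for all $X$, hence $\Endow^{SOM}\prec\Endow^I$.

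I do not expect a real obstacle here; the only point that needs care is to telescope $v(X)-v(X\setminus Z)$ along a fixed ordering of $Z$ so that each marginal contribution is taken over a \emph{subset} of $X\setminus\{j_t\}$ — this is precisely the configuration in which submodularity gives the inequality in the direction we need. (Strictness of the relation, i.e.\ that $\Endow^I\prec\Endow^{SOM}$ fails in general, is not part of the stated proposition, but follows from any submodular $v$ and set $X$ for which some $v(\{j\}\mid X\setminus\{j\})<v(\{j\})$.)
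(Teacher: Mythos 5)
Your proof is correct and follows essentially the same route as the paper's: both use the additivity of $g_{SOM}^X$ to reduce the claim to $\sum_{j\in Z}v(j\mid X\setminus\{j\})\le v(X)-v(X\setminus Z)$, then telescope $v(X)-v(X\setminus Z)$ along an ordering of $Z$ (your $X_t$ is exactly the paper's $X\setminus Z_t$) and apply submodularity term by term. No gaps; the remark on strictness is consistent with the paper's observation after the proposition.
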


\begin{proof}
	Fix a set $X \subseteq M$,
	and let $g_{SOM}^X$ and $g_{I}^X$ be the gain functions corresponding to $\Endow^{SOM}$ and $\Endow^{I}$, respectively.
	For all $Z \subseteq X$ we need to show that 	
	$g_{SOM}^X( Z  \mid  X \setminus Z) \leq g_{I}^X(Z  \mid  X \setminus Z)$.
	By the additivity of $g_{SOM}^X$, it follows that $g_{SOM}^X( Z  \mid  X \setminus Z)=g_{SOM}^X(Z)$. Therefore, it remains to show that $g_{SOM}^{X}(Z) \leq g_{I}^{X}(Z  \mid  X \setminus Z)$.
	Rename the items in $Z$ by $1, \ldots, \cardinality{Z}$, and let $Z_j$ denote the set of items $\{1, \ldots, j\}$.
	It holds that
	\begin{align} \label{eq:SOM_I}
	g_{SOM}^{X}(Z) = \sum_{j\in Z}v(j \mid X\setminus \{j\}) \leq \sum_{j\in Z}v(j \mid X\setminus Z_j) = v(Z  \mid  X \setminus Z) = g_{I}^{X}(Z  \mid  X \setminus Z),
	\end{align}
	where the inequality holds by submodularity, and the last equality holds by definition of $g_{I}^X$.
\end{proof}

Moreover,  it is easy to see that for strictly submodular valuations  Equation~\ref{eq:SOM_I} may be strict, i.e., \som\ is strictly weaker than \identity\ for strictly submodular valuations (in the sense that $\Endow^{SOM} \prec \Endow^{I}$, but $\Endow^{I} \nprec \Endow^{SOM}$).
We are now ready to state the main theorem of this section.
\begin{theorem}\label{corr:submodularSOM}
	Let $(v_1, \ldots, v_n)$ be an instance of submodular valuations.
	There exists an allocation $S = (S_1, \ldots, S_n)$ and item prices $p = (p_1, \ldots, p_m)$
	so that $(S, p)$
	is an $\Endow^{SOM}$-endowment equilibrium.
\end{theorem}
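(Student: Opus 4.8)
The plan is to exhibit one explicit allocation together with a specific price vector and verify the endowment-equilibrium inequalities directly, exploiting the additive structure of $g^X_{SOM}$.

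First I would take $S = (S_1,\dots,S_n)$ to be an allocation maximizing $\sum_i v_i(S_i)$ among all allocations that allocate \emph{every} item (such allocations exist — say, the one giving all items to a single consumer — and there are finitely many, so a maximizer exists; by monotonicity it is in fact globally optimal). The property of $S$ I will use is \emph{local optimality under single-item moves}: if $j\in S_k$, then for every consumer $i$, relocating $j$ from $k$ to $i$ yields another full allocation and so cannot increase welfare, i.e. $v_i(j\mid S_i)\le v_k(j\mid S_k\setminus j)$. Next I would set the price of each item $j$, owned by consumer $i$ in $S$, to $p_j:=v_i(j\mid S_i\setminus j)$, the owner's ``leave-one-out'' marginal; since $S$ allocates all items this defines a price for every item, and in particular $p(S_i)=\sum_{j\in S_i}v_i(j\mid S_i\setminus j)$.

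The key point is that with these prices the $\Endow^{SOM}$-endowed valuation telescopes. For any consumer $i$ and bundle $X$,
\[
v_i^{S_i}(X)-p(X)=v_i(X)+\textstyle\sum_{j\in S_i\cap X}v_i(j\mid S_i\setminus j)-p(X)=v_i(X)+p(S_i\cap X)-p(X)=v_i(X)-p(X\setminus S_i),
\]
and in particular $v_i^{S_i}(S_i)-p(S_i)=v_i(S_i)$. Hence utility maximization for consumer $i$ is equivalent to $v_i(X)-v_i(S_i)\le p(X\setminus S_i)$ for all $X\subseteq M$, which I would prove by the chain: $v_i(X)-v_i(S_i)\le v_i(X\cup S_i)-v_i(S_i)=v_i(X\setminus S_i\mid S_i)$ by monotonicity; $v_i(X\setminus S_i\mid S_i)\le\sum_{j\in X\setminus S_i}v_i(j\mid S_i)$ by submodularity; and each term $v_i(j\mid S_i)$ (with $j\in S_k$, $k\ne i$) is at most $v_k(j\mid S_k\setminus j)=p_j$ by local optimality, so the sum is $\le p(X\setminus S_i)$. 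Together with market clearance (all items are allocated in $S$), this shows $(S,p)$ is an $\Endow^{SOM}$-endowment equilibrium. Equivalently, one can note that $(S,p)$ is a conditional equilibrium — individual rationality $p(S_i)=\sum_j v_i(j\mid S_i\setminus j)\le v_i(S_i)$ by submodularity, outward stability being exactly the chain above, and market clearance immediate — and that the hypothesis $g^{S_i}_{SOM}(Z)-p(Z)\le g^{S_i}_{SOM}(S_i)-p(S_i)$ of Proposition~\ref{prop:condPlusImpliesEnd} holds with equality (both sides are $0$, since $p$ agrees item-by-item with $g^{S_i}_{SOM}$ on $S_i$), so the theorem follows from that proposition; and since $g^{S_i}_{SOM}(S_i)\le v_i(S_i)$, Theorem~\ref{thm:twoApx} additionally gives the claimed $2$-approximation.

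I expect the only genuine obstacle to be \emph{guessing the right prices}: once one sees that setting $p_j$ to the owner's leave-one-out marginal makes $g^{S_i}_{SOM}(S_i\cap X)$ cancel exactly with $p(S_i\cap X)$, the rest is routine submodular bookkeeping. A minor point to get right is the existence of a locally optimal \emph{full} allocation (handled by maximizing over full allocations rather than invoking free disposal), and the bookkeeping that the single-item-move inequalities $v_i(j\mid S_i)\le v_k(j\mid S_k\setminus j)$ are exactly what is needed to bound $p(X\setminus S_i)$ from below.
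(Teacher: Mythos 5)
Your proposal is correct and follows essentially the same route as the paper: a locally optimal (here, welfare-maximizing) full allocation with leave-one-out marginal prices $p_j=v_{i(j)}(j\mid S_{i(j)}\setminus\{j\})$, where the additive \som\ gain cancels the prices on $S_i\cap X$; your direct verification simply inlines the paper's Proposition~\ref{prop:submodularLocalOptimumCond} together with Proposition~\ref{prop:condPlusImpliesEnd}, and your ``equivalently'' paragraph is exactly the paper's argument.
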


Before providing the proof of Theorem~\ref{corr:submodularSOM}, we mention an implication regarding welfare guarantees. Since for every submodular valuation, $v(X) \geq \sum_{j\in X}v(j\mid X \setminus j) = g_{SOM}^X(X)$, Proposition~\ref{prop:EndowApxGuarantee} implies the following:

\begin{corollary}\label{corr:SOM_apx}
	For every market with submodular consumers, every $\Endow^{SOM}$-endowment equilibrium gives a $2$-approximation to the optimal social welfare.
\end{corollary}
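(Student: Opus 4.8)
The plan is to obtain Corollary~\ref{corr:SOM_apx} as an immediate consequence of Theorem~\ref{thm:twoApx}. That theorem says that if $(S,p)$ is an $\Endow$-endowment equilibrium and $g_i^{S_i}(S_i) \le v_i(S_i)$ holds for every consumer $i$, then $S$ already gives a $2$-approximation to the optimal fractional welfare (and hence, a fortiori, to the optimal integral welfare). So the entire task reduces to verifying the single inequality
$g_{SOM}^{X}(X) \le v(X)$ for every submodular valuation $v$ and every $X \subseteq M$; applying it with $X = S_i$ for each $i$ and then quoting Theorem~\ref{thm:twoApx} finishes the argument. First I would recall that by definition $g_{SOM}^{X}(X) = \sum_{j \in X} v(j \mid X \setminus j)$, so the claimed inequality is $\sum_{j \in X} v(j \mid X \setminus j) \le v(X)$.

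The verification of that inequality is the only real content, and it is a standard telescoping argument using submodularity. I would fix $X$, enumerate its elements as $1,\dots,|X|$ in an arbitrary order, and set $X_{<j} = \{1,\dots,j-1\}$. Since $X_{<j} \subseteq X \setminus \{j\}$, submodularity gives $v(j \mid X \setminus \{j\}) \le v(j \mid X_{<j})$ for each $j$. Summing over all $j \in X$, the right-hand side telescopes to $v(X) - v(\emptyset) = v(X)$ by normalization, which yields $g_{SOM}^{X}(X) = \sum_{j \in X} v(j \mid X \setminus \{j\}) \le v(X)$, exactly as needed. (Alternatively, since $v$ is monotone every marginal $v(j \mid X \setminus j)$ is nonnegative, so $g_{SOM}^{X} \ge 0$ and one may instead invoke Proposition~\ref{prop:EndowApxGuarantee}: the bound $g_i^{S_i}(S_i) \le v_i(S_i)$ makes the ratio in that proposition at least $1/2$.)

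Honestly, there is no substantive obstacle here: once the telescoping inequality is noted, the corollary is a one-line application of Theorem~\ref{thm:twoApx}, and it also inherits the stronger feature that the guarantee is against the optimal \emph{fractional} welfare. The only thing worth flagging is that the inequality $\sum_{j \in X} v(j \mid X \setminus j) \le v(X)$ genuinely requires submodularity — it can fail for XOS or subadditive valuations — which is consistent with the corollary being stated only for submodular consumers, and it is precisely what makes Theorem~\ref{corr:submodularSOM} a strengthening of \babaioff's submodular result (by Proposition~\ref{prop:SOMprecSTDsubmodular}, $\Endow^{SOM}$ is a weaker effect than $\Endow^{I}$ on this class).
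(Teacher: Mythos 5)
Your proof is correct and follows essentially the same route as the paper: the paper also obtains the corollary by observing that $v(X) \geq \sum_{j\in X} v(j\mid X\setminus j) = g_{SOM}^X(X)$ for submodular $v$ and then invoking Proposition~\ref{prop:EndowApxGuarantee} (equivalently Theorem~\ref{thm:twoApx}). Your telescoping argument just spells out the submodularity inequality the paper states without proof, so there is nothing to add.
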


We now present the proof of Theorem~\ref{corr:submodularSOM}, which shows that the techniques of \babaioff\ apply also to the weaker \som\ endowment effect, leading to a stronger result.
We begin with the definition of local optimum.
\begin{definition} \citep{babaioff2018combinatorial}
	For an instance $(v_1, \ldots, v_n)$, an allocation $(S_1, \ldots, S_n)$ is a local optimum if
	$\cup_{i \in [n]} S_i = M$ and for every pair of consumers $i, i' \in [n]$ and item $j \in S_i$ it holds that
	$v_i(S_i) + v_{i'}(S_{i'}) \geq v_i(S_i \setminus \{j\}) + v_{i'}(S_{i'}  \cup \{j\})$
\end{definition}

The following proposition (essentially \cite[Claim 4.4]{babaioff2018combinatorial} combined with individual rationality) shows that for submodular valuations, the $\Endow^{I}$-endowment equilibria suggested by \babaioff\  are also conditional equilibria.
\begin{proposition} \label{prop:submodularLocalOptimumCond}
Let $(v_1, \ldots, v_n)$ be an instance of submodular valuations,
$S=(S_1, \ldots, S_n)$ be some locally optimal allocation, and $p=(p_1, \ldots, p_m)$ be item prices defined by $p_j = v_{i(j)}(j  \mid  S_{i(j)} \setminus \{j\})$, where $i(j)$ is the consumer $i$ such that $j \in S_{i}$.
Then, $(S,p)$ is a conditional equilibrium.
\end{proposition}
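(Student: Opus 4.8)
The plan is to verify directly the three conditions in the definition of conditional equilibrium for the pair $(S,p)$ with $p_j = v_{i(j)}(j \mid S_{i(j)} \setminus \{j\})$. Market clearance is immediate from the assumption that $S$ is a local optimum, since local optimality includes $\bigcup_{i} S_i = M$. So the work is in establishing individual rationality and outward stability, and for both the key tool is submodularity together with the local-optimality inequality.

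First I would handle individual rationality. Fix a consumer $i$ and enumerate $S_i = \{j_1, \ldots, j_\ell\}$; writing $S_i^{(t)} = \{j_1, \ldots, j_t\}$, a telescoping sum gives $v_i(S_i) = \sum_{t=1}^{\ell} v_i(j_t \mid S_i^{(t-1)})$. For each $t$, submodularity implies $v_i(j_t \mid S_i^{(t-1)}) \geq v_i(j_t \mid S_i \setminus \{j_t\})$, because $S_i^{(t-1)} \subseteq S_i \setminus \{j_t\}$. Hence $v_i(S_i) \geq \sum_{t=1}^{\ell} v_i(j_t \mid S_i \setminus \{j_t\}) = \sum_{j \in S_i} p_j = p(S_i)$, which is exactly individual rationality.

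Next, outward stability. Fix a consumer $i$ and a set $X \subseteq M \setminus S_i$; I must show $v_i(X \mid S_i) \leq p(X)$. Enumerate $X = \{k_1, \ldots, k_r\}$ and telescope again: $v_i(X \mid S_i) = \sum_{s=1}^{r} v_i(k_s \mid S_i \cup \{k_1, \ldots, k_{s-1}\})$. By submodularity each term is at most $v_i(k_s \mid S_i)$, so $v_i(X \mid S_i) \leq \sum_{s=1}^r v_i(k_s \mid S_i)$. Now for each item $k = k_s \in X$, let $i' = i(k)$ be its owner in $S$ (so $k \in S_{i'}$, and $i' \neq i$ since $X \cap S_i = \emptyset$); local optimality applied to the pair $(i', i)$ and item $k$ gives $v_{i'}(S_{i'}) + v_i(S_i) \geq v_{i'}(S_{i'} \setminus \{k\}) + v_i(S_i \cup \{k\})$, i.e. $v_{i'}(k \mid S_{i'} \setminus \{k\}) \geq v_i(k \mid S_i)$, which says precisely $p_k \geq v_i(k \mid S_i)$. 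Summing over $k \in X$ yields $v_i(X \mid S_i) \leq \sum_{k \in X} p_k = p(X)$, as required.

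The only mild subtlety — and the step I would be most careful about — is the bookkeeping in the outward-stability argument: one must use submodularity to reduce $v_i(X \mid S_i)$ to a sum of singleton marginals $v_i(k \mid S_i)$ before invoking local optimality, since local optimality is stated only for moving a single item. Everything else is a routine telescoping-plus-submodularity calculation, and no step uses individual rationality, consistent with the remark preceding the proposition.
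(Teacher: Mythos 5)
Your proof is correct and follows essentially the same route as the paper's: individual rationality via telescoping plus submodularity, and outward stability by first using submodularity to reduce $v_i(X \mid S_i)$ to singleton marginals $v_i(k \mid S_i)$ and then invoking the single-item local-optimality inequality $v_i(k \mid S_i) \leq p_k$. Your explicit treatment of market clearance is a minor (and harmless) addition the paper leaves implicit in the definition of local optimum.
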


\begin{proof}
	Individual rationality: fix consumer $i$, and order the items in $S_i$ in some order $1, 2, \ldots, \cardinality{S_i}$, then
	$v_i(S_i) = \sum_{j \in S_i}v_i(j  \mid  \{1, \ldots j-1\}) \geq \sum_{j \in S_i}v_{i}(j  \mid  S_i \setminus \{j\}) = p(S_i)$,
	where the inequality follows by submodularity.
	
	Outward stability: fix a consumer $i$ and consider some $X \subseteq M \setminus S_i$.
	Since $S$ is a local optimum, for every $j \in X$ it holds that $v_i(j  \mid  S_i) \leq p_j = v_{i(j)}(j \mid S_{i(j)} \setminus \{j\})$.
	Order the items in $X$ in some order $1, 2, \ldots, \cardinality{X}$,  then
	$$
	v_i(X  \mid  S_i )
	=
	\sum_{j \in X} v_i(j  \mid  S_i \cup \{1, \ldots j-1\})
	\leq
	\sum_{j \in X} v_i(j  \mid  S_i )
	\leq
	\sum_{j \in X} p_j
	$$
	where the first inequality follows by submodularity.
\end{proof}

The proof of Theorem \ref{corr:submodularSOM} now follows by the definition of the endowment effect $\Endow^{SOM}$.

\begin{proof}[Proof of Theorem~\ref{corr:submodularSOM}]
	Consider a locally optimal allocation $S$ and the prices
	$p_j = v_{i(j)}(j  \mid  S_{i(j)} \setminus \{j\})$
	where $i(j)$ is the consumer $i$ such that $j \in S_{i}$.
	By Lemma~\ref{prop:submodularLocalOptimumCond}, it holds that $(S, p)$ is a conditional equilibrium.
	Moreover, for every consumer $i$, and for every $Z \subseteq S_i$, it holds that
	$$
	g^{S_i}(Z) - p(Z) = \sum_{j \in Z} v(\{j\}  \mid  S_i \setminus \{j\}) - \sum_{j \in Z} v(\{j\}  \mid  S_i \setminus \{j\}) = 0 = g^{S_i}(S_i) - p(S_i).
	$$
Thus, by Proposition~\ref{prop:condPlusImpliesEnd}, $(S,p)$ is an $\Endow^{SOM}$-endowment equilibrium.
\end{proof}


\section{Beyond XOS Valuations}
\label{sec:subadditive}

So far we've shown that the endowment effect can be harnessed to stabilize settings up to XOS valuations. Can it be harnessed further? Without further restriction on the endowment effect, this question can be answered affirmatively. Specifically, we show that for an endowment effect that inflates the value of a set linearly in the number of items, an endowment equilibrium always exists.
This result has a similar flavor to the observation made by \citet[Proposition 3.4]{babaioff2018combinatorial}, showing that every market admits an $\alpha\cdot \Endow^{I}$-endowment equilibrium, for a sufficiently large $\alpha$. However, while the value of $\alpha$ required for their result depends on the valuations of all consumers, our suggested endowment effect is simpler, and is defined for each consumer based on his or her valuation solely.

\begin{proposition} \label{prop:EEalwaysopt}
	Let $\Endow^{PROP} = \{ g^X(Z) = \cardinality{Z}  \cdot v(X)  : X \subseteq M \}$.
	Let $(v_1, \ldots, v_n)$ be an arbitrary instance of valuations, $S=(S_1,\ldots,S_n)$ be an optimal allocation, and $p = (p_1, \ldots, p_m)$ be item prices, such that $p_j=v_{i(j)}(S_{i(j)})$, where $i(j)$ is the consumer $i$ such that $j\in S_i$.
	Then, $(S,p)$ is an $\Endow^{PROP}$-endowment equilibrium.
\end{proposition}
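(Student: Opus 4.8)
The plan is to verify the two defining conditions of an $\Endow^{PROP}$-endowment equilibrium directly. Market clearance is immediate: since all $v_i$ are monotone we may assume without loss of generality that the optimal allocation $S$ leaves no item unallocated, so $\bigcup_{i}S_i=M$; this also makes the prices $p_j=v_{i(j)}(S_{i(j)})$ well defined and nonnegative. For utility maximization, fix a consumer $i$ and an arbitrary $X\subseteq M$, and plug the definition $v_i^{S_i}(Y)=v_i(Y)+\cardinality{S_i\cap Y}\cdot v_i(S_i)$ into both sides of the required inequality $v_i^{S_i}(S_i)-p(S_i)\ge v_i^{S_i}(X)-p(X)$. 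On the left, $p(S_i)=\cardinality{S_i}\cdot v_i(S_i)$ cancels the gain term $\cardinality{S_i}\cdot v_i(S_i)$, leaving $v_i^{S_i}(S_i)-p(S_i)=v_i(S_i)$. On the right, the contribution of items of $X\cap S_i$ to $p(X)$ is $\cardinality{X\cap S_i}\cdot v_i(S_i)$, which cancels the gain term $\cardinality{S_i\cap X}\cdot v_i(S_i)$, leaving $v_i^{S_i}(X)-p(X)=v_i(X)-\sum_{j\in X\setminus S_i}v_{i(j)}(S_{i(j)})$. Thus the claim reduces to
\[
v_i(S_i)+\sum_{j\in X\setminus S_i}v_{i(j)}(S_{i(j)})\ \ge\ v_i(X).
\]

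To establish this reduced inequality I would exploit optimality of $S$ against the ``grab $X$'' deviation: let $S'$ be the allocation with $S'_i=S_i\cup X$ and $S'_{i'}=S_{i'}\setminus X$ for $i'\neq i$ (one checks easily that $S'$ is a valid allocation). From $SW(S)\ge SW(S')$, rearranging gives $v_i(S_i\cup X)-v_i(S_i)\le\sum_{i'\neq i}\bigl(v_{i'}(S_{i'})-v_{i'}(S_{i'}\setminus X)\bigr)$. For each $i'\neq i$ I then bound $v_{i'}(S_{i'})-v_{i'}(S_{i'}\setminus X)\le\cardinality{X\cap S_{i'}}\cdot v_{i'}(S_{i'})$ by a two-case argument: if $X\cap S_{i'}=\emptyset$ both sides equal $0$; otherwise the left side is at most $v_{i'}(S_{i'})$ by monotonicity (since $v_{i'}(S_{i'}\setminus X)\ge 0$) while $\cardinality{X\cap S_{i'}}\ge 1$. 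Summing over $i'\neq i$ and using $\sum_{i'\neq i}\cardinality{X\cap S_{i'}}\cdot v_{i'}(S_{i'})=\sum_{j\in X\setminus S_i}v_{i(j)}(S_{i(j)})$ (every $j\in X\setminus S_i$ lies in exactly one $S_{i'}$, $i'\neq i$), we get $v_i(S_i\cup X)-v_i(S_i)\le\sum_{j\in X\setminus S_i}v_{i(j)}(S_{i(j)})$. Finally, monotonicity gives $v_i(X)\le v_i(S_i\cup X)$, and combining this with the previous bound yields exactly the reduced inequality above, completing the proof.

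The algebra cancelling the (large) gain terms against the (large) item prices is routine; the one genuinely load-bearing step is the inequality $v_{i'}(S_{i'})-v_{i'}(S_{i'}\setminus X)\le\cardinality{X\cap S_{i'}}\cdot v_{i'}(S_{i'})$, which is where the $\Omega(m)$-scale inflation of $\Endow^{PROP}$ is essential: it is precisely the bundle-size multiplier on $v_{i'}(S_{i'})$ that dominates the marginal loss of an arbitrary sub-bundle. Note that no submodularity or subadditivity is invoked anywhere, which is why the statement holds for arbitrary valuations; monotonicity and optimality of $S$ are the only structural facts used.
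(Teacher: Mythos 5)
Your proposal is correct and follows essentially the same route as the paper's proof: the same cancellation of the gain term against the prices of endowed items, the same use of monotonicity, the same bound of each other consumer's marginal loss $v_{i'}(S_{i'})-v_{i'}(S_{i'}\setminus X)$ by $\cardinality{X\cap S_{i'}}\cdot v_{i'}(S_{i'})$, and the same appeal to optimality of $S$ against reallocating $X\setminus S_i$ to consumer $i$ (the paper states this optimality step in marginal-value form rather than as a global welfare comparison, but it is the identical deviation).
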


\begin{proof}
	W.l.o.g., all items are allocated in $S$.
We need to show that for every $i$ and $Y\subseteq M $ it holds that $
	v_i^{S_i}(S_i) - p(S_i) \geq v_i^{S_i}(Y) - p(Y).
	$
	By monotonicity of $v_i$, it holds that
	\begin{align} \label{eq:e_prop_1}
	v_i^{S_i}(Y) -p(Y) & =  v_i(Y)+ \cardinality{Y\cap S_i} \cdot v_i(S_i) - p(Y\setminus S_i) -  \cardinality{Y\cap S_i} \cdot v_i(S_i)  \nonumber \\
	 & =   v_i(Y)- p(Y\setminus S_i) \nonumber  \\
	 & \leq v_i(Y \cup S_i)- p(Y\setminus S_i).
	 \end{align}
	Let $i(j)$ denote the consumer $i$ for which $j \in S_i$, then
	 \begin{align} \label{eq:e_prop_2}
	 v_i(Y \cup S_i)- p(Y\setminus S_i)
	 & =
	 v_i(S_i) + v_i(Y\setminus S_i  \mid  S_i)- \sum_{j \in Y \setminus S_i} v_{i(j)}(S_{i(j)}) \nonumber \\
	 & =
	  v_i(S_i) + v_i(Y\setminus S_i  \mid  S_i)- \sum_{i' \neq i} \cardinality{Y \cap S_{i'}} \cdot  v_{i'}(S_{i'}) \nonumber \\
	 & \leq
	  v_i(S_i) + v_i(Y\setminus S_i  \mid  S_i)- \sum_{i' \neq i} \cardinality{Y \cap S_{i'}} \cdot  v_{i'}(S_{i'} \cap Y \mid S_{i'} \setminus Y) \nonumber \\
	 & \leq 	
	  v_i(S_i) + v_i(Y\setminus S_i  \mid  S_i)- \sum_{i' \neq i} v_{i'}(S_{i'} \cap Y \mid S_{i'} \setminus Y),
 	 \end{align}
	where the first inequality follows by monotonicity, and the second inequality follows since equality holds whenever $\cardinality{Y \cap S_{i'}} \leq 1$, and strict inequality holds otherwise.
	
	Since $S$ is an optimal allocation, it holds that $v_i(Y\setminus S_i  \mid  S_i)- \sum_{i' \neq i} v_{i'}(S_{i'} \cap Y \mid S_{i'} \setminus Y) \leq 0$,
	otherwise reallocating $Y \setminus S_i$ to consumer $i$ strictly increases the welfare.
	Combining with Inequalities~(\ref{eq:e_prop_1})~and~(\ref{eq:e_prop_2}),  we conclude that
	$$
		 v_i^{S_i}(Y) -p(Y)
	\leq
	v_i(S_i)
	 = v_i^{S_i}(S_i)-p(S_i),
	$$
	as required, where the last equality follows since $p(S_i)=g^{S_i}(S_i)$.
\end{proof}

We now show that for subadditive valuations, and endowment effects that inflate valuations by a ``reasonable'' amount, an endowment equilibrium may not exist.
\begin{proposition} \label{prop:noEndowSubadditive}
For any number of items $m \geq 3$,
there exists an instance with identical items, one subadditive consumer and one unit demand consumer,
such that for any $\beta \leq O(\sqrt{m})$, and any endowment environment $\Endow$ that satisfies $g_i^X(X) \leq \beta \cdot v_i(X)$ for every $i$,
no $\Endow$-endowment equilibrium exists.
\end{proposition}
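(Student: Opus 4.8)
Take $m$ identical items, a unit-demand consumer $2$ with $v_2(X)=1$ for every nonempty $X$, and a consumer $1$ with the symmetric valuation $v_1$ given by $v_1(X)=t$ whenever $1\le|X|\le m-1$ and $v_1(M)=2t$, where $t=\lceil\sqrt m\rceil$. This $v_1$ is normalized and monotone, and it is subadditive: the only inequality $v_1(A)+v_1(B)\ge v_1(A\cup B)$ that is not trivially true occurs for $|A\cup B|=m$ and reads $2t\ge 2t$. (It is not submodular, since its marginal jumps from $0$ to $t$ at the last item; some such non-submodularity is necessary, as a submodular $v_1$ on identical items admits a Walrasian equilibrium.)

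\textbf{Three necessary conditions.} Suppose $(S,p)$ is an $\Endow$-endowment equilibrium with $\Endow$ satisfying $g_i^X(X)\le\beta v_i(X)$, and write $a=|S_1|$, $b=|S_2|=m-a$. I would extract, each from a single deviation and valid for \emph{every} such $\Endow$:
\begin{itemize}
\item if $a\ge1$ and $b\ge1$: consumer $1$'s deviation to $M$ forces $p(S_2)\ge v_1(M)-v_1(S_1)$ (the gain-function terms cancel), and consumer $2$'s deviation to $\emptyset$ forces $p(S_2)\le v_2(S_2)+g_2^{S_2}(S_2)\le 1+\beta$; hence $v_1(M)-v_1(S_1)\le 1+\beta$;
\item if $b=0$: consumer $2$ is endowed with $\emptyset$, so its endowed valuation is $v_2$ itself, and its deviation to a singleton forces every price to be $\ge 1$, i.e. $p(M)\ge m$, while consumer $1$'s deviation to $\emptyset$ forces $p(M)\le(1+\beta)v_1(M)$; hence $m\le(1+\beta)v_1(M)$;
\item if $a=0$: symmetrically, each price is $\ge v_1(\{\cdot\})=t$ while $p(M)\le 1+\beta$, so $mt\le 1+\beta$.
\end{itemize}

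\textbf{Killing every allocation.} For this instance $v_1(M)-v_1(S_1)=2t-t=t$ for \emph{all} $a$ with $1\le a\le m-1$, so the first condition becomes $t\le1+\beta$; the third is $mt\le1+\beta$; the second, using $v_1(M)=2t$, is $m\le2t(1+\beta)$. These fail whenever $\beta$ is, respectively, $<t-1$, $<mt-1$, $<\tfrac{m}{2t}-1$. Since $\tfrac{m}{2t}-1$ is the smallest of the three thresholds and equals $\Theta(\sqrt m)$, it follows that for every $\beta\le\tfrac{m}{2\lceil\sqrt m\rceil}-1$ every allocation violates one of the three necessary conditions, hence no $\Endow$-endowment equilibrium exists. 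For the finitely many small $m$ for which this threshold degrades I would rescale the instance (e.g. replace $v_2\equiv1$ by $v_2\equiv w$ for a larger constant, which for $m=3$ already suffices), changing only the constant hidden in the $O(\sqrt m)$.

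\textbf{Where the work is, and where $\sqrt m$ comes from.} The three inequalities themselves are trivial; the content is that \emph{all} allocations must be eliminated \emph{simultaneously}. With the proposed $v_1$ the impossibility requires $t>1+\beta$ (from the allocations $1\le a\le m-1$) but also $2t(1+\beta)<m$ (from giving everything to consumer $1$), and these two demands on $t$ are compatible only when $(1+\beta)^2<m/2$, i.e. $\beta=O(\sqrt m)$ — exactly the barrier asserted. Subadditivity is what makes this tight: it caps the terminal marginal, hence $t$, hence $v_1(M)=2t$, so a ``heavy'' last item and a ``light'' grand bundle cannot coexist past the $\sqrt m$ scale; the hard part of the proof is precisely engineering $v_1$ so that the per-item threat of the unit-demand consumer, the ``grab-everything'' threat of consumer $1$, and market clearance cannot be reconciled for any allocation. (A related point, needed only if one also wanted a matching existence statement for larger $\beta$: since a gain function need only satisfy $g^X(Z)\le g^X(X)$, it may be taken arbitrarily negative on proper nonempty $Z$, so deviations to sets that partially overlap one's own bundle never constrain the prices — only deviations to $\emptyset$, to supersets of one's bundle, and to disjoint bundles do, which is why the three conditions above already capture everything.)
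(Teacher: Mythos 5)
Your proposal is correct and follows essentially the paper's own route: the paper uses the same two-consumer instance up to scaling (consumer $1$ takes value $1$ on proper nonempty sets and $2$ on $M$, consumer $2$ is unit demand with value $\sqrt{2/m}$ --- i.e., the balanced choice $t=\sqrt{m/2}$, which keeps the threshold positive for every $m\ge 3$ and makes your small-$m$ rescaling patch unnecessary), and it likewise eliminates the allocation in which consumer $1$ holds all items via individual rationality plus a cheap item that consumer $2$ then demands. The only differences are cosmetic: where you rule out the remaining allocations through consumer $1$'s deviation to $M$ combined with consumer $2$'s individual rationality, the paper instead argues that such an allocation is suboptimal with respect to the endowed valuations (via the first welfare theorem), and your final threshold should be stated strictly, $\beta<\frac{m}{2\lceil\sqrt m\rceil}-1$, since at equality the ``consumer $1$ gets everything'' case is not excluded.
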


\begin{proof}
	Consider the following instance.
	Consumer $1$ is subadditive with valuation $v_1([m]) = 2$, $v_1(\emptyset) = 0$, and $v_1(X) = 1$ otherwise.
	Consumer $2$ is unit demand with valuation $v_2(X) = \sqrt{\frac{2}{m}}$ for all $\emptyset \neq X \subseteq [m]$.
	For any $\beta$ satisfying $m > 2(\beta+1)^2$,
	consider an allocation where $v_1$ gets all items,
	$$
	v_1^{[m]}([m]) = g_1^{[m]}([m]) + v_1([m])  \leq \beta \cdot 2 + 2,
	$$
	where the inequality follow by the assumption of the proposition.
	By individual rationality, it must be that  $v_1^{[m]}([m]) - p([m]) \geq 0$ therefore there exists
	an item $j \in [m]$ such that $p_j \leq \frac{2(\beta+1)}{m}$.
	But then consumer $2$ is not utility maximizing, because:
	$$
	v_2^{\emptyset}(\{j\}) - p_j = \sqrt{\frac{2}{m}} - p_j \geq \sqrt{\frac{2}{m}} - \frac{2(\beta+1)}{m} > 0,
	$$
	where the last inequality follows by the restriction on $\beta$.
	
	Alternatively, consider an allocation where consumer $2$ is allocated a non-empty set $X$,
	then her value in the endowed valuation is
	$$
	v_2^X(X) =
	g_2^X(X) + v_2(X) \leq
	\left( \beta +1 \right) \cdot v_2(X)
	=
	(\beta+1) \cdot  \sqrt{ \frac{2}{m}} < 1,
	$$
	where the last inequality follows by the restriction on $\beta$.
	On the other hand, the marginal contribution of set $X$ to consumer $1$ is at least $1$.
	Therefore, this cannot be an endowment equilibrium,
	since it is sub-optimal with respect to the endowed valuations.
\end{proof}

To summarize, Proposition~\ref{prop:EEalwaysopt} shows that the effect $\Endow^{PROP}$, which inflates the valuation by a factor of $O(m)$, guarantees existence of endowment equilibrium (and even one with the optimal allocation).
Proposition~\ref{prop:noEndowSubadditive} shows that inflating the valuation by a factor of $O(\sqrt{m})$ does not suffice for guaranteeing existence in general.
Closing this gap is an interesting open problem.
Notably, the endowment effects $\Endow^{I}$ and $\Endow^{AL}$ inflate the valuation by a factor of $2$.

\section{Bundling}
\label{sec:bundles}
In this section we study the role of bundling in market efficiency and stability.
We assume that the market designer partitions the set of items into indivisible bundles, and these bundles are the items in the induced market.
We show that under a wide variety of endowment effects, the bundling operation can recover stability and maintain efficiency.


A bundling $B = \{B_1, \ldots,B_k\}$ is a partition of the set of items $M$ into $k$ disjoint bundles  ($\cup_{j \in [k]}B_k = M$).
When clear in the context, given a set of indices $T \subseteq \{1, \ldots, k\}$, we slightly abuse notation and write $T$ to mean $\cup_{j \in T}B_j$.


The notion of {\em competitive bundling equilibrium} is introduced in \cite{dobzinski2015welfare}:
\begin{definition} \citep{dobzinski2015welfare}
	A Competitive Bundling Equilibrium (CBE)
	is a bundling $B = \{B_1, \ldots, B_k\}$ of $M$, a pair $(S,p)$ of an allocation $S = (S_1, \ldots, S_n)$ of the bundles to consumers together with bundle prices $p=(p_1, \ldots, p_k)$ such that:
	\begin{enumerate}
		\item
		{\bf Utility maximization:} Every consumer receives an allocation that maximizes her utility given the bundle prices, i.e.,
		for every consumer $i$ and subset of bundles indexed by $T \subseteq [k]$, $v_i(S_i) - \sum_{j \in S_i}p_j \geq v_i(T) - \sum_{j \in T} p_j$
		\item
		{\bf Market clearance:} All items are allocated, i.e., $\bigcup_{i \in [n]}\cup_{j \in S_i}B_j = M$.
	\end{enumerate}	
\end{definition}

The natural combination of CBE and $\Endow$-endowment equilibrium is simply a CBE with respect to the valuations subject to the endowment environment $\Endow$.
Here again,
given a bundling $B$, and a set of bundles $T$,
we abuse notation and write $v^{T, \Endow}$ to denote $v^{\left(\cup_{j \in T}B_j\right), \Endow}$.
\begin{definition} ($\Endow$-endowment CBE)
	An $\Endow$-endowment Competitive Bundling Equilibrium (CBE)
	is a bundling $B = \{B_1, \ldots, B_k\}$ of $M$, a pair $(S,p)$ of an allocation $S = (S_1, \ldots, S_n)$ of the bundles to consumers
	together with bundle prices $(p_1, \ldots,p_k)$ such that:
	\begin{enumerate}
		\item
		{\bf Utility maximization:}
		Every consumer receives an allocation that maximizes her endowed utility given the bundle prices, i.e.,
		for every consumer $i$ and subset of bundles indexed by $T \subseteq [k]$,
		$v_i^{S_i, \Endow_i}(S_i) - \sum_{j \in S_i}p_j \geq v_i^{S_i, \Endow_i}(T) - \sum_{j \in T} p_j$.
		\item
		{\bf Market Clearance:} All bundles are allocated, i.e., $\bigcup_{i \in [n]}\cup_{j \in S_i}B_j = M$.
	\end{enumerate}	
\end{definition}
When clear in the context we abuse notation and specify an $\Endow$-endowment CBE by a pair $(S, p)$ of an allocation $S$ and pricing $p$.
When doing so, we implicitly assume that
the bundling is $B = \{S_1, \ldots, S_n\}$.


\vspace{0.1in}
{\bf Demand queries in reduced markets.}
Consider the market induced by bundling $\{B_1, \ldots, B_k\}$.
Given a valuation $v$ and a price vector $p = (p_1, \ldots,p_k)$, a {\em demand query} returns a set of bundles in $\argmax_{T \subseteq [k]} \left(v_i(T) - \sum_{j \in T}p_j\right)$.
\vspace{0.1in}

Our results in this section apply to endowment environments that consist of {\em significant} endowment effects, defined below.
\begin{definition} \label{def:goodEndowments}
	An endowment effect $\Endow_i$ is {\em significant}
	if for every $X \subseteq M$,
	it holds that $g_i^{X}(X) \geq v_i(X)$, where $g_i^{X}$ is the gain function corresponding to $\Endow_i$.
\end{definition}

For example, \identity\ and \absloss\ are significant endowment effects.
Our main results in this section are the following:

\begin{theorem} \label{thm:SM_end2NoGap}
	There exists an algorithm that,
	for every market with \underline{submodular} valuations, every significant endowment effect $\Endow$ and every initial allocation $S = (S_1, \ldots, S_n)$
	computes an $\Endow$-endowment CBE $(S', p)$, such that $SW(S') \geq SW(S)$.
	The algorithm runs in polynomial time using \underline{value} queries.
\end{theorem}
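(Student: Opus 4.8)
\textbf{Proof plan for Theorem~\ref{thm:SM_end2NoGap}.}
The plan is to design an iterative ``merge-and-reallocate'' algorithm over bundlings, analogous in spirit to Algorithm~\ref{alg:xos}, but operating on bundles rather than items, and maintaining throughout that the current allocation's bundles are precisely the bundling (i.e., $B = \{S_1, \ldots, S_n\}$, discarding empty sets). Starting from the initial allocation $S$ with the trivial bundling $\{S_1,\dots,S_n\}$, at each iteration I would check whether the current allocation is a local optimum in the reduced market whose indivisible items are the current bundles: for every pair of consumers $i, i'$ and every currently-held bundle $b \in S_i$, test whether moving $b$ from $i$ to $i'$ weakly decreases welfare. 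By submodularity of each $v_j$ in the reduced market (submodularity is preserved under taking set-unions of a fixed partition), this pairwise single-bundle test suffices to certify a local optimum, and it can be implemented with polynomially many value queries since we only query marginal values of individual pre-packed bundles. If the current allocation is \emph{not} a local optimum, there is a profitable single-bundle move; I would apply it, which strictly increases welfare, and then \emph{irrevocably merge} the moved bundle $b$ into the recipient $i'$'s bundle (so $S_{i'}$ becomes the single new indivisible item $S_{i'} \cup b$). Since welfare strictly increases at each step and is bounded above by $OPT$, while the number of bundles weakly decreases, the algorithm terminates in polynomial time, and plainly $SW(S') \geq SW(S)$ at termination.

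The core of the argument is to show that the terminal allocation $S'$, paired with suitable bundle prices $p$, forms an $\Endow$-endowment CBE for every significant $\Endow$. Here I would invoke the machinery already developed: by Proposition~\ref{prop:submodularLocalOptimumCond} (applied in the reduced market, whose valuations are submodular), the local-optimum allocation $S'$ together with the prices $p_b = v_{i(b)}(b \mid S'_{i(b)} \setminus b)$ forms a conditional equilibrium of the reduced market. Then I would apply Proposition~\ref{prop:condPlusImpliesEnd}: it remains only to verify its hypothesis~(\ref{eq:condPlusImpliesEnd}), namely that for each consumer $i$ and every sub-collection of her bundles $Z \subseteq S'_i$, $g_i^{S'_i}(Z) - p(Z) \leq g_i^{S'_i}(S'_i) - p(S'_i)$. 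Using Definition~\ref{def:endVal}, $g_i^{S'_i}(Z) \leq g_i^{S'_i}(S'_i)$ always; and significance gives $g_i^{S'_i}(S'_i) \geq v_i(S'_i) = p(S'_i)$ (the last equality because supporting-style prices sum to $v_i(S'_i)$ by the telescoping argument in Proposition~\ref{prop:submodularLocalOptimumCond}), while $p(Z) \geq 0$ since marginals of submodular functions are nonnegative, hence $-p(Z) \leq 0 \leq g_i^{S'_i}(S'_i) - p(S'_i)$. Combining, $g_i^{S'_i}(Z) - p(Z) \leq g_i^{S'_i}(S'_i) \leq g_i^{S'_i}(S'_i) + \big(g_i^{S'_i}(S'_i) - p(S'_i)\big)$ is not quite the cleanest route; more directly, $g_i^{S'_i}(Z) - p(Z) \leq g_i^{S'_i}(S'_i) - 0 \le g_i^{S'_i}(S'_i) - \big(p(S'_i)-v_i(S'_i)\big) = g_i^{S'_i}(S'_i) - p(S'_i)$ using significance, which is exactly~(\ref{eq:condPlusImpliesEnd}). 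Thus Proposition~\ref{prop:condPlusImpliesEnd} yields that $(S',p)$ is an $\Endow$-endowment CBE.

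I expect the main obstacle to be the \emph{complexity/termination bookkeeping rather than the equilibrium characterization}: one must argue carefully that the pairwise single-bundle local-optimum test genuinely certifies global single-bundle stability in the reduced market (this is where submodularity of the induced valuations is essential, and where one must be precise that each $v_j$ restricted to unions of bundles is still submodular), and that ``irrevocably merging'' the moved bundle does not forfeit welfare gains in later iterations — the key being that merging only coarsens the bundling, so any deviation available after a merge was already available before, hence the welfare achievable by local moves never decreases. A secondary subtlety is ensuring market clearance is maintained: since at every iteration the bundling equals the current allocation's nonempty parts and no items are ever discarded, all items remain allocated, so clearance holds at termination. Once these structural points are nailed down, the welfare bound $SW(S') \geq SW(S)$ is immediate from monotone improvement, and the polynomial value-query bound follows since each iteration uses $O(n^2 k)$ value queries and there are at most $\text{poly}$ iterations (welfare strictly increases and bundle count strictly decreases along a suitable potential).
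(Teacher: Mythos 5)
Your algorithm is, in substance, the paper's Algorithm~\ref{alg:submodularReduction}: since the bundling always equals the nonempty parts of the current allocation, each consumer holds a single bundle, your "move $b=S_i$ to $i'$ and merge" step is exactly the paper's update when $v_{i'}(S_i\mid S_{i'})>v_i(S_i)$, your termination condition is (\ref{eq:SMgoodProp}), and your prices coincide with $p_i=v_i(S_i)$. Routing the equilibrium verification through Proposition~\ref{prop:submodularLocalOptimumCond} applied in the reduced market (whose induced valuations are indeed submodular) and then Proposition~\ref{prop:condPlusImpliesEnd} is a legitimate alternative to the paper's direct argument, which instead uses Lemma~\ref{lem:noNeedToSwap} plus a telescoping submodularity bound to contradict (\ref{eq:SMgoodProp}). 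However, your verification of hypothesis~(\ref{eq:condPlusImpliesEnd}) is broken as written: the chain $g_i^{S'_i}(S'_i)-0\le g_i^{S'_i}(S'_i)-\bigl(p(S'_i)-v_i(S'_i)\bigr)=g_i^{S'_i}(S'_i)-p(S'_i)$ is not an identity (the last step would need $v_i(S'_i)=0$), and the general claim "$g_i^{S'_i}(Z)-p(Z)\le g_i^{S'_i}(S'_i)-p(S'_i)$ for every sub-collection $Z$" is false for significant effects in general (e.g.\ a flat gain function with positive prices and a proper nonempty $Z$). What saves you is precisely the structural fact you did not invoke: in your construction each consumer holds exactly one bundle, so in the reduced market the only relevant sub-collections are $Z=\emptyset$ and $Z=S'_i$, and for $Z=\emptyset$ the condition reads $0\le g_i^{S'_i}(S'_i)-p(S'_i)=g_i^{S'_i}(S'_i)-v_i(S'_i)$, which is exactly significance. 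State that, and the equilibrium part closes.

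The second gap is the polynomial-time claim, which is part of the theorem. "Bundle count strictly decreases" is wrong when a bundle is transferred to a consumer currently holding nothing (merging with an empty bundle is just a transfer), and "welfare strictly increases and is bounded by $OPT$" does not by itself bound the number of iterations, since increments can be arbitrarily small. The paper's counting argument is what is needed: merges are irrevocable, so at most $2n-1$ distinct bundles ever appear; because welfare strictly increases, no (consumer, bundle) pair can recur, so each bundle is transferred at most $n-1$ times, giving $O(n^2)$ iterations and, since each stability check queries only marginals of single pre-packed bundles, $O(n^3)$ value queries overall. With these two repairs your outline matches the paper's result; the rest of your structural points (preservation of submodularity under bundling, market clearance since no items are discarded, and $SW(S')\ge SW(S)$ by monotone improvement) are correct and mirror the paper.
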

\begin{theorem}\label{thm:GV_end2NoGap}
		There exists an algorithm that, for every market with \underline{general} valuations, every significant endowment effect $\Endow$ and every initial allocation $S = (S_1, \ldots, S_n)$
		computes an $\Endow$-endowment CBE $(S', p)$, such that $SW(S') \geq SW(S)$.
		The algorithm runs in polynomial time using \underline{demand} queries.
\end{theorem}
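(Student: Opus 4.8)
The plan is to design an iterative algorithm that maintains a bundling $B$, an allocation $S$ of the bundles, and the natural candidate price vector $p$, and that in every iteration either certifies that $(S,p)$ is an $\Endow$-endowment CBE or makes progress by increasing social welfare. Throughout, the candidate prices are defined so that each consumer pays exactly her own value: set $p_{B_j} = v_{i(j)}(S_{i(j)})$ for the (unique) consumer $i(j)$ holding bundle $B_j$, and $0$ on unallocated bundles. Because $\Endow$ is significant, $g_i^{S_i}(S_i) \ge v_i(S_i) = p(S_i)$, which (via Lemma~\ref{lem:inwardStability}, instantiated on the reduced market with these prices) gives inward stability of every consumer: no consumer ever wishes to shed bundles, so the market stays cleared. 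Hence the only way $(S,p)$ can fail to be an $\Endow$-endowment CBE is an \emph{outward}-expanding deviation, and by inward stability it suffices to check, for each consumer $i$, a single deviation — namely a favorite superset of $S_i$ under the prices $p$ applied to the endowed valuation $v_i^{S_i}$. Since $v_i^{S_i}(T) = v_i(T) + g_i^{S_i}(S_i \cap T)$ and on supersets of $S_i$ the gain term is the constant $g_i^{S_i}(S_i)$, this reduces to an ordinary demand query: ask consumer $i$'s demand oracle (in the reduced market of bundles) at the price vector $p$ \emph{with the prices of bundles in $S_i$ zeroed out}, and compare the resulting utility (adding back $p(S_i)$) to her current utility $v_i(S_i) - p(S_i)$. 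This is the single carefully chosen demand query per consumer promised in the technique overview.

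If no consumer has such a beneficial deviation, then $(S,p)$ is an $\Endow$-endowment CBE and we are done. Otherwise, fix a consumer $i$ and a set of bundles $T \supsetneq S_i$ with $v_i^{S_i}(T) - p(T) > v_i^{S_i}(S_i) - p(S_i)$, which after cancelling the constant gain term is $v_i(T) - p(T) > v_i(S_i) - p(S_i)$, i.e. $v_i(T \setminus S_i \mid S_i) > p(T \setminus S_i)$. The algorithm now re-allocates: give consumer $i$ the merged bundle $S_i \cup (T \setminus S_i)$ — that is, \emph{irrevocably merge} the bundles of $T$ into one new indivisible bundle assigned to $i$ — and remove those bundles from whoever held them before (so some consumers' allocations shrink). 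Exactly as in Proposition~\ref{prop:increase_welfare}, social welfare strictly increases: the lost bundles were priced at the losing consumers' own values, i.e. at most $v_{i'}(S_{i'})$, so the total value lost by the others is at most $p(T \setminus S_i)$, while consumer $i$ gains $v_i(T\setminus S_i \mid S_i) > p(T\setminus S_i)$; using subadditivity/monotonicity to bound the losers' new values from below, the net change in $\sum_j v_j(\cdot)$ is strictly positive. After re-allocating we recompute the prices $p$ by the same rule and proceed.

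\textbf{Termination and the main obstacle.} The welfare strictly increases each iteration, so the process cannot cycle; but to get a \emph{polynomial} bound we use that each iteration performs at least one irrevocable merge, strictly decreasing the number of bundles $k$. Since $k$ starts at $n$ (one bundle per consumer) and is a positive integer, there are at most $n$ merging iterations; between consecutive merges we may also need re-allocations without merging, and here one argues, as in the CBE literature, that such pure re-allocations also strictly increase a bounded monotone potential (the welfare), and can be charged so that the total number of iterations is polynomial — the cleanest route is to show that every iteration either merges or strictly raises welfare while keeping $k$ fixed, and to bound the number of welfare-raising steps within a fixed bundling by $n$ as well (each such step hands a consumer a strictly larger-valued bundle, and there are at most $k \le n$ bundles). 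Each iteration uses $n$ demand queries plus bookkeeping, giving a polynomial running time overall. The main obstacle I anticipate is precisely this termination bookkeeping: making sure that the re-allocation rule, when it merges several bundles at once and shrinks several consumers' allocations simultaneously, still (i) keeps all bundles allocated — which is where significance and inward stability are essential — and (ii) admits a clean monotone potential that drops by an integer each step, so that no adversarial sequence of re-allocations can run super-polynomially. Finally, upon termination the while-condition gives utility maximization and, combined with market clearance, yields the $\Endow$-endowment CBE with $SW(S') \ge SW(S)$, completing the proof; the corollary for optimal allocations follows because an optimal $S$ admits no welfare-improving deviation and is therefore returned unchanged.
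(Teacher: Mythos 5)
Your plan is essentially the paper's own proof (Algorithm~\ref{alg:generalReduction}): each consumer holds a single bundle, other bundles are priced at their holders' values and the consumer's own bundle at zero, one ordinary demand query per consumer suffices because significance gives inward stability in the bundled market (Lemma~\ref{lem:noNeedToSwap}), so only superset deviations matter and the constant gain term cancels; improving deviations are absorbed by irrevocably merging the demanded bundles into the deviator's bundle, which strictly raises welfare, and the final prices $p_i=v_i(S_i)$ make the terminal allocation an $\Endow$-endowment CBE. Two remarks on the part you flag as the main obstacle and on one loose phrase. First, since valuations are general, you should not appeal to subadditivity in the welfare accounting; with the one-bundle-per-consumer invariant it is unnecessary anyway, because a consumer who loses bundles loses her entire (single) bundle, so the welfare change is exactly $v_i(\cup_{j\in A\cup\{i\}}S_j)-v_i(S_i)-\sum_{j\in A\setminus\{i\}}v_j(S_j)>0$, with only monotonicity used to include $S_i$ in the demanded set. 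Second, your bound of ``at most $n$ welfare-raising steps within a fixed bundling'' is not justified as stated, but the observation in your parenthesis is exactly what closes the gap, and is how the paper argues: an iteration with no merge can only be a transfer of a single bundle to a consumer who currently holds nothing (otherwise her old bundle is merged in), and such a transfer strictly increases the holder's value for that bundle, so each (bundle, consumer) pair occurs at most once; since each merge reduces the number of nonempty bundles, there are at most $n-1$ merges and hence at most $2n-1$ distinct bundles ever, giving $O(n^2)$ iterations and $O(n^3)$ demand queries overall. With that counting made explicit, your argument coincides with the paper's, including the corollary that an optimal initial allocation is returned unchanged.
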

As a corollary of the proof of Theorem~\ref{thm:GV_end2NoGap}, we show that any optimal allocation can be paired with bundle prices to form an $\Endow$-endowment CBE.
\begin{corollary}
\label{corr:OPT_CBE}
For every market, and significant endowment effect $\Endow$, any optimal allocation $S$ can be paired with bundle prices $p$ so that $(S, p)$ is an $\Endow$-endowment equilibrium.
\end{corollary}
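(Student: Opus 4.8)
The plan is to exhibit the bundling, the allocation, and the prices directly, rather than invoking the algorithm. Take the bundling induced by the optimal allocation itself, $B=\{S_1,\dots,S_n\}$, so that in the reduced market consumer $i$ is offered the single indivisible item $S_i$; keep the allocation $S$; and set the price of bundle $S_i$ to $p_i=v_i(S_i)$ (note $p_i\ge 0$ since valuations are non-negative). I would then claim that $(S,p)$ is an $\Endow$-endowment CBE. This also drops out of the proof of Theorem~\ref{thm:GV_end2NoGap}: that algorithm never decreases welfare and strictly increases it until it halts, so when initialized at an optimal $S$ it halts immediately and returns $S$ together with bundle prices of exactly this form; but the direct verification below is short enough to give on its own.

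Market clearance is immediate, since every bundle $S_i$ is allocated and hence every item is allocated. For utility maximization, I would fix a consumer $i$ and a set of bundle indices $T\subseteq[n]$, write $S_T=\bigcup_{j\in T}S_j$, and show $v_i^{S_i}(S_i)-p_i\ \ge\ v_i^{S_i}(S_T)-\sum_{j\in T}p_j$. Since the $S_j$ are pairwise disjoint, $S_i\cap S_T$ equals $S_i$ when $i\in T$ and $\emptyset$ when $i\notin T$; substituting $v^X(Y)=v(Y)+g^X(X\cap Y)$, the normalization $g^{S_i}(\emptyset)=0$ and $p_j=v_j(S_j)$ and cancelling the $g_i^{S_i}(S_i)$ terms, the desired inequality reduces to
\begin{align*}
\sum_{j\in T}v_j(S_j)\ \ge\ v_i(S_T) &\qquad\text{when } i\in T,\\
g_i^{S_i}(S_i)\ \ge\ v_i(S_T)-\sum_{j\in T}v_j(S_j) &\qquad\text{when } i\notin T.
\end{align*}

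Both inequalities follow from optimality of $S$ (together with significance in the second case). For any $U\subseteq[n]\setminus\{i\}$, the allocation that hands $S_i\cup S_U$ to consumer $i$, empties the bundles of the consumers in $U$, and leaves everyone else as in $S$ is feasible, with welfare $v_i(S_i\cup S_U)+\sum_{k\notin U\cup\{i\}}v_k(S_k)$; optimality of $S$ bounds this by $\sum_k v_k(S_k)$, i.e. $v_i(S_i\cup S_U)\le v_i(S_i)+\sum_{j\in U}v_j(S_j)$. Taking $U=T\setminus\{i\}$ in the case $i\in T$ (so $S_i\cup S_U=S_T$) gives precisely the first displayed inequality. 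Taking $U=T$ in the case $i\notin T$ gives, using monotonicity of $v_i$, $v_i(S_T)-\sum_{j\in T}v_j(S_j)\le v_i(S_i\cup S_T)-\sum_{j\in T}v_j(S_j)\le v_i(S_i)\le g_i^{S_i}(S_i)$, where the last step is significance of $\Endow_i$; this is the second displayed inequality. Empty bundles $S_i$, if any, are harmless and may simply be dropped from $B$.

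I do not expect a genuine obstacle here. The entire content is the price choice $p_i=v_i(S_i)$ together with the single observation that significance, $g_i^{S_i}(S_i)\ge v_i(S_i)$, supplies exactly the slack needed to absorb the gain term in an outward deviation, while the global optimality of $S$ plays the role that the ``local optimum'' condition plays in the submodular arguments. The only mild care needed is the bookkeeping of which items are discarded when forming the comparison allocation; since discarding items only lowers welfare, this never weakens the bound.
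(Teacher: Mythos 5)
Your proof is correct and follows essentially the paper's route: the same bundling $\{S_1,\ldots,S_n\}$ and prices $p_i=v_i(S_i)$, with significance absorbing the gain term for deviations that drop $S_i$ exactly as in Lemma~\ref{lem:noNeedToSwap}, and global optimality of $S$ standing in for the termination condition in the proof of Theorem~\ref{thm:GV_end2NoGap}. The only difference is presentational---you verify the two equilibrium conditions directly rather than observing that Algorithm~\ref{alg:generalReduction} halts immediately when initialized at an optimal $S$, a route you also note.
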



\subsection{Computation of Approximately-Optimal Endowment CBEs}
In this section we give a black-box reduction from welfare approximation in an endowment CBE to the pure algorithmic problem of welfare approximation.
In particular, we show that for any welfare approximation algorithm $ALG$, and any significant endowment environment $\Endow$, there exists an algorithm that computes an $\Endow$-endowment CBE with the same approximation guarantee of $ALG$.
For submodular valuations, this reduction makes a polynomial number of value queries. For general valuations, it makes a polynomial number of demand queries.\footnote{The demand queries required are with respect to any induced market along the process.}


We begin by showing that given an allocation $(S_1, \ldots,S_n)$,
and any endowment environment $\Endow$,
for
the bundling $B = \{S_1, \ldots, S_n\}$,
and the allocation $S$,
together with prices $p_i \leq g_i^{S_i}(S_i)$, no consumer $i$ gains by discarding $S_i$.
\begin{lemma}\label{lem:noNeedToSwap}
	For any instance $(v_1, \ldots, v_n)$, allocation $S = (S_1, \ldots,S_n)$, endowment environment $\Endow$,
	and prices satisfying $p_i \leq g_i^{S_i}(S_i)$ for all $i$, it holds that for all $i$ and $A \subseteq [n] \setminus \{i\}$,
	$$
	v_i^{S_i, \Endow_i}(\cup_{k \in A}S_k) - \sum_{\cup_{k \in A}}p_k \leq v_i^{S_i, \Endow_i}(\cup_{k \in A \cup \{i\} }S_k) - \sum_{\cup_{k \in A \cup \{i\}}}p_k
	$$
\end{lemma}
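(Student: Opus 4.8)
The statement is essentially a restatement of the inward stability lemma (Lemma~\ref{lem:inwardStability}) in the setting of the reduced market induced by the bundling $B = \{S_1, \ldots, S_n\}$, where the ``items'' are the bundles $S_1, \ldots, S_n$ and consumer $i$'s ``endowed item'' is the single bundle $S_i$. So the plan is to directly verify the hypothesis of Lemma~\ref{lem:inwardStability} in this reduced market and then invoke it.

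First I would set up notation: write $A' = A \cup \{i\}$, and for a set of bundle-indices $T \subseteq [n]$ write $p(T) = \sum_{k \in T} p_k$ and identify $T$ with $\cup_{k \in T} S_k$ as the paper does. Consumer $i$'s gain function with respect to the endowed set $S_i$, evaluated on a subset $Z$ of $S_i$, takes only two relevant values here since $Z$ ranges over subsets of $S_i$ that arise as unions of bundles in the allocation: either $Z = \emptyset$ (giving $g_i^{S_i}(\emptyset) = 0$, using normalization) or $Z = S_i$ (giving $g_i^{S_i}(S_i)$). So I need to check the inward-stability precondition $g_i^{S_i}(Z) - p(Z) \leq g_i^{S_i}(S_i) - p(S_i)$ for $Z \in \{\emptyset, S_i\}$: for $Z = S_i$ it is an equality, and for $Z = \emptyset$ it reads $0 \leq g_i^{S_i}(S_i) - p_i$, which is exactly the hypothesis $p_i \leq g_i^{S_i}(S_i)$. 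Hence the precondition of Lemma~\ref{lem:inwardStability} holds.

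Then I would apply Lemma~\ref{lem:inwardStability} (stated for the original item market, but applied here to the reduced market where the item set is $\{S_1,\ldots,S_n\}$): it gives that $S_i$ is inward stable with respect to $v_i^{S_i}$ and $p$, i.e., for every set of bundles $Y$ we have $v_i^{S_i}(Y) - p(Y) \leq v_i^{S_i}(S_i \cup Y) - p(S_i \cup Y)$. Taking $Y = \cup_{k \in A} S_k$ yields exactly the claimed inequality, since $S_i \cup Y = \cup_{k \in A'} S_k$. Alternatively, if one prefers not to invoke a lemma stated for the item market in the bundled market, I would just reproduce the two-line computation from the proof of Lemma~\ref{lem:inwardStability}: use monotonicity of $v_i$ to write $v_i(Y) \leq v_i(S_i \cup Y)$, add $g_i^{S_i}(S_i \cap Y)$ and subtract $p(Y)$, then use $g_i^{S_i}(S_i \cap Y) - p(S_i \cap Y) \leq g_i^{S_i}(S_i) - p(S_i)$ (which holds in both cases $S_i \cap Y \in \{\emptyset, S_i\}$ as above) to conclude.

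The main ``obstacle'' is really just bookkeeping rather than a genuine difficulty: one must be careful that in the reduced market the endowed set $S_i$ is atomic, so a deviating set $Y$ either contains it entirely or is disjoint from it, which is precisely what makes the gain-function term behave well and what lets the $p_i \leq g_i^{S_i}(S_i)$ hypothesis do all the work. I expect the whole argument to be four or five lines.
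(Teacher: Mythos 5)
Your proposal is correct and, at its core, coincides with the paper's proof: the paper simply carries out the direct computation you describe as your fallback --- expand $v_i^{S_i,\Endow_i}$ on the superset (whose gain term is $g_i^{S_i}(S_i)$), cancel $p_i$ against it using $p_i \le g_i^{S_i}(S_i)$, apply monotonicity of $v_i$, and use $g_i^{S_i}(\emptyset)=0$ together with disjointness ($i \notin A$) to identify the endowed and original values on $\cup_{k\in A}S_k$. Your primary framing via Lemma~\ref{lem:inwardStability} applied in the reduced (bundled) market is also sound, since the only subsets of the atomic endowed bundle that arise are $\emptyset$ and $S_i$, and the hypothesis $p_i \le g_i^{S_i}(S_i)$ is precisely the precondition of that lemma at $Z=\emptyset$.
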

\begin{proof}
	The endowed utility of consumer $i$ from $\cup_{k \in A \cup \{i\} }S_k$ is
	$$
	v_i^{S_i, \Endow_i}(\cup_{k \in A \cup \{i\} }S_k) - \sum_{k \in A \cup \{i\}}p_k
=
g_i^{S_i}(S_i) + v_i(\cup_{k \in A \cup \{i\} }S_k) - \sum_{k \in A \cup \{i\}}p_k	
	$$
Since $p_i \leq g_i^{S_i}(S_i)$ the above is at least
$$
v_i(\cup_{k \in A \cup \{i\} }S_k) - \sum_{k \in A}p_k
\geq
v_i(\cup_{k \in A}S_k) - \sum_{k \in A}p_k
=
v_i^{S_i, \Endow_i}(\cup_{k \in A}S_k) - \sum_{k \in A}p_k
$$
where the inequality is by monotonicity and the last equality is since $i \not \in A$.
%
\end{proof}

\begin{algorithm}	
		Input: Allocation $(S_1, \ldots,S_n)$, submodular valuation functions $(v_1, \ldots, v_n)$\;
		Output: Allocation $(S_1, \ldots,S_n)$, prices $(p_1, \ldots, p_n)$ \\
	
	\While{true} {
		\If{$\exists i, j \in [n]$ so that
			$v_i(S_j | S_i) >v_j(S_j)$}{
				$S_i \gets S_i \cup S_j$ \\
				$S_j \gets \emptyset$
			}
		\Else{
			return $(S_1, \ldots,S_n)$, $p = (v_1(S_1), \ldots, v_n(S_n))$
	  }
	}
	\caption{An algorithm for computing an $\Endow$-endowment CBE for submodular valuations}
	\label{alg:submodularReduction}
\end{algorithm}

We are now ready to present the proofs of Theorem~\ref{thm:SM_end2NoGap} (submodular valuations) and Theorem~\ref{thm:GV_end2NoGap} (general valuations).
We begin with the proof of Theorem~\ref{thm:SM_end2NoGap}.

\begin{proof}[Proof of Theorem~\ref{thm:SM_end2NoGap}]
We claim that Algorithm~\ref{alg:submodularReduction} meets the conditions in the statement of Theorem~\ref{thm:SM_end2NoGap}.
We first claim that the social welfare strictly increases in every iteration of the while loop. To see this, suppose consumers $i,j$ are chosen in some iteration. Then, consumer $i$ is allocated $S_i \cup S_j$, and consumer $j$ is left with nothing. By the design of the algorithm, this only happens if
	$
	v_i(S_i \cup S_j) >v_i(S_i)+v_j(S_j).
	$
Therefore, the total value of consumers $i$ and $j$ strictly increased. Since other consumers' allocations did not change, the social welfare strictly increases.
	
We now show that the algorithm runs in $O(n^4)$ time, and $O(n^3)$ value queries.
Each iteration requires $O(n^2)$ time, iterating over all consumer pairs.
In addition, there are at most $O(n^2)$ iterations.
To see this, notice that each iteration either transfers a bundle from one consumer to another, or two bundles are merged.
Since a specific bundle cannot be allocated to a specific consumer more than once,
each specific bundle is transferred at most $n-1$ times.
Moreover, there are at most $n-1$ merges, i.e., at most $2n-1$ distinct bundles, therefore, in total, there can be at most $2n(n-1)$ iterations.
	
The algorithm evaluates the term $v_i(S_j \mid S_i)$ using $2$ value queries.
This term depends on one of $n$ consumers $i$, and a pair of bundles. 
Therefore, a simple counting argument gives at most $O(n^3)$ value queries.

Let $S$ be the outcome of Algorithm~\ref{alg:submodularReduction}.
	It remains to show that whenever allocation $S$ satisfies
	\begin{align} \label{eq:SMgoodProp}
	v_i(S_j | S_i) \leq v_j(S_j)  \mbox{ for all } i, j \in [n],
	\end{align}
		the prices $p_i = v_i(S_i)$ set by the algorithm together with the allocation $S$ form an $\Endow$-endowment CBE\footnote{An almost identical proof shows that the prices $p_i = g_i^{S_i}(S_i)$ produces the same result.}.
	
The endowed utility of each consumer $i$ in the outcome $(S, p)$ is $g_i^{S_i}(S_i)$.
Suppose by contradiction that some consumer $i$ is not (endowed) utility maximizing.
Then, there exists 	a set $A \subseteq [n]$ so that $i$ would prefer taking the bundles indexed by $A$,
	i.e.,
	\begin{align*}
	g_i^{S_i}(S_i)
	& <
	v_i^{S_i}(\cup_{j \in A} S_j) - \sum_{j \in A }p_j 
	\leq
	v_i^{S_i}(\cup_{j \in A \cup \{i\}} S_j) - \sum_{j \in A \cup \{i\}}p_j  \\
	& =
	g_i^{S_i}(S_i) +  v_i(\cup_{j \in A \cup \{i\}}S_j) - \sum_{j \in A \cup \{i\}}v_j(S_j) 
	\end{align*}
	where the second inequality is by Lemma~\ref{lem:noNeedToSwap},
	which holds since $\Endow_i$ is significant.
	Suppose $A$ is ordered in some arbitrary order and denote by $A_{<j}$ all the elements in $A$ that precede the $j$-th bundle in $A$.
	Then by cancelling out $g_i^{S_i}(S_i)$, the above inequality can be rewritten as
	\begin{align*}
	0  < v_i(\cup_{j \in A}S_j | S_i) - \sum_{j \in A \setminus \{i\}}v_j(S_j)
	= &
	\sum_{j \in A \setminus \{i\}}v_i(S_j | \cup_{k \in \{i\} \cup A_{<j} }S_k) - v_j(S_j)
	\leq
	\sum_{j \in A \setminus \{i\}}v_i(S_j | S_i) - v_j(S_j),
	\end{align*}
	where the last inequality follows by submodularity. 
	Therefore, at least one summand in the right-hand-side expression is positive,
	which contradicts~(\ref{eq:SMgoodProp}).	
\end{proof}

The proof of Theorem~\ref{thm:SM_end2NoGap} shows that for submodular valuations, it suffices to check in each iteration the marginal contribution of a single bundle. For more general valuations, this is not sufficient.
However, Theorem~\ref{thm:GV_end2NoGap} shows that the same type of reduction can be obtained for general valuations, using demand queries. We proceed with its proof.
\begin{algorithm}
	Input: Allocation $(S_1, \ldots,S_n)$, valuation functions $(v_1, \ldots, v_n)$\;
	Output: Allocation $(S_1, \ldots,S_n)$, prices $(p_1, \ldots, p_n)$ \\
	
	flag = True \\
	
	\While{flag} {
		flag = False\\
		
		\For{$i =1, \ldots, n$}{
			$p_i = 0, \;\;\;$  $p_j = v_j(S_j)  , \forall j \neq i  $ \\
			
			$A \gets \argmax_{S \subseteq [n]}(v(S) - \sum_{j \in S}p_j)$ \\
			
			\If{$v_i(A) - \sum_{j \in A}p_j > v_i(S_i)$}{
				$S_i \gets S_i \cup \left( \cup_{j \in A}S_j \right)$\\
				
				$S_j \gets \emptyset \;\;\; \forall j \in A \setminus \{i\}$  \\
				
				flag = True
			}
		}
	}
	return $(S_1, \ldots, S_n), p = (v_1(S_1), \ldots, v_n(S_n))$
	\caption{An algorithm for significant $\Endow$-endowment CBE for general valuations}
	\label{alg:generalReduction}
\end{algorithm}
%

\begin{proof}[Proof of Theorem~\ref{thm:GV_end2NoGap}]
We claim that Algorithm~\ref{alg:generalReduction} meets the conditions in the statement of Theorem~\ref{thm:GV_end2NoGap}.
	We claim that in every iteration of the while loop the welfare increases.
	Suppose at a current iteration consumer $i$ is re-allocated $\cup_{j \in A \cup \{i\}} S_j$, and consumers in $A \setminus \{i\}$ are allocated the empty set.
	By definition of the algorithm this happens only if
	$
	v_i(\cup_{j \in A}S_j) - \sum_{j \in A} p_j > v_i(S_i) - 0.
	$
	
	By monotonicity of $v_i$, it holds that $v_i(\cup_{j \in A \cup \{i\} }S_j) \geq v_i(\cup_{j \in A}S_j)$,
	and by the way the prices are defined in the algorithm (that is, $p_j = v_j(S_j)$ for all $j \neq i$, and $p_i = 0$), it follows that
	$
	v_i(\cup_{j \in A \cup \{i\}} S_j) - \sum_{j \in A \cup \{i\} }v_j(S_j) > 0.
	$
	This difference is exactly the change in social welfare due to the re-allocation of $\cup_{j \in A \cup \{i\}} S_j$ to consumer $i$ (and all $j \in A \setminus \{i\}$ are allocated the empty set).
	Since the allocation to consumers outside of $A \cup \{i\}$ did not change, the social welfare increases.
	
	As in the proof of Theorem~\ref{thm:SM_end2NoGap}, the number of while-loop iterations is at most $O(n^2)$,
	and in each iteration there are at most $n$ demand queries (one for each consumers),
	hence the algorithm runs in $O(n^3)$ time and uses $O(n^3)$ demand queries.

	Let $(S, p)$ be the outcome of Algorithm~\ref{alg:generalReduction}
	(recall that $p_i = v_i(S_i)$.\footnote{An almost identical proof shows that the prices $p_i = g_i^{S_i}(S_i)$ also
		suffice.}).
	For every $i$ and $A \subseteq [n]$ it holds that
	\begin{align} \label{eq:GVgoodProp}
	v_i(S_i) \geq v_i(\cup_{j \in A \cup \{i\}}S_j) - \sum_{j \in A \setminus \{i\}}v_j(S_j).
	\end{align}
	The utility of each consumer in $(S, p)$ is $g^{S_i}_i(S_i)$.
	Suppose by contradiction that some consumer $i$ is not utility maximizing, then
	there exists a set $A \subseteq [n]$ so that
	\begin{align*}
	g^{S_i}_i(S_i) & < v_i^{S_i, \Endow_i}(\cup_{j \in A} S_j) - \sum_{j \in A }p_j 
	\leq  v_i^{S_i, \Endow_i}(\cup_{j \in A \cup \{i\}} S_j) - \sum_{j \in A \cup \{i\}}p_j \\
	& =
	g_i^{S_i}(S_i) +  v_i(\cup_{j \in A \cup \{i\}}S_j) - \sum_{j \in A \cup \{i\}}v_j(S_j),  
	\end{align*}
where the second inequality follows by Lemma~\ref{lem:noNeedToSwap}, which holds since $\Endow_i$ is significant.
By cancelling out $g_i^{S_i}(S_i)$ in both sides of the obtained inequality, we get $v_i(\cup_{j \in A \cup \{i\}}S_j) - \sum_{j \in A \cup \{i\}}v_j(S_j) > 0$, which contradicts Inequality~(\ref{eq:GVgoodProp}).
\end{proof}

As a corollary, any $a$-approximation algorithm, together with access to demand queries, can be used to compute a significant $\Endow$-endowment CBE that has an $a$-approximation to the optimal social welfare.

\subsection{A negative result for a set of {\em non-significant} endowment effects}
In this section we show that there are endowment environments (that are not significant) for which Corollary~\ref{corr:OPT_CBE} does not apply.
Specifically, we show that for any $\beta < 1$, and endowment environment $\Endow$ such that $g_i^{X}(X) \leq \beta \cdot v_i(X)$ for all $i$ and $X \subseteq M$,
there exists an instance where an $\Endow$-endowment CBE with optimal social welfare does not exist. This is true even for XOS valuations.
The following proposition establishes upper bounds on the social welfare that can be guaranteed in an $\Endow$-endowment CBE, as a function of $\beta$.
We say that an allocation $S$ is {\em supported} in an endowment equilibrium if there exist prices $p$ such that $(S,p)$ is an endowment equilibrium.
\begin{proposition} \label{prop:XOS_SA_noApx}
	Consider any $\beta < 1$, and let $\Endow$ be an endowment environment such that $g_i^{X}(X) \leq \beta \cdot v_i(X)$ for all $i$. For every $\varepsilon > 0$, it holds that
	\begin{enumerate}
		\item
		There exists an instance such that no allocation with welfare better than $\frac{2- (\beta +\varepsilon)}{3 - 2(\beta + \varepsilon)} OPT$ can be supported in an $\Endow$-endowment equilibrium.
		\item
		There exists an instance with {\em subadditive} consumers such that no allocation with welfare better than $\frac{4(1+\beta + \varepsilon)}{5 + 3(\beta + \varepsilon)} OPT$ can be supported in an $\Endow$-endowment equilibrium.
		\item
		There exists an instance with {\em XOS} consumers such that no allocation with welfare better than $\frac{8(1+ \beta+ \varepsilon)}{9 + 7(\beta + \varepsilon)} OPT$ can be supported in an $\Endow$-endowment equilibrium.
	\end{enumerate}
	
\end{proposition}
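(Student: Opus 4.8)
I would exhibit a single parameterized market with three consumers whose parameters (i)~tune the valuation class to be general, subadditive, or XOS, and (ii)~carry a small slack $\varepsilon>0$, and then show that in this market no allocation with welfare above the stated bound can be supported in an $\Endow$-endowment equilibrium, for \emph{any} endowment environment $\Endow$ with $g_i^{X}(X)\le\beta\cdot v_i(X)$ for all $i$. The three items are proved by analysing the three parameter regimes of the same market; I would present the general-valuation regime (item~1) in detail and read off items~2 and~3 from the subadditive and XOS instantiations.

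Two facts drive the analysis, and only the second uses the $\beta$-bound. First, \emph{every} $\Endow$-endowment equilibrium $(S,p)$ is outward stable: for every consumer $i$ and every bundle $T\subseteq M\setminus S_i$, $p(T)\ge v_i(S_i\cup T)-v_i(S_i)$. This is exactly the manipulation in the proof of Proposition~\ref{prop:endowedEq} --- in the utility-maximization inequality for the deviation $S_i\cup T$ the gain term $g_i^{S_i}(S_i)$ appears on both sides and cancels --- so it holds for arbitrary endowment effects. Second, testing the deviation $T=\emptyset$ gives individual rationality under the endowed valuation, $p(S_i)\le v_i(S_i)+g_i^{S_i}(S_i)\le (1+\beta)\,v_i(S_i)$, and summing over $i$ (using market clearance) yields $p(M)=\sum_i p(S_i)\le (1+\beta)\,SW(S)$.

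With these in place, the argument for a fixed regime runs as follows. (a)~Exhibit an explicit allocation witnessing the value $OPT$. (b)~Take any allocation $S$ supported by an equilibrium $(S,p)$, enumerate the structurally distinct possibilities for $S$, and for each select a family of (consumer, foreign-bundle) deviations whose outward-stability inequalities sum to a lower bound $p(M)\ge \phi(OPT,SW(S))$ in which each item is double-counted a bounded number of times. (c)~Combine with $p(M)\le (1+\beta)\,SW(S)$ to get $(1+\beta)\,SW(S)\ge \phi(OPT,SW(S))$, which --- after also invoking the in-regime deviations that rule out the near-optimal allocations (expanding, shrinking, and swapping moves, each made \emph{strictly} improving by the built-in $\varepsilon$) --- is feasible only when $SW(S)<\frac{2-(\beta+\varepsilon)}{3-2(\beta+\varepsilon)}\,OPT$. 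The subadditive and XOS gadgets replace single contested items by grouped/nested bundles, which doubles the relevant counts at each level and turns the bound into $\frac{4(1+\beta+\varepsilon)}{5+3(\beta+\varepsilon)}\,OPT$ and then $\frac{8(1+\beta+\varepsilon)}{9+7(\beta+\varepsilon)}\,OPT$; one must of course re-verify that the enlarged gadgets still meet the relevant definition of subadditivity / XOS.

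The main obstacle is the \emph{robustness} of steps~(b)--(c): it is routine to design a market whose welfare-optimal allocation fails outward stability and hence is not an endowment equilibrium, but the claim is the stronger one that no allocation with welfare above the threshold is supported. This forces the market to have a ``welfare cliff'' --- the feasible allocations split into a handful of near-optimal configurations, each of which must be knocked out by exhibiting an explicitly strictly-improving deviation under the endowed valuations, plus a remainder whose welfare is already below the bound --- and pinning down the precise valuations and bundle sizes so that this dichotomy holds simultaneously for all three regimes is where the work lies. A related subtlety is that the refutations must use only the upper bound $g_i^{S_i}(S_i)\le\beta\, v_i(S_i)$: since $\Endow$ is adversarial, a consumer may carry no stickiness at all, so one cannot rely on any lower bound on the gain functions, and it is precisely for this reason that the construction must leave an $\varepsilon$-margin, so that a beneficial deviation stays strictly beneficial even when $g_i^{S_i}(S_i)$ is pushed up to its maximal value $\beta\, v_i(S_i)$.
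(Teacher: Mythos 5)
Your two structural observations are sound --- outward stability holds in any endowment equilibrium because $g_i^{S_i}(S_i)$ cancels in deviations of the form $S_i\cup T$, and endowed individual rationality gives $p(S_i)\le(1+\beta)\,v_i(S_i)$ --- and the overall shape (tiny explicit gadgets, an $\varepsilon$ slack to keep the defeating inequality strict, a gap between the best and second-best welfare) matches the paper. But the proposal stops exactly where the proposition's content lies: the bounds $\frac{2-(\beta+\varepsilon)}{3-2(\beta+\varepsilon)}$, $\frac{4(1+\beta+\varepsilon)}{5+3(\beta+\varepsilon)}$, $\frac{8(1+\beta+\varepsilon)}{9+7(\beta+\varepsilon)}$ are artifacts of concrete instances, and you never exhibit the instances or verify the numbers; you explicitly defer ``pinning down the precise valuations and bundle sizes.'' In the paper these are very small markets: for item 1, two identical items, one consumer with value $1$ for a single item and $x$ for both, and one consumer with value $x$ for any nonempty set, with $x=\frac{2-\beta'}{1-\beta'}$ and $\beta'=\beta+\varepsilon$; for items 2 and 3, three identical items and valuations $(1,1,1+m)$, $(m,m,m)$, $(a,a,a)$, where $m=1$ gives the subadditive case, $m=1/2$ the XOS case, and $a$ is tuned to $\beta'$. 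The ``no allocation with better welfare'' claim then reduces to showing that the optimal allocation cannot be priced, the second-best welfare being exactly the stated fraction of $OPT$ --- no enumeration over allocation structures or ``welfare cliff'' engineering is needed.

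More importantly, your central quantitative mechanism --- summing outward-stability inequalities to lower bound $p(M)$ and comparing with $p(M)\le(1+\beta)\,SW(S)$ --- does not suffice, even on the paper's own instances. In the two-item instance the only nontrivial outward-stability constraint is $p_t\ge x-1$, so your scheme yields $x-1\le(1+\beta)(1+x)$, which holds for every $\beta\ge 0$ and produces no contradiction. The binding constraints in the paper are swap deviations of a unit-demand-type consumer to a bundle disjoint from her own (e.g.\ $v_2(\{t\})+g_2^{\{t\}}(\{t\})-p_t\ge v_2(\{s\})-p_s$), in which the deviating consumer's own gain appears un-cancelled and is then capped by $\beta\, v$, chained with the individual rationality of a \emph{different} consumer; these give $g_1^{\{s\}}(\{s\})+g_2^{\{t\}}(\{t\})\ge x-2$ against $g_1^{\{s\}}(\{s\})+g_2^{\{t\}}(\{t\})\le\beta(1+x)$, and analogously $m\le\beta m+2(1+\beta)a$ in the three-item instance. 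You mention swapping moves only parenthetically, as an add-on to the $p(M)$ bookkeeping, so the argument as structured would not close; to complete it you would have to make these swap-plus-IR chains the core of the derivation and supply the explicit instances whose parameters realize the three stated constants.
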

\begin{proof}
	For the first statement, consider two identical items $\{s, t\}$, and two consumers.
	Consumer $1$ has value $1$ for a single item, and value $x$ for two items. consumer $2$ has value $x$ for any non-empty set.
	In an optimal allocation each consumer gets a single item, with social welfare $1+x$.
	Let $(p_1, p_2)$ be the consumers' prices.
	Suppose w.l.o.g. that in the optimal allocation consumer $1$ receives $s$ and consumer $2$ receives $t$.
	By Definition~\ref{def:endVal}, for consumer $1$ to accept price $p_1$, it must hold that $p_1 \leq 1 + g_1^{\{s\}}(\{s\})$.
	For consumer $1$ to not want to add the other item, it must hold that
	$1 + g_1^{\{s\}}(\{s\}) - p_1 \geq x + g_1^{\{s\}}(\{s\}) -p_1 - p_2$; that is, $p_2 \geq x-1$.
	Similarly,
	for consumer $2$ to accept price $p_2$ it must hold that $p_2 \leq x + g_1^{\{t\}}(\{t\})$,
	and to not prefer buying $s$ at price $p_1$, it must hold that $x + g_2^{\{t\}}(\{t\}) - p_2 \geq x -p_1$.
	We can now write the following sequence of inequalities:
	\begin{align*}
	x +g_2^{\{t\}}(\{t\})  - (x-1) \geq x  + g_2^{\{t\}}(\{t\}) - p_2 \geq x - p_1 \geq x - (1 + g_1^{\{s\}}(\{s\})).
	\end{align*}
	By rearranging it follows that the constraints are satisfied only if $g_1^{\{s\}}(\{s\}) + g_2^{\{t\}}(\{t\}) \geq x-2$.
	It is given that $g_1^{\{s\}}(\{s\}) + g_2^{\{t\}}(\{t\}) \leq \beta (1+x)$.
	Therefore, if $\beta (1+x) < x-2$,
	i.e., if  $\beta < \frac{x-2}{x+1}$ then the optimal allocation cannot be supported in an $\Endow$-endowment equilibrium.
	
	Set $x = \frac{2-\beta'}{1-\beta'}$. For any  $0 < \beta' < 1$ we have that $x > 2$ and therefore $0 < \frac{x-2}{x+1} < 1$ and the analysis above holds.
	Therefore, if $\beta < \beta'$ then the optimal allocation cannot be supported in an $\Endow$-endowment equilibrium, and
	the next best allocation gives a
	$
	\frac{x}{1+x}
	=
	\frac{2-\beta'}{3-2\beta'}
	$ approximation to the optimal welfare.
	The result follows by setting $\beta' = \beta  + \epsilon$.

	For the second and third statements,
	consider a setting with three identical items $\{s, t, w\}$ and three consumers.
	Consumer $1$ has valuation $(1, 1, 1 + m)$,
	consumer $2$ has valuation $(m, m, m)$,
	and consumer $3$ has valuation $(a, a, a)$, where the i-th value in the parenthesis is the value of getting i items.
	We are interested in the case $1 \geq m > a$.
	Note that consumers $2$ and $3$ are unit demand.
	In an optimal allocation each consumer gets one item, the optimal social welfare is $1+m+a$, and the second best allocation achieves social welfare of $1+m$ (say, by giving all items to consumer $1$).
	Suppose w.l.o.g. that in the optimal allocation consumer $1$ receives $s$, consumer $2$ receives $t$,
	and consumer $3$ receives $w$.
	Each consumer $i$ has a price $p_i$ for her item.
	By Definition~\ref{def:endVal}, for consumer $1$ to be utility maximizing,
	she must not want to buy the two other items for a price of $p_2+p_3$ for a marginal increase of $m$, i.e.,
	$p_2+p_3 \geq m$.
	Consumer $3$ must prefer buying over not buying, i.e. $p_3 \leq a +g_3^{\{w\}}(\{w\})$.
	Consumer $2$ must prefer her item over $w$, i.e., $m +g_2^{\{t\}}(\{t\}) - p_2 \geq m - p_3 \Rightarrow g_2^{\{t\}}(\{t\}) \geq p_2 - p_3$.
	Therefore, we have the following sequence of inequalities:
	$$
	m \leq p_2+p_3 \leq p_3 + p_3 + g_2^{\{t\}}(\{t\}) \leq g_2^{\{t\}}(\{t\}) + 2(a +g_3^{\{w\}}(\{w\})) \leq \beta \cdot m + 2(\beta \cdot a + a),
	$$
	where the last inequality follows from by the assumption that $g_i^{S_i}(S_i) \leq \beta \cdot v_i(S_i)$ for all $i$.
	Rearranging, it follows that $\beta \geq \frac{m-2a}{m+2a}$,
	therefore, if $\beta < \frac{m-2a}{m+2a}$
	then the optimal allocation cannot be supported in an $\Endow$-endowment equilibrium.
	
	Set $a = \frac{m(1- \beta')}{2(1 + \beta')}$, and conclude that
	if
	$
	\beta
	<
	\beta' < 1$
	then the optimal allocation cannot be an allocation of an $\Endow$-endowment equilibrium., and
	the next best allocation gives a $2 /(2+\frac{m(1- \beta')}{2(1 + \beta')}) $ approximation to the optimal welfare.
	
	For $m=1$, consumer $1$ is subadditive, and it follows that if $\beta < \beta'$, then
	the next best allocation is a $\frac{4(1+\beta')}{5 + 3\beta'}$ approximation to the optimal social welfare.
	
	For $m=1/2$, consumer $1$ is XOS, thus if $\beta < \beta'$, then
	the next best allocation is a $\frac{8(1+\beta')}{9 + 7\beta'}$ approximation to the optimal social welfare.
	The results follow by setting $\beta' = \beta  + \epsilon$.
\end{proof}

In particular, for $\beta \rightarrow 1$, the above shows that for XOS valuations,
if $g_i^{S_i}(S_i) \leq (1-\varepsilon) \cdot v_i(S_i)$, then there is no $\Endow$-endowment equilibrium with optimal social welfare.


\bibliographystyle{ACM-Reference-Format}
\bibliography{CWEbib}


\begin{thebibliography}{33}


\ifx \showCODEN    \undefined \def \showCODEN     #1{\unskip}     \fi
\ifx \showDOI      \undefined \def \showDOI       #1{#1}\fi
\ifx \showISBNx    \undefined \def \showISBNx     #1{\unskip}     \fi
\ifx \showISBNxiii \undefined \def \showISBNxiii  #1{\unskip}     \fi
\ifx \showISSN     \undefined \def \showISSN      #1{\unskip}     \fi
\ifx \showLCCN     \undefined \def \showLCCN      #1{\unskip}     \fi
\ifx \shownote     \undefined \def \shownote      #1{#1}          \fi
\ifx \showarticletitle \undefined \def \showarticletitle #1{#1}   \fi
\ifx \showURL      \undefined \def \showURL       {\relax}        \fi
\providecommand\bibfield[2]{#2}
\providecommand\bibinfo[2]{#2}
\providecommand\natexlab[1]{#1}
\providecommand\showeprint[2][]{arXiv:#2}

\bibitem[\protect\citeauthoryear{Ausubel and Milgrom}{Ausubel and
  Milgrom}{2000}]%
        {ausubel2002ascending}
\bibfield{author}{\bibinfo{person}{Lawrence~M Ausubel} {and}
  \bibinfo{person}{Paul~R Milgrom}.} \bibinfo{year}{2000}\natexlab{}.
\newblock \showarticletitle{Ascending auctions with package bidding}.
\newblock \bibinfo{journal}{\emph{Advances in Theoretical Economics}}
  \bibinfo{volume}{1}, \bibinfo{number}{1} (\bibinfo{year}{2000}).
\newblock


\bibitem[\protect\citeauthoryear{Babaioff, Dobzinski, and Oren}{Babaioff
  et~al\mbox{.}}{2018}]%
        {babaioff2018combinatorial}
\bibfield{author}{\bibinfo{person}{Moshe Babaioff}, \bibinfo{person}{Shahar
  Dobzinski}, {and} \bibinfo{person}{Sigal Oren}.}
  \bibinfo{year}{2018}\natexlab{}.
\newblock \showarticletitle{Combinatorial Auctions with Endowment Effect}. In
  \bibinfo{booktitle}{\emph{Proceedings of the 2018 ACM Conference on Economics
  and Computation}}. ACM, \bibinfo{pages}{73--90}.
\newblock


\bibitem[\protect\citeauthoryear{Bikhchandani and Mamer}{Bikhchandani and
  Mamer}{1997}]%
        {bikhchandani1997competitive}
\bibfield{author}{\bibinfo{person}{Sushil Bikhchandani} {and}
  \bibinfo{person}{John~W Mamer}.} \bibinfo{year}{1997}\natexlab{}.
\newblock \showarticletitle{Competitive equilibrium in an exchange economy with
  indivisibilities}.
\newblock \bibinfo{journal}{\emph{Journal of economic theory}}
  \bibinfo{volume}{74}, \bibinfo{number}{2} (\bibinfo{year}{1997}),
  \bibinfo{pages}{385--413}.
\newblock


\bibitem[\protect\citeauthoryear{Bikhchandani and Ostroy}{Bikhchandani and
  Ostroy}{2002}]%
        {bikhchandani2002package}
\bibfield{author}{\bibinfo{person}{Sushil Bikhchandani} {and}
  \bibinfo{person}{Joseph~M Ostroy}.} \bibinfo{year}{2002}\natexlab{}.
\newblock \showarticletitle{The package assignment model}.
\newblock \bibinfo{journal}{\emph{Journal of Economic theory}}
  \bibinfo{volume}{107}, \bibinfo{number}{2} (\bibinfo{year}{2002}),
  \bibinfo{pages}{377--406}.
\newblock


\bibitem[\protect\citeauthoryear{Chakrabarty and Goel}{Chakrabarty and
  Goel}{2010}]%
        {chakrabarty2010approximability}
\bibfield{author}{\bibinfo{person}{Deeparnab Chakrabarty} {and}
  \bibinfo{person}{Gagan Goel}.} \bibinfo{year}{2010}\natexlab{}.
\newblock \showarticletitle{On the approximability of budgeted allocations and
  improved lower bounds for submodular welfare maximization and GAP}.
\newblock \bibinfo{journal}{\emph{SIAM J. Comput.}} \bibinfo{volume}{39},
  \bibinfo{number}{6} (\bibinfo{year}{2010}), \bibinfo{pages}{2189--2211}.
\newblock


\bibitem[\protect\citeauthoryear{Christodoulou, Kov{\'a}cs, and
  Schapira}{Christodoulou et~al\mbox{.}}{2016}]%
        {christodoulou2016bayesian}
\bibfield{author}{\bibinfo{person}{George Christodoulou},
  \bibinfo{person}{Annam{\'a}ria Kov{\'a}cs}, {and} \bibinfo{person}{Michael
  Schapira}.} \bibinfo{year}{2016}\natexlab{}.
\newblock \showarticletitle{Bayesian combinatorial auctions}.
\newblock \bibinfo{journal}{\emph{Journal of the ACM (JACM)}}
  \bibinfo{volume}{63}, \bibinfo{number}{2} (\bibinfo{year}{2016}),
  \bibinfo{pages}{11}.
\newblock


\bibitem[\protect\citeauthoryear{Dobzinski, Feldman, Talgam-Cohen, and
  Weinstein}{Dobzinski et~al\mbox{.}}{2015}]%
        {dobzinski2015welfare}
\bibfield{author}{\bibinfo{person}{Shahar Dobzinski}, \bibinfo{person}{Michal
  Feldman}, \bibinfo{person}{Inbal Talgam-Cohen}, {and} \bibinfo{person}{Omri
  Weinstein}.} \bibinfo{year}{2015}\natexlab{}.
\newblock \showarticletitle{Welfare and revenue guarantees for competitive
  bundling equilibrium}. In \bibinfo{booktitle}{\emph{International Conference
  on Web and Internet Economics}}. Springer, \bibinfo{pages}{300--313}.
\newblock


\bibitem[\protect\citeauthoryear{Dobzinski, Nisan, and Schapira}{Dobzinski
  et~al\mbox{.}}{2005}]%
        {dobzinski2005approximation}
\bibfield{author}{\bibinfo{person}{Shahar Dobzinski}, \bibinfo{person}{Noam
  Nisan}, {and} \bibinfo{person}{Michael Schapira}.}
  \bibinfo{year}{2005}\natexlab{}.
\newblock \showarticletitle{Approximation algorithms for combinatorial auctions
  with complement-free bidders}. In \bibinfo{booktitle}{\emph{STOC}}. ACM,
  \bibinfo{pages}{610--618}.
\newblock


\bibitem[\protect\citeauthoryear{Dobzinski and Schapira}{Dobzinski and
  Schapira}{2006}]%
        {dobzinski2006improved}
\bibfield{author}{\bibinfo{person}{Shahar Dobzinski} {and}
  \bibinfo{person}{Michael Schapira}.} \bibinfo{year}{2006}\natexlab{}.
\newblock \showarticletitle{An improved approximation algorithm for
  combinatorial auctions with submodular bidders}. In
  \bibinfo{booktitle}{\emph{Proceedings of the seventeenth annual ACM-SIAM
  symposium on Discrete algorithm}}. Society for Industrial and Applied
  Mathematics, \bibinfo{pages}{1064--1073}.
\newblock


\bibitem[\protect\citeauthoryear{Feige}{Feige}{2009}]%
        {feige2009maximizing}
\bibfield{author}{\bibinfo{person}{Uriel Feige}.}
  \bibinfo{year}{2009}\natexlab{}.
\newblock \showarticletitle{On maximizing welfare when utility functions are
  subadditive}.
\newblock \bibinfo{journal}{\emph{SIAM J. Comput.}} \bibinfo{volume}{39},
  \bibinfo{number}{1} (\bibinfo{year}{2009}), \bibinfo{pages}{122--142}.
\newblock


\bibitem[\protect\citeauthoryear{Feige and Izsak}{Feige and Izsak}{2013}]%
        {feige2013welfare}
\bibfield{author}{\bibinfo{person}{Uriel Feige} {and} \bibinfo{person}{Rani
  Izsak}.} \bibinfo{year}{2013}\natexlab{}.
\newblock \showarticletitle{Welfare maximization and the supermodular degree}.
  In \bibinfo{booktitle}{\emph{Proceedings of the 4th conference on Innovations
  in Theoretical Computer Science}}. ACM, \bibinfo{pages}{247--256}.
\newblock


\bibitem[\protect\citeauthoryear{Feige and Vondrak}{Feige and Vondrak}{2006}]%
        {feige2006approximation}
\bibfield{author}{\bibinfo{person}{Uriel Feige} {and} \bibinfo{person}{Jan
  Vondrak}.} \bibinfo{year}{2006}\natexlab{}.
\newblock \showarticletitle{Approximation algorithms for allocation problems:
  Improving the factor of 1-1/e}. In \bibinfo{booktitle}{\emph{null}}. IEEE,
  \bibinfo{pages}{667--676}.
\newblock


\bibitem[\protect\citeauthoryear{Feldman, Gravin, and Lucier}{Feldman
  et~al\mbox{.}}{2015}]%
        {feldman2015welfare}
\bibfield{author}{\bibinfo{person}{Michal Feldman}, \bibinfo{person}{Nick
  Gravin}, {and} \bibinfo{person}{Brendan Lucier}.}
  \bibinfo{year}{2015}\natexlab{}.
\newblock \showarticletitle{On welfare approximation and stable pricing}.
\newblock \bibinfo{journal}{\emph{arXiv preprint arXiv:1511.02399}}
  (\bibinfo{year}{2015}).
\newblock


\bibitem[\protect\citeauthoryear{Feldman, Gravin, and Lucier}{Feldman
  et~al\mbox{.}}{2016}]%
        {feldman2016combinatorial}
\bibfield{author}{\bibinfo{person}{Michal Feldman}, \bibinfo{person}{Nick
  Gravin}, {and} \bibinfo{person}{Brendan Lucier}.}
  \bibinfo{year}{2016}\natexlab{}.
\newblock \showarticletitle{Combinatorial walrasian equilibrium}.
\newblock \bibinfo{journal}{\emph{SIAM J. Comput.}} \bibinfo{volume}{45},
  \bibinfo{number}{1} (\bibinfo{year}{2016}), \bibinfo{pages}{29--48}.
\newblock


\bibitem[\protect\citeauthoryear{Feldman and Lucier}{Feldman and
  Lucier}{2014}]%
        {feldman2014clearing}
\bibfield{author}{\bibinfo{person}{Michal Feldman} {and}
  \bibinfo{person}{Brendan Lucier}.} \bibinfo{year}{2014}\natexlab{}.
\newblock \showarticletitle{Clearing markets via bundles}. In
  \bibinfo{booktitle}{\emph{International Symposium on Algorithmic Game
  Theory}}. Springer, \bibinfo{pages}{158--169}.
\newblock


\bibitem[\protect\citeauthoryear{Fu, Kleinberg, and Lavi}{Fu
  et~al\mbox{.}}{2012}]%
        {fu2012conditional}
\bibfield{author}{\bibinfo{person}{Hu Fu}, \bibinfo{person}{Robert Kleinberg},
  {and} \bibinfo{person}{Ron Lavi}.} \bibinfo{year}{2012}\natexlab{}.
\newblock \showarticletitle{Conditional equilibrium outcomes via ascending
  price processes with applications to combinatorial auctions with item
  bidding}. In \bibinfo{booktitle}{\emph{Proceedings of the 13th ACM Conference
  on Electronic Commerce}}. ACM, \bibinfo{pages}{586--586}.
\newblock


\bibitem[\protect\citeauthoryear{Fu, Kleinberg, Lavi, and Smorodinsky}{Fu
  et~al\mbox{.}}{2017a}]%
        {fu2017stability}
\bibfield{author}{\bibinfo{person}{Hu Fu}, \bibinfo{person}{Robert Kleinberg},
  \bibinfo{person}{Ron Lavi}, {and} \bibinfo{person}{Rann Smorodinsky}.}
  \bibinfo{year}{2017}\natexlab{a}.
\newblock \showarticletitle{Stability and auctions in labor markets with job
  security}.
\newblock \bibinfo{journal}{\emph{Economics Letters}}  \bibinfo{volume}{154}
  (\bibinfo{year}{2017}), \bibinfo{pages}{55--58}.
\newblock


\bibitem[\protect\citeauthoryear{Fu, Kleinberg, Lavi, and Smorodinsky}{Fu
  et~al\mbox{.}}{2017b}]%
        {fu2017job}
\bibfield{author}{\bibinfo{person}{Hu Fu}, \bibinfo{person}{Robert~D
  Kleinberg}, \bibinfo{person}{Ron Lavi}, {and} \bibinfo{person}{Rann
  Smorodinsky}.} \bibinfo{year}{2017}\natexlab{b}.
\newblock \showarticletitle{Job security, stability, and production
  efficiency}.
\newblock \bibinfo{journal}{\emph{Theoretical Economics}} \bibinfo{volume}{12},
  \bibinfo{number}{1} (\bibinfo{year}{2017}), \bibinfo{pages}{1--24}.
\newblock


\bibitem[\protect\citeauthoryear{Gul and Stacchetti}{Gul and
  Stacchetti}{1999}]%
        {gul1999walrasian}
\bibfield{author}{\bibinfo{person}{Faruk Gul} {and} \bibinfo{person}{Ennio
  Stacchetti}.} \bibinfo{year}{1999}\natexlab{}.
\newblock \showarticletitle{Walrasian Equilibrium with Gross Substitutes}.
\newblock \bibinfo{journal}{\emph{Journal of Economic Theory}}
  \bibinfo{volume}{87}, \bibinfo{number}{1} (\bibinfo{year}{1999}),
  \bibinfo{pages}{95--124}.
\newblock


\bibitem[\protect\citeauthoryear{Kahneman, Knetsch, and Thaler}{Kahneman
  et~al\mbox{.}}{1990}]%
        {kahneman1990experimental}
\bibfield{author}{\bibinfo{person}{Daniel Kahneman}, \bibinfo{person}{Jack~L
  Knetsch}, {and} \bibinfo{person}{Richard~H Thaler}.}
  \bibinfo{year}{1990}\natexlab{}.
\newblock \showarticletitle{Experimental tests of the endowment effect and the
  Coase theorem}.
\newblock \bibinfo{journal}{\emph{Journal of political Economy}}
  \bibinfo{volume}{98}, \bibinfo{number}{6} (\bibinfo{year}{1990}),
  \bibinfo{pages}{1325--1348}.
\newblock


\bibitem[\protect\citeauthoryear{Kahneman, Knetsch, and Thaler}{Kahneman
  et~al\mbox{.}}{1991}]%
        {kahneman1991anomalies}
\bibfield{author}{\bibinfo{person}{Daniel Kahneman}, \bibinfo{person}{Jack~L
  Knetsch}, {and} \bibinfo{person}{Richard~H Thaler}.}
  \bibinfo{year}{1991}\natexlab{}.
\newblock \showarticletitle{Anomalies: The endowment effect, loss aversion, and
  status quo bias}.
\newblock \bibinfo{journal}{\emph{Journal of Economic perspectives}}
  \bibinfo{volume}{5}, \bibinfo{number}{1} (\bibinfo{year}{1991}),
  \bibinfo{pages}{193--206}.
\newblock


\bibitem[\protect\citeauthoryear{Kelso~Jr and Crawford}{Kelso~Jr and
  Crawford}{1982}]%
        {kelso1982job}
\bibfield{author}{\bibinfo{person}{Alexander~S Kelso~Jr} {and}
  \bibinfo{person}{Vincent~P Crawford}.} \bibinfo{year}{1982}\natexlab{}.
\newblock \showarticletitle{Job matching, coalition formation, and gross
  substitutes}.
\newblock \bibinfo{journal}{\emph{Econometrica: Journal of the Econometric
  Society}} (\bibinfo{year}{1982}), \bibinfo{pages}{1483--1504}.
\newblock


\bibitem[\protect\citeauthoryear{Knetsch}{Knetsch}{1989}]%
        {knetsch1989endowment}
\bibfield{author}{\bibinfo{person}{Jack~L Knetsch}.}
  \bibinfo{year}{1989}\natexlab{}.
\newblock \showarticletitle{The endowment effect and evidence of nonreversible
  indifference curves}.
\newblock \bibinfo{journal}{\emph{The american Economic review}}
  \bibinfo{volume}{79}, \bibinfo{number}{5} (\bibinfo{year}{1989}),
  \bibinfo{pages}{1277--1284}.
\newblock


\bibitem[\protect\citeauthoryear{Lahaie and Parkes}{Lahaie and Parkes}{2009}]%
        {lahaie2009fair}
\bibfield{author}{\bibinfo{person}{S{\'e}bastien Lahaie} {and}
  \bibinfo{person}{David~C Parkes}.} \bibinfo{year}{2009}\natexlab{}.
\newblock \showarticletitle{Fair package assignment}.
\newblock  (\bibinfo{year}{2009}).
\newblock


\bibitem[\protect\citeauthoryear{Lehmann, Lehmann, and Nisan}{Lehmann
  et~al\mbox{.}}{2006}]%
        {lehmann2006combinatorial}
\bibfield{author}{\bibinfo{person}{Benny Lehmann}, \bibinfo{person}{Daniel
  Lehmann}, {and} \bibinfo{person}{Noam Nisan}.}
  \bibinfo{year}{2006}\natexlab{}.
\newblock \showarticletitle{Combinatorial auctions with decreasing marginal
  utilities}.
\newblock \bibinfo{journal}{\emph{Games and Economic Behavior}}
  \bibinfo{volume}{55}, \bibinfo{number}{2} (\bibinfo{year}{2006}),
  \bibinfo{pages}{270--296}.
\newblock


\bibitem[\protect\citeauthoryear{List}{List}{2003}]%
        {list2003does}
\bibfield{author}{\bibinfo{person}{John~A List}.}
  \bibinfo{year}{2003}\natexlab{}.
\newblock \showarticletitle{Does market experience eliminate market anomalies?}
\newblock \bibinfo{journal}{\emph{The Quarterly Journal of Economics}}
  \bibinfo{volume}{118}, \bibinfo{number}{1} (\bibinfo{year}{2003}),
  \bibinfo{pages}{41--71}.
\newblock


\bibitem[\protect\citeauthoryear{List}{List}{2011}]%
        {list2011does}
\bibfield{author}{\bibinfo{person}{John~A List}.}
  \bibinfo{year}{2011}\natexlab{}.
\newblock \showarticletitle{Does market experience eliminate market anomalies?
  The case of exogenous market experience}.
\newblock \bibinfo{journal}{\emph{American Economic Review}}
  \bibinfo{volume}{101}, \bibinfo{number}{3} (\bibinfo{year}{2011}),
  \bibinfo{pages}{313--17}.
\newblock


\bibitem[\protect\citeauthoryear{Nisan and Segal}{Nisan and Segal}{2006}]%
        {nisan2006communication}
\bibfield{author}{\bibinfo{person}{Noam Nisan} {and} \bibinfo{person}{Ilya
  Segal}.} \bibinfo{year}{2006}\natexlab{}.
\newblock \showarticletitle{The communication requirements of efficient
  allocations and supporting prices}.
\newblock \bibinfo{journal}{\emph{Journal of Economic Theory}}
  \bibinfo{volume}{129}, \bibinfo{number}{1} (\bibinfo{year}{2006}),
  \bibinfo{pages}{192--224}.
\newblock


\bibitem[\protect\citeauthoryear{Parkes and Ungar}{Parkes and Ungar}{2000}]%
        {parkes2000iterative}
\bibfield{author}{\bibinfo{person}{David~C Parkes} {and}
  \bibinfo{person}{Lyle~H Ungar}.} \bibinfo{year}{2000}\natexlab{}.
\newblock \showarticletitle{Iterative combinatorial auctions: Theory and
  practice}.
\newblock  (\bibinfo{year}{2000}).
\newblock


\bibitem[\protect\citeauthoryear{Sun and Yang}{Sun and Yang}{2014}]%
        {sun2014efficient}
\bibfield{author}{\bibinfo{person}{Ning Sun} {and} \bibinfo{person}{Zaifu
  Yang}.} \bibinfo{year}{2014}\natexlab{}.
\newblock \showarticletitle{An efficient and incentive compatible dynamic
  auction for multiple complements}.
\newblock \bibinfo{journal}{\emph{Journal of Political Economy}}
  \bibinfo{volume}{122}, \bibinfo{number}{2} (\bibinfo{year}{2014}),
  \bibinfo{pages}{422--466}.
\newblock


\bibitem[\protect\citeauthoryear{Thaler}{Thaler}{1980}]%
        {thaler1980toward}
\bibfield{author}{\bibinfo{person}{Richard Thaler}.}
  \bibinfo{year}{1980}\natexlab{}.
\newblock \showarticletitle{Toward a positive theory of consumer choice}.
\newblock \bibinfo{journal}{\emph{Journal of Economic Behavior \&
  Organization}} \bibinfo{volume}{1}, \bibinfo{number}{1}
  (\bibinfo{year}{1980}), \bibinfo{pages}{39--60}.
\newblock


\bibitem[\protect\citeauthoryear{Tversky and Kahneman}{Tversky and
  Kahneman}{1979}]%
        {amos1979prospect}
\bibfield{author}{\bibinfo{person}{Amos Tversky} {and} \bibinfo{person}{Daniel
  Kahneman}.} \bibinfo{year}{1979}\natexlab{}.
\newblock \showarticletitle{Prospect Theory: An Analysis of Decision under
  Risk.}
\newblock \bibinfo{journal}{\emph{Econometrica}} (\bibinfo{year}{1979}).
\newblock


\bibitem[\protect\citeauthoryear{Walras}{Walras}{1874}]%
        {walras1874}
\bibfield{author}{\bibinfo{person}{L. Walras}.}
  \bibinfo{year}{1874}\natexlab{}.
\newblock \bibinfo{booktitle}{\emph{{\'E}l{\'e}ments d'{\'e}conomie politique
  pure; ou, Th{\'e}orie de la richesse sociale}}.
\newblock Number v. 1-2. \bibinfo{publisher}{Corbaz}.
\newblock
\showLCCN{05016059}


\end{thebibliography}

\appendix
\section*{Appendix}
\section{Missing Proofs}
\label{app:missingProofs}
\begin{proof}[Proof of Proposition~\ref{prop:EndowApxGuarantee}]
	Since $(S, p)$ is an $\Endow$-endowment equilibrium, by Theorem~\ref{thm:integralityGap1},
	for any optimal fractional solution $\{x_{i, T}\}$ of the LP w.r.t. the valuations $(v_1^{S_1, \Endow_1}, \ldots, v_n^{S_n, \Endow_n})$
	it holds that
	$$
	\sum_{i \in [n]} v_i(S_i)+g_i^{S_i}(S_i)
	=
	\sum_{i \in [n] }v_i^{S_i, \Endow_i}(S_i) \geq \sum_{i \in [n]}\sum_{T \subseteq M} x_{i, T} \cdot v_i^{S_i, \Endow_i}(T)
	\geq
	\sum_{i \in [n]}\sum_{T \subseteq M} x_{i, T} \cdot v_i(T) = OPT,
	$$
where the first inequality is by optimality and the second is by non-negativity of the gain functions.
The proof follows by multiplying both sides by $\sum_{i \in [n]}v_i(S_i)$ and rearranging.
\end{proof}

\begin{proof}[Proof of Proposition~\ref{prop:ALdominatesI}]
	Fix $X \subseteq M$,
	need to show that
	for all $Z \subseteq X$,
	$g_I^X( Z | X \setminus Z) \leq g_{AL}^X(Z | X \setminus Z)$.	
	\begin{align*}
	g_I^X( Z | X \setminus Z) = g_I^{X}(X) - g_I^{X}(X \setminus Z) = v(X) - v(X \setminus Z) \leq v(Z)
	\end{align*}
	Where the last inequality follows by subadditivity.
	On the other hand
	\begin{align*}
	g_{AL}^X(Z | X \setminus Z)
	=
	g_{AL}^X(X) -g_{AL}^X(X \setminus Z)
	=
	v(X) - (v(X) - v(X \setminus (X \setminus Z))
	=
	v(Z),
	\end{align*}
	as required.
\end{proof}

\begin{proof}[Proof of Theorem~\ref{thm:strengthKeepEndEq}]
The proof is obtained by applying the following lemma iteratively for each consumer.

\begin{lemma} \label{lem:EndowDominance}
	For an instance $(v_1,\ldots, v_n)$,
	and an endowment environment $\Endow = (\Endow_{1}, \ldots, \Endow_{n})$,
	let $(S, p)$ be an $\Endow$-endowment equilibrium.
	For any consumer $i$, and endowment effect $\hat{\Endow}_i$, if $\Endow_i \prec_{S_i} \hat{\Endow}_i$, then $(S, p)$ is also an $(\Endow_1, \ldots, \Endow_{i-1}, \hat{\Endow}_i, \Endow_{i+1}, \ldots, \Endow_n)$-endowment equilibrium.
\end{lemma}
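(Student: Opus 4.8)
The plan is essentially pure bookkeeping, exploiting that the swap touches only one consumer. Changing consumer $i$'s endowment effect from $\Endow_i$ to $\hat\Endow_i$ leaves the market-clearance condition untouched, and it leaves the utility-maximization condition of every consumer $j \neq i$ untouched (their endowed valuation $v_j^{S_j,\Endow_j}$ is unchanged). So the only thing to verify is that consumer $i$ remains utility-maximizing under $v_i^{S_i,\hat\Endow_i}$, i.e.\ that for every $T \subseteq M$,
$$
v_i^{S_i,\hat\Endow_i}(S_i) - p(S_i) \;\geq\; v_i^{S_i,\hat\Endow_i}(T) - p(T).
$$

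First I would fix $T \subseteq M$ and expand both sides using $v_i^{X,\Endow}(Y) = v_i(Y) + g_i^{X}(X\cap Y)$ with $X = S_i$. Writing $g$ for the gain function of $\Endow_i$ and $\hat g$ for that of $\hat\Endow_i$, the inequality we already have (since $(S,p)$ is an $\Endow$-endowment equilibrium) is
$$
v_i(S_i) + g^{S_i}(S_i) - p(S_i) \;\geq\; v_i(T) + g^{S_i}(S_i \cap T) - p(T),
$$
and the inequality we want is the same statement with $\hat g$ replacing $g$. Subtracting and rearranging, it suffices to establish
$$
\hat g^{S_i}(S_i) - \hat g^{S_i}(S_i \cap T) \;\geq\; g^{S_i}(S_i) - g^{S_i}(S_i \cap T),
$$
because adding this display to the known inequality yields exactly the target inequality. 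Note that here the $v_i(T)$ and $p(T)$ contributions cancel cleanly precisely because the only part of $v_i^{S_i,\cdot}(Y)$ affected by the change is the term $g^{S_i}(S_i \cap Y)$.

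Next I would set $Z := S_i \setminus T$, so that $S_i \cap T = S_i \setminus Z$ and $Z \subseteq S_i$. Since $g^{S_i}(Z \mid S_i \setminus Z) = g^{S_i}(S_i) - g^{S_i}(S_i \setminus Z)$, the display above reads precisely $\hat g^{S_i}(Z \mid S_i \setminus Z) \geq g^{S_i}(Z \mid S_i \setminus Z)$, which is exactly the hypothesis $\Endow_i \prec_{S_i} \hat\Endow_i$ (Definition~\ref{def:dominanceRelationS}) instantiated at the set $Z$. This completes the verification for consumer $i$, hence the lemma. Theorem~\ref{thm:strengthKeepEndEq} then follows by invoking the lemma once per consumer in turn: at the $i$-th step the current environment still agrees with $\Endow$ on coordinate $i$, and $\Endow_i \prec_{S_i} \hat\Endow_i$ holds by assumption, so each swap is valid, and after $n$ swaps we arrive at $\hat\Endow$.

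I do not expect any genuine obstacle: the two points requiring care are (i) observing that $g^{S_i}(S_i \cap Y)$ is the only summand the change touches, so the non-gain terms cancel, and (ii) matching the expression $g^{S_i}(S_i) - g^{S_i}(S_i \cap T)$ with the marginal-contribution notation $g^{S_i}(Z \mid S_i \setminus Z)$ from Definition~\ref{def:dominanceRelationS} via $Z = S_i \setminus T$, so that the hypothesis applies verbatim.
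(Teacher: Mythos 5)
Your proof is correct and follows essentially the same route as the paper's: expand both sides of the utility-maximization condition, cancel the $v_i$ and price terms, and observe that the remaining comparison $\hat g^{S_i}(S_i)-\hat g^{S_i}(S_i\cap T)\geq g^{S_i}(S_i)-g^{S_i}(S_i\cap T)$ is exactly the dominance hypothesis applied at $Z=S_i\setminus T$, written in the marginal notation $g^{S_i}(Z\mid S_i\setminus Z)$. The paper's proof performs precisely this substitution (with $Y$ in place of your $T$), and your remarks that market clearance and the other consumers' conditions are untouched match the paper's iterative use of the lemma in Theorem~\ref{thm:strengthKeepEndEq}.
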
	
\begin{proof}
	Let $g_i^{S_i} \in \Endow_i$, and
	let $\hat{g}_i^{S_i} \in \hat{\Endow}_i$.
	The pair $(S, p)$ is an $\Endow$-endowment equilibrium, therefore for every $Y \subseteq M$ it holds that
	\begin{align*}
	&v_i(S_i) + g_i^{S_i}(S_i) - p(S_i)
	\geq
	v_i(Y) + g_i^{S_i}(S_i \cap Y) - p(Y).
	\end{align*}
	Rearranging,
	\begin{align*}
	& v_i(S_i) + g_i^{S_i}(S_i \setminus Y  \mid  S_i \cap Y) - p(S_i)
	\geq v_i(Y) - p(Y).
	\end{align*}
	Since $\Endow_i \prec_{S_i} \hat{\Endow}_i$, by Definition~\ref{def:dominanceRelationS}, the last inequality still holds when $g_i^{S_i}$ is replaced by $\hat{g}_i^{S_i}$. I.e.,
	\begin{align*}
	& v_i(S_i) + \hat{g}_i^{S_i}(S_i \setminus Y  \mid  S_i \cap Y) - p(S_i)
	\geq v_i(Y) - p(Y).
	\end{align*}
	Rearranging, we conclude that:
	\begin{align*}
	& v_i(S_i) + \hat{g}_i^{S_i}(S_i) - p(S_i)
	\geq v_i(Y) + \hat{g}_i^{S_i}(S_i \cap Y) - p(Y),
	\end{align*}
	i.e., that $v_i^{S_i, \hat{\Endow}_i}(S_i) - p(S_i) 	\geq v_i^{S_i, \hat{\Endow}_i}(Y) - p(Y).$
	It follows that $S_i$ maximizes consumer $i$'s utility, as desired. Individual rationality follows by the fact that endowed valuations are normalized.
%
\end{proof}
\end{proof}

\section{Missing Lemmas and Propositions}
\label{app:missingLemmas}

\begin{lemma} \label{lem:eq:externalMarginalSame}
	Equation~(\ref{eq:externalMarginalSame}) holds if and only if $v^X(Y) = v(Y) + g^X(X \cap Y)$ for some $g^X : 2^X \rightarrow \reals$.
\end{lemma}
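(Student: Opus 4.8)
The plan is to simply unwind the marginal‑contribution notation on both sides of Equation~(\ref{eq:externalMarginalSame}) and observe that the separability condition collapses to a single functional identity relating $v^X(Y)$ and $v^X(X\cap Y)$; from that identity both directions of the equivalence are immediate.

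First I would use the set identity $(Y\setminus X)\cup(X\cap Y)=Y$ together with the disjointness of $Y\setminus X$ and $X\cap Y$ to rewrite both sides. On the left, $v^X(Y\setminus X\mid X\cap Y)=v^X((Y\setminus X)\cup(X\cap Y))-v^X(X\cap Y)=v^X(Y)-v^X(X\cap Y)$, and on the right, analogously, $v(Y\setminus X\mid X\cap Y)=v(Y)-v(X\cap Y)$. Hence Equation~(\ref{eq:externalMarginalSame}) holds if and only if $v^X(Y)-v^X(X\cap Y)=v(Y)-v(X\cap Y)$ for every $Y\subseteq M$, equivalently $v^X(Y)=v(Y)+\bigl(v^X(X\cap Y)-v(X\cap Y)\bigr)$ for every $Y\subseteq M$.

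For the ``only if'' direction I would then define $g^X:2^X\rightarrow\reals$ by $g^X(Z):=v^X(Z)-v(Z)$ for $Z\subseteq X$ (this is well defined, since $v^X$ is defined on all of $2^M\supseteq 2^X$), and the displayed identity gives exactly $v^X(Y)=v(Y)+g^X(X\cap Y)$. For the ``if'' direction, assuming $v^X(Y)=v(Y)+g^X(X\cap Y)$ for some $g^X:2^X\rightarrow\reals$, I would apply this formula both to $Y$ and to $X\cap Y$—using $X\cap(X\cap Y)=X\cap Y$, so that the second application reads $v^X(X\cap Y)=v(X\cap Y)+g^X(X\cap Y)$—and subtract, obtaining $v^X(Y)-v^X(X\cap Y)=v(Y)-v(X\cap Y)$, which is the rearranged form of Equation~(\ref{eq:externalMarginalSame}).

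There is essentially no obstacle here: the content is purely a definitional manipulation. The only points to state carefully are the elementary set identities $(Y\setminus X)\cup(X\cap Y)=Y$ and $X\cap(X\cap Y)=X\cap Y$, and the observation that $X\cap Y\subseteq X$ always lies in the domain $2^X$ of $g^X$, so the formula $v^X(Y)=v(Y)+g^X(X\cap Y)$ is meaningful for every $Y\subseteq M$.
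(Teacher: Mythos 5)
Your proof is correct and follows essentially the same route as the paper: unwind the marginal-contribution notation to see that Equation~(\ref{eq:externalMarginalSame}) is equivalent to $v^X(Y)-v(Y)=v^X(X\cap Y)-v(X\cap Y)$ for all $Y$, then take $g^X(Z)=v^X(Z)-v(Z)$ on $2^X$. You merely spell out the converse direction (applying the formula to $X\cap Y$ and subtracting) a bit more explicitly than the paper does, which is fine.
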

\begin{proof}
By definition of marginal valuation it holds that
	\begin{align*}
		v^X(Y \setminus X| X \cap Y) = v(Y \setminus X | X \cap Y)
    \end{align*}
    if and only if
    \begin{align*}
        v^X(Y) - v(Y) = v^X(Y \cap X) - v(Y \cap X).
    \end{align*}
    Letting $g^X(Y \cap X) \equiv v^X(Y \cap X) - v(Y \cap X)$ completes the proof.
\end{proof}

\begin{lemma} \label{lem:weakMonotonicity}
	Any endowment effect $\Endow$ satisfies the loss aversion inequality (Inequality~(\ref{eq:lossBeatsGain}))
	if and only if
	every $g^X \in \Endow$ is weakly monotone, i.e., $g^X(Z) \leq g^X(X)$ for all  $Z \subseteq X$.
\end{lemma}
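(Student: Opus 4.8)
The plan is to rewrite the loss aversion inequality~\eqref{eq:lossBeatsGain} purely in terms of the gain functions and watch it collapse to a one-line monotonicity statement. Throughout I use the separability-implied form $v^{W}(U) = v(U) + g^{W}(W\cap U)$, which holds in our framework (cf.\ Lemma~\ref{lem:eq:externalMarginalSame}); here $\Endow = \{g^X\}_{X\subseteq M}$ is just a collection of gain functions, and we want to show \eqref{eq:lossBeatsGain} $\iff$ ($g^X(Z)\le g^X(X)$ for all $Z\subseteq X\subseteq M$).

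First I would expand the left-hand side of~\eqref{eq:lossBeatsGain} using $(X\cup Y)\cap(X\cup Y)=X\cup Y$ and $(X\cup Y)\cap Y=Y$:
$$v^{X\cup Y}(X\cup Y) - v^{X\cup Y}(Y) = \bigl(v(X\cup Y)+g^{X\cup Y}(X\cup Y)\bigr) - \bigl(v(Y)+g^{X\cup Y}(Y)\bigr).$$
Next I would expand the right-hand side using $Y\cap(X\cup Y)=Y$ and $Y\cap Y=Y$, so the $g^{Y}(Y)$ terms cancel:
$$v^{Y}(X\cup Y) - v^{Y}(Y) = \bigl(v(X\cup Y)+g^{Y}(Y)\bigr) - \bigl(v(Y)+g^{Y}(Y)\bigr) = v(X\cup Y)-v(Y).$$
Subtracting, the common terms $v(X\cup Y)-v(Y)$ cancel on both sides, and~\eqref{eq:lossBeatsGain} is seen to be equivalent to
$$g^{X\cup Y}(X\cup Y) \ge g^{X\cup Y}(Y) \qquad \text{for all } X,Y\subseteq M.$$

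Finally I would identify this family of inequalities with weak monotonicity. For the "only if" direction, given an endowed set $X$ and any $Z\subseteq X$, apply the displayed inequality with $X$ in the role of the first set and $Z$ in the role of $Y$; since $Z\subseteq X$ we have $X\cup Z=X$, giving $g^{X}(X)\ge g^{X}(Z)$. For the "if" direction, given weak monotonicity, take arbitrary $X,Y\subseteq M$; since $Y\subseteq X\cup Y$, monotonicity of $g^{X\cup Y}$ yields $g^{X\cup Y}(Y)\le g^{X\cup Y}(X\cup Y)$, which is exactly the needed inequality, hence~\eqref{eq:lossBeatsGain} holds. There is no genuine obstacle here; the only point requiring care is the set-intersection bookkeeping when substituting $v^{W}(U)=v(U)+g^{W}(W\cap U)$, together with the observation that choosing the two sets to be $X$ and a subset $Z\subseteq X$ makes their union collapse back to the fixed endowed set $X$.
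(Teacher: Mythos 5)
Your proof is correct and follows essentially the same route as the paper: expand both sides of the loss aversion inequality via $v^{W}(U)=v(U)+g^{W}(W\cap U)$, cancel the common $v(X\cup Y)-v(Y)$ and $g^{Y}(Y)$ terms, and observe that what remains is exactly $g^{X\cup Y}(X\cup Y)\geq g^{X\cup Y}(Y)$ for all $X,Y$, which is weak monotonicity. Your explicit final step (taking $Y=Z\subseteq X$ so that $X\cup Z=X$, and conversely using $Y\subseteq X\cup Y$) just spells out what the paper leaves as a remark.
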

\begin{proof}
	For any $X, Y \subseteq M$ it holds that
	\begin{align*}
	v^{X \cup Y}(X \cup Y) - v^{X \cup Y}(Y)
	&\geq
	v^{Y}(X \cup Y) - v^Y(Y)  & & \iff \\
	v(X \cup Y) + g^{X \cup Y}(X \cup Y) - (v(Y) + g^{X \cup Y}(Y))
	& \geq
	v(X \cup Y) + g^{Y}(Y) - (v(Y) + g^Y(Y)) & & \iff \\
	g^{X \cup Y}(X \cup Y) -  g^{X \cup Y}(Y)
	& \geq
	g^{Y}(Y) -  g^Y(Y) = 0
	\end{align*}
	Note that the last inequality is equivalent to weak monotonicity of  $g^{X \cup Y}$.
\end{proof}

\begin{proposition} \label{prop:EEalways}
	Let $\Endow^{PROP} = \{ g^X(Z) = \cardinality{Z}  \cdot v(X)  : X \subseteq M \}$.
	For any instance $(v_1, \ldots, v_n)$, there exists an $\Endow^{PROP}$-endowment equilibrium.
\end{proposition}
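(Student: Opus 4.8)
The plan is to derive this as an immediate consequence of Proposition~\ref{prop:EEalwaysopt}. Since there are only finitely many allocations, there exists an allocation $S=(S_1,\dots,S_n)$ maximizing the social welfare $\sum_i v_i(S_i)$ with respect to the original valuations; we may assume without loss of generality that all items are allocated in $S$. Taking the item prices $p_j = v_{i(j)}(S_{i(j)})$, where $i(j)$ is the consumer with $j\in S_{i(j)}$, Proposition~\ref{prop:EEalwaysopt} states precisely that $(S,p)$ is an $\Endow^{PROP}$-endowment equilibrium, so existence follows. Thus the bulk of the work is already done, and the proof is essentially a one-line invocation.

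For completeness I would also indicate the self-contained argument, which is the content of the proof of Proposition~\ref{prop:EEalwaysopt}. Market clearance is immediate from the choice of $S$. For utility maximization, fix a consumer $i$ and a deviation set $Y\subseteq M$. The key simplification is that $g^{S_i}(S_i\cap Y)=\cardinality{S_i\cap Y}\cdot v_i(S_i)=p(S_i\cap Y)$, so these terms cancel in the endowed utility and it suffices to show $v_i(S_i)\geq v_i(Y)-p(Y\setminus S_i)$. By monotonicity of $v_i$ we have $v_i(Y)\leq v_i(S_i)+v_i(Y\setminus S_i\mid S_i)$, so it is enough to prove $v_i(Y\setminus S_i\mid S_i)\leq p(Y\setminus S_i)$.

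This last inequality is where optimality of $S$ enters. Writing $p(Y\setminus S_i)=\sum_{i'\neq i}\cardinality{Y\cap S_{i'}}\cdot v_{i'}(S_{i'})$, I would note that whenever $Y\cap S_{i'}\neq\emptyset$ we have $\cardinality{Y\cap S_{i'}}\geq 1$ and hence $v_{i'}(S_{i'})\leq\cardinality{Y\cap S_{i'}}\cdot v_{i'}(S_{i'})$, while whenever $Y\cap S_{i'}=\emptyset$ the marginal $v_{i'}(S_{i'}\cap Y\mid S_{i'}\setminus Y)$ is zero; hence $p(Y\setminus S_i)\geq\sum_{i'\neq i}v_{i'}(S_{i'}\cap Y\mid S_{i'}\setminus Y)$. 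Finally, optimality of $S$ gives $v_i(Y\setminus S_i\mid S_i)\leq\sum_{i'\neq i}v_{i'}(S_{i'}\cap Y\mid S_{i'}\setminus Y)$, since otherwise reallocating the items of $Y\setminus S_i$ to consumer $i$ would strictly increase the social welfare; chaining the two inequalities yields the claim. There is essentially no serious obstacle here: the only point requiring mild care in the self-contained version is the bookkeeping that bounds the bundle prices of the other consumers' parts of $Y$ by their values $v_{i'}(S_{i'})$, and in particular the observation that consumers $i'$ with $Y\cap S_{i'}=\emptyset$ can safely be dropped from the sum.
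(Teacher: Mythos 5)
Your proof is correct, but it takes a different route from the paper's own proof of Proposition~\ref{prop:EEalways}. You obtain existence as a corollary of Proposition~\ref{prop:EEalwaysopt}: take a welfare-maximizing allocation (which exists since there are finitely many allocations, and can be taken to allocate all items by monotonicity) and price each item at its owner's bundle value. This is non-circular, since Proposition~\ref{prop:EEalwaysopt} is proved independently in Section~\ref{sec:subadditive}, and your recap of that argument (cancellation of $g^{S_i}(S_i\cap Y)$ against $p(S_i\cap Y)$, monotonicity, bounding $p(Y\setminus S_i)$ by the marginals $v_{i'}(S_{i'}\cap Y\mid S_{i'}\setminus Y)$, and invoking optimality of $S$) is accurate, including the case distinction on whether $Y\cap S_{i'}$ is empty. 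The paper's own proof of Proposition~\ref{prop:EEalways} is instead a standalone and much lighter construction: give the grand bundle $M$ to the consumer $i$ maximizing $v_i(M)$ and price every item at $v_i(M)$; then $i$'s endowed utility for any $X$ is exactly $v_i(X)\leq v_i(M)$, while any other consumer $j$ gets utility $v_j(X)-\cardinality{X}\cdot v_i(M)\leq 0$, so stability is immediate without any appeal to an optimal allocation. Your approach buys a stronger conclusion for the same price already paid elsewhere in the paper (the equilibrium supports the socially optimal allocation, not just some allocation), whereas the paper's construction buys simplicity and self-containment: it needs only the maximizer of the grand-bundle value and no welfare-comparison argument at all.
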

\begin{proof}
	Let consumer $i$ be the consumer that maximizes the value of the grand bundle $M$.
	Consider the allocation of
	giving the grand bundle $M$ to the consumer $i$, together with
	price $v_{i}(M)$ for each item.	
	
	Let us see that this pair of allocation and prices is an $\Endow^{PROP}$-endowment equilibrium.
	
	The utility of consumer $i$ is $g_i^{M}(M) +v_{i}(M) - m \cdot v_i(M) = v_i(M)$.
	Moreover, for any set $X \subsetneq M$, the utility of consumer $i$ is
	$$
	v_{i}^{M}(X) - \cardinality{X} \cdot v_i(M)
	=
	v_{i}(X)
	\leq v_{i}(M),
	$$
	therefore, consumer $i$ does not wish to deviate.
	For any other consumer $j$,
	the utility from $X \subseteq M$ is
	$$
	v^{\emptyset}_j(X) - \cardinality{X} \cdot v_i(M) =
	v_j(X) - \cardinality{X} \cdot v_i(M)
	\leq 0.
	$$
\end{proof}

\end{document}